\documentclass[12pt,a4paper,notitlepage, reqno]{amsart}
\usepackage[utf8]{inputenc}
\usepackage[english]{babel}

\usepackage{bm,bbm}
\usepackage{amsbsy,amsmath,amsthm,amssymb,amsfonts}%
\usepackage{graphicx}
\usepackage{subfigure}
\usepackage{verbatim}
\usepackage{accents}
\usepackage{stmaryrd}

\newcommand{\ddiamond}{\spadesuit}

\usepackage[	pdfauthor={},% 
			pdftitle={},%
			pdftex]{hyperref}
\hypersetup{	colorlinks,%
			citecolor=black,% 
			filecolor=black,% 
			linkcolor=black,% 
			urlcolor=black% 
			}

\usepackage[foot]{amsaddr}

%%
%% Macros
%%

\newcommand{\Z}{\mathbb{Z}}
\newcommand{\N}{\mathbb{N}}
\newcommand{\C}{\mathbb{C}}
\newcommand{\R}{\mathbb{R}}

\newcommand{\half}{\mathbb{H}}
\newcommand{\disc}{\mathbb{D}}

\renewcommand{\P}{\mathbb{P}}

\newcommand{\E}{\mathbb{E}}

\newcommand{\ind}{\mathbbm{1}}

\newcommand{\F}{\mathcal{F}}

\newcommand{\hausdorff}{\mathcal{H}}
\newcommand{\hdim}{\dim_\hausdorff}

\newcommand{\OO}{\mathcal{O}}
\newcommand{\oo}{o}
\newcommand{\de}{\mathrm{d}}

\newcommand{\sgn}{\,\mathrm{sgn}}
\DeclareMathOperator{\dist}{dist}
\DeclareMathOperator{\diam}{diam}

\newcommand{\eps}{\varepsilon}
\DeclareMathOperator{\imag}{Im}
\DeclareMathOperator{\real}{Re}

\newcommand{\dd}{\mathrel{\mathop:}}

%%
%% Theorem macros
%%

\theoremstyle{plain}
\newtheorem{theorem}{Theorem}[section]
\newtheorem{lemma}[theorem]{Lemma}
\newtheorem{proposition}[theorem]{Proposition}

\newtheorem*{mainthmA}{Theorem~\ref{thm: main theorem} (a)}
\newtheorem*{mainthmB}{Theorem~\ref{thm: main theorem} (b)}

\theoremstyle{definition}
\newtheorem{definition}[theorem]{Definition}
\newtheorem{condition}{Condition}

\theoremstyle{remark}
\newtheorem{remark}[theorem]{Remark}

% Margings
\usepackage[hmargin=3cm,vmargin={2.5cm,2cm},includefoot]{geometry}

% List styles

\newcommand{\enustyo}{%
\renewcommand{\theenumi}{(\arabic{enumi})}
\renewcommand{\labelenumi}{\theenumi}
}

\newcommand{\enustyii}{%
\renewcommand{\theenumi}{(\roman{enumi})}
\renewcommand{\labelenumi}{\theenumi}
}

\newcommand{\enustyiii}{%
\renewcommand{\theenumi}{(\alph{enumi})}
\renewcommand{\labelenumi}{\theenumi}
}

% Triangular and hexagonal lattice

%%% length of a curve

%%% Proba measures

%%% X_simple, X_0

%%% p_{self dual}

%%% components of U \cap A

\DeclareMathOperator{\Proj}{Proj}

\newcommand{\vroot}{v_\text{root}}
\newcommand{\tree}{\mathcal{T}}

\newcommand{\rt}{{\text{root}}}

\newcommand{\lattice}{\mathbb{L}}
\newcommand{\sqlattice}{\lattice^\bullet}
\newcommand{\dsqlattice}{\lattice^\circ}
\newcommand{\dmdlattice}{\lattice^\diamond}
\newcommand{\ddmdlattice}{\lattice^\ddiamond}

\newcommand{\orient}{\rightarrow}

\newcommand{\odmdlattice}{\dmdlattice_\orient}
\newcommand{\oddmdlattice}{\ddmdlattice_\orient}

\newcommand{\LoopEnseble}{\mathcal{L}}
\newcommand{\Loop}{L}
\newcommand{\RandomLoopEnseble}{\Theta}
\newcommand{\RandomLoop}{\theta}

\newcommand{\lpe}{\LoopEnseble}
\newcommand{\lp}{\Loop}
\newcommand{\rndlpe}{\RandomLoopEnseble}
\newcommand{\rndlp}{\RandomLoop}

\newcommand{\Tree}{\mathcal{T}}
\newcommand{\Branch}{T}
\newcommand{\RandomTree}{\Tree}
\newcommand{\RandomBranch}{\Branch}

\newcommand{\ttr}{\Tree}
\newcommand{\bran}{\Branch}
\newcommand{\rndttr}{\RandomTree}
\newcommand{\rndbran}{\RandomBranch}

\newcommand{\ttrtarget}{V_{\text{target}}}

\newcommand{\therndttr}{\RandomTree}

\newcommand{\therndlpe}{\Theta}

\newcommand{\delp}{\de_{\text{loop}}}

\newcommand{\itop}{\text{T}}
\newcommand{\ibot}{\text{B}}
\newcommand{\ileft}{\text{L}}
\newcommand{\iright}{\text{R}}

\newcommand{\pleft}{x_\ileft}
\newcommand{\pright}{x_\iright}

\newcommand{\domain}{\Omega}
\newcommand{\vardomain}{\domain}

\newcommand{\extarc}[2]{#1 \smile #2}

\newcommand{\intarcp}[4]{(#1 \frown #2, #3 \frown #4)}
\newcommand{\extarcp}[4]{(#1 \smile #2, #3 \smile #4)}

% \gamma in \disc or \half
\newcommand{\todisc}[1]{#1_\disc}

\newcommand{\varconfmap}{\phi}

\newcommand{\ffrac}[2]{\left(\frac{#1}{#2}\right)}

% SLE

\newcommand{\sle}[1]{SLE$(#1)${}}
\newcommand{\slevar}[1]{SLE$[#1]${}}
\newcommand{\varsle}[1]{\slevar{#1}}

\newcommand{\kr}{\sle{{\kappa,\rho}}}
\newcommand{\slek}{\sle{\kappa}}

\newcommand{\intpoint}{w_0}

\setlength{\itemsep}{16pt plus4pt minus4pt}

\begin{document}

\title{Conformal invariance of boundary touching loops of FK Ising model}

\author{Antti Kemppainen$^1$}
\address{$^1$Department of Mathematics and Statistics\\
         P.O. Box 68\\
         FIN-00014 University of Helsinki\\
         Finland}
\author{Stanislav Smirnov$^2$}
\address{$^2$Section de math\'ematiques,
         Universit\'e de Gen\`eve,
         2-4, rue du Li\`evre, c.p. 64,
         1211 Gen\`eve 4, Switzerland, \textnormal{and} 
Chebyshev Laboratory, St.~Petersburg State University, Russia, \textnormal{and}
Skolkovo Institute of Science and Technology, Moscow, Russia}
\email{Antti.H.Kemppainen@helsinki.fi, Stanislav.Smirnov@unige.ch}

\begin{abstract}
In this article we show the convergence of a loop ensemble of interfaces in the FK Ising model at criticality,
as the lattice mesh tends to zero, to a unique conformally invariant scaling limit. 
The discrete loop ensemble is described by a canonical tree glued from the interfaces, which then is shown to converge to a tree of
branching SLEs.
The loop ensemble contains unboundedly many loops and hence our result describes the joint
law of infinitely many loops in terms of SLE type processes, and the result gives the full scaling limit
of the FK Ising model in the sense of random geometry of the interfaces.

Some other results in this article are convergence of the exploration process 
of the loop ensemble (or the branch of the exploration tree) to SLE$(\kappa,\kappa-6)$, $\kappa=16/3$,
and convergence of a generalization of this process for $4$ marked points
to SLE$[\kappa,Z]$, $\kappa=16/3$, where $Z$ refers to a partition function. The latter SLE process
is a process that can't be written as a SLE$(\kappa,\rho_1,\rho_2,\ldots)$ process, which are the most commonly considered
generalizations of SLEs.
\end{abstract}

\maketitle

\tableofcontents

% !TEX root = FK_Ising_tree_full_scaling_limit.tex

\section{Introduction}

\subsection{The setup informally}

The \emph{Ising model} is one of the most studied lattice models in statistical physics. The Ising model (and Potts models generalizing it) have
percolation-type representations called \emph{Fortuin--Kasteleyn random cluster models} (FK model). 
Given a graph the Ising model assigns probabilities to configurations of $\pm 1$ spins on the vertices of the graph
and the random cluster model assigns probabilities to configurations of open/closed edges.
FK model is obtained from the percolation model (giving weight $p$ to each open edge and $1-p$ to each closed edge) 
by weighting by a factor $q$ per each connected cluster of open edges.
We will consider the FK Ising model, i.e., FK model whose parameter value $q$ corresponds to the Ising model, 
on a square lattice. The lattice we are considering is the usual square lattice $\Z^2$, for convenience rotated by $45^\circ$.
In Figure~\ref{sfig: graphs a} it is the lattice formed by the centers of the black squares.

\begin{figure}[tbh]
\centering
\subfigure[Lattices $\sqlattice, \dsqlattice, \dmdlattice$.] 
{
	\label{sfig: graphs a}
	\includegraphics[scale=.8]
{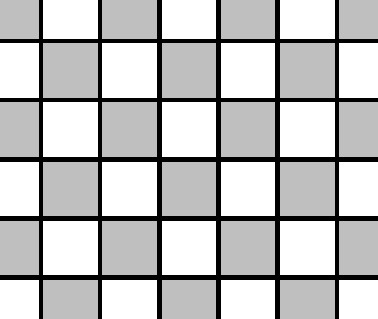}
} 
\hspace{0.5cm}
\subfigure[Modification and $\ddmdlattice$.]
{
	\label{sfig: graphs b}
	\includegraphics[scale=.8]
{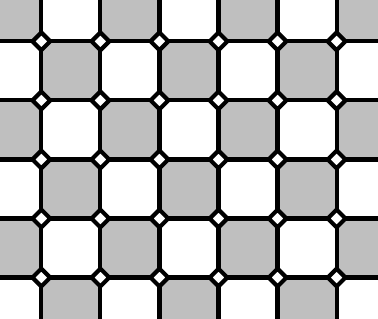}
}
\caption{%
The square lattices we are considering are $\sqlattice$ 
formed by the centers of the black squares 
%\added{%%%
(the vertices are connected by an edge 
if the corresponding squares touch by corners),
%}
$\dsqlattice$ formed by the centers of the white squares 
%\added{%%%
(the edges similarly as for $\sqlattice$)
%}
and $\dmdlattice$ formed by the vertices and edges of the black (or equivalently white) squares.
We will also consider the square--octagon lattice $\ddmdlattice$ 
%\added{%%%
(infinite graph formed by the vertices and edges of the squares and octagons
in the picture)
%}
which we see as a modification of $\dmdlattice$.}
\label{fig: graphs}
\end{figure}

We will consider a bounded simply connected subgraph of the square lattice. We will make this definition clearer later.
An FK configuration in the graph is illustrated in \ref{sfig: loop representation a} and its dual configuration
in \ref{sfig: loop representation b}. 
%%%
%\added{
The dual configuration is defined on the dual lattice, formed by the centers of the white squares in
Figure~\ref{sfig: graphs a}, with the rule that exactly one of any (primal) edge and its dual edge 
(the dual edge is the unique edge on the dual lattice crossing the primal edge)
%}
is present in the configuration or in the dual configuration. 
The so called \emph{loop representation} of the FK model is defined
on so called \emph{medial lattice}, shown in Figure~\ref{sfig: graphs a} and
formed by the corners of the squares, 
by taking the inner and outer boundaries of all the connected components in a configuration of edges. The loop representation
can be seen as a dense collection of \emph{simple} loops when we resolve the possible double vertices by the modification
illustrated in Figure~\ref{sfig: graphs b} or Figure~\ref{fig: loop representation}.

\begin{figure}[tbh]
\centering
\subfigure[A configuration on $\sqlattice$ satisfying wired boundary conditions. 
%%%
%\added{
Under the wired
boundary conditions, the edges next to the boundary of the domain are conditioned to belong
to the configuration.
%}
] 
{
	\label{sfig: loop representation a}
	\includegraphics[scale=.375]
{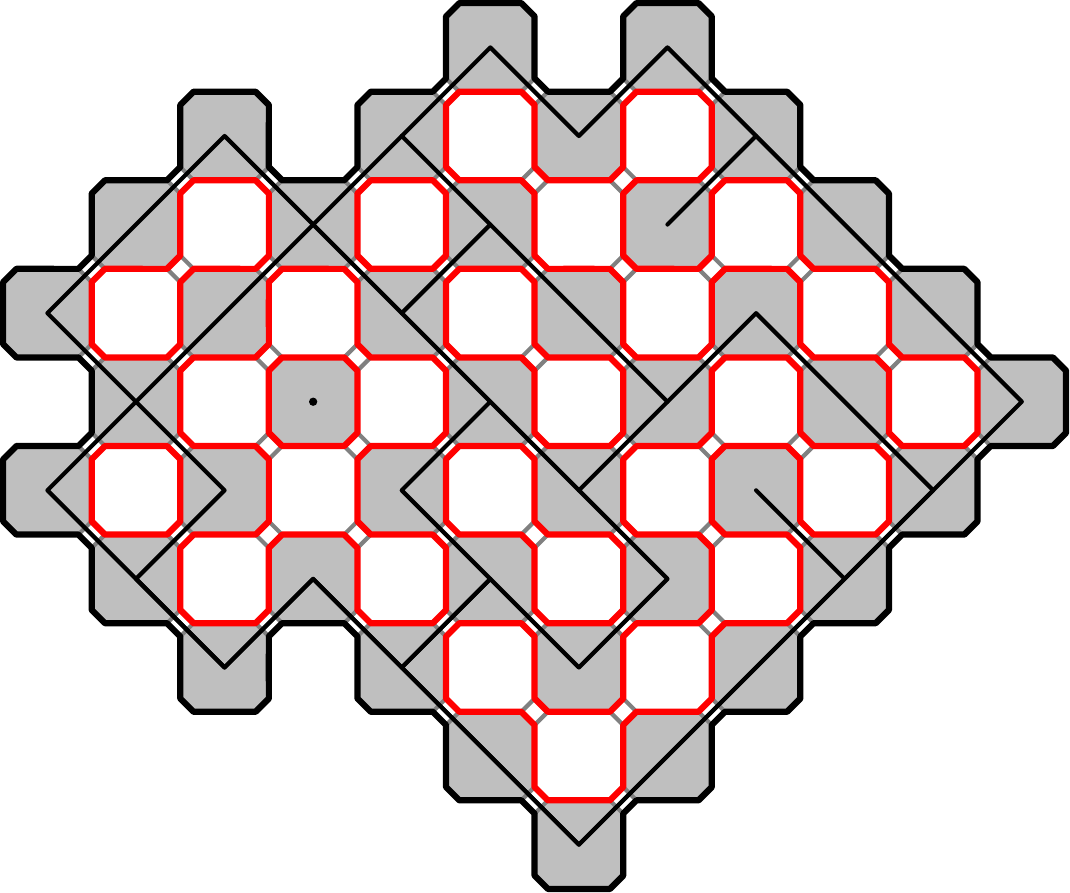}
} 
\hspace{0.5cm}
\subfigure[A configuration on $\dsqlattice$ satisfying free boundary conditions.
%%%
%\added{
As the name suggest, the free
boundary conditions imply that the FK model is considered in its standard
form, without any added weight factors.
%}
]
{
	\label{sfig: loop representation b}
	\includegraphics[scale=.375]
{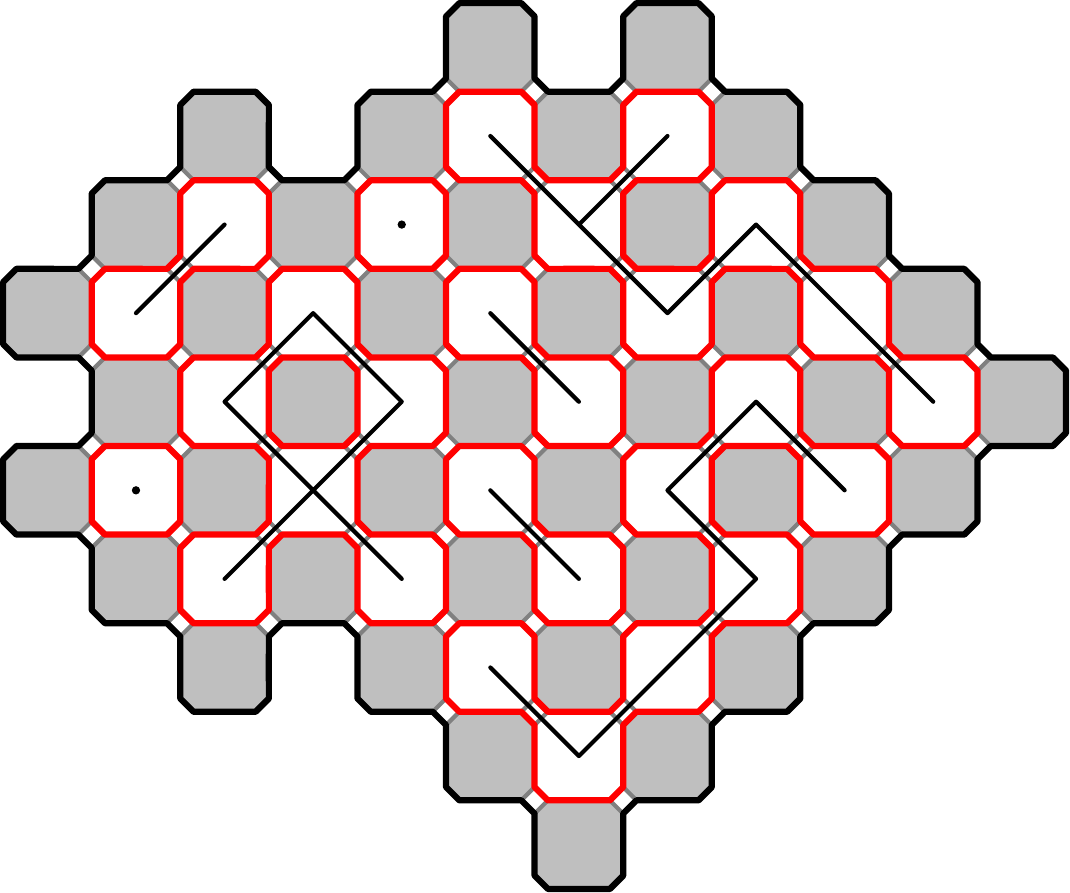}
}
\caption{FK configurations 
%%%
%\added{
with wired and free boundary conditions
on graphs dual to each other.
%}
Notice that these two configurations are dual to each other and thus
have the same loop representations.}
\label{fig: loop representation}
\end{figure}

The duality of the FK configurations described above gives a mapping between 
the set of configurations with wired boundary conditions
and the set of configurations with free boundary conditions, 
defined on the graph and on its dual, respectively. 
One can check that the FK model parameter values
$p$ and $q$ get mapped under this involution of FK configurations to values $p^*$ and $q^*=q$ 
where $p^*$ is given by
\begin{equation}
\frac{p \, p^*}{(1- p)(1- p^*)} = q .
\end{equation}
In this article we will consider critical point of the model which happens to be the self-dual value of $p$, that is,
$p=p_c$ when $p=p^*$. For the FK Ising model, the parameter $q=2$ and the \emph{critical parameter} $p_c= \frac{\sqrt{2}}{1+\sqrt{2}}$.
By the fact that it is the self-dual point we see that in Figures~\ref{sfig: loop representation a} and \ref{sfig: loop representation b}
the FK models have the same parameter values. The only difference is in boundary conditions. The resulting
loop configurations have the same law for the critical parameter on both setups, either wired boundary conditions on the primal graph
or free boundary conditions on the dual graph.

The correspondence of boundary conditions is slightly more complicated for other boundary conditions than the wired and free ones.

\subsubsection{The role of the critical parameter}

The parameter is chosen to be critical for several reasons. First of all it is expected that only for this value
of the parameter the scaling limit will be non-trivial. For the other values, we get either zero or one macroscopic loop
and all the other loops will be microscopic and vanishing in the scaling limit. 
The only macroscopic loop will be rather uninteresting
as it will follow closely the boundary, so that the loop will fluctuate from the boundary only to a distance which vanishes in the scaling limit. In contrast, for the critical parameter the scaling limit will consist of
(countably) infinite number of loops as 
we will be showing.

The second reason for selecting this
value of the parameter is more technical. For that value, the observable we are defining in Section~\ref{ssec: setup}
is going to satisfy a relation which we can interpret as a discrete version of the Cauchy--Riemann equations.
This makes it possible to pass to the limit and recover in the limit a holomorphic function solving a boundary value problem.

The third reason for the choice of the critical parameter is that at criticality we expect that the loop collection will
have a conformal symmetry. This is already suggested by the existence of the holomorphic observable.

%%%
%\added{
In fact, the discrete holomorphicity and related techniques allow
us to control the scaling limit
well and to identify the scaling limit
and to show its conformal invariance.
%}
And thus can be seen as the most important reason to consider a system at criticality.

\subsection{The exploration tree and the main results}

\subsubsection{The exploration tree}\label{sssec: exploration tree}

\begin{figure}[tbh]
\centering
	\includegraphics[scale=.6]
{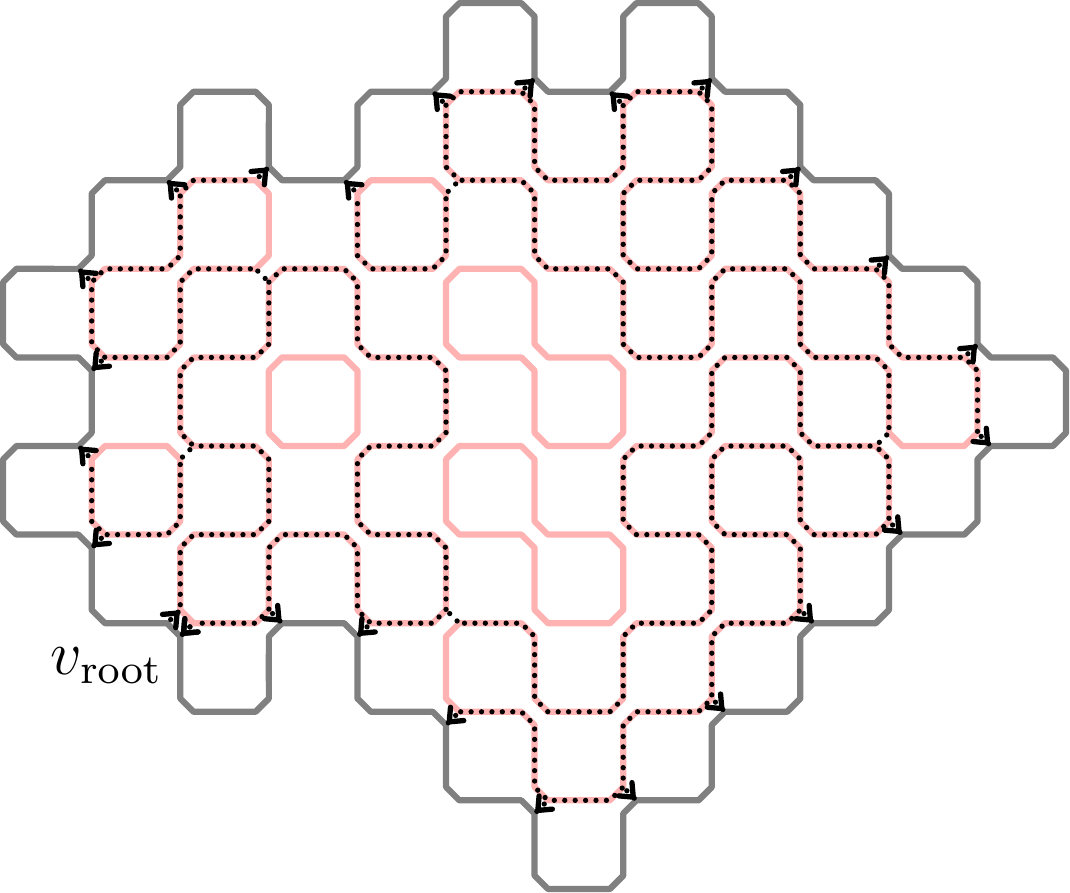}
\caption{The loop collection and the exploration tree: 
solid pink %\added{
(including
%} 
solid pink on the background of dotted lines) indicate
the loops. The dotted black lines form the tree. The dotted black lines 
with no pink on background
are the jumps from one loop to another.}
\label{fig: exploration tree}
\end{figure}

%\added{%
Figure~\ref{fig: exploration tree} illustrates the construction of the \emph{exploration tree} of a loop configuration.
The (chordal) exploration tree connects the fixed root vertex to any other boundary point. The construction of 
the branch form the root to a fixed target vertex is the following:
\begin{enumerate}\enustyo
\item \label{enui: exploration tree intro i}
To initialize the process cut open the loop next to the root vertex and \emph{start the process} 
at that location on that loop. 
\item \label{enui: exploration tree intro ii}
\emph{Explore the current loop in clockwise direction} until 
%\added{%
the target point is reached or any point at which
it is clear that it is impossible to reach the target point along the current loop 
(meaning that the current location of the exploration is disconnected  from the target
in the discrete slit domain where the explored part has been removed
from the original domain).
\item 
%\added{
\emph{If the target was reached, then
stop and return} (as the result of the algorithm) \emph{the path}  
concatenated from the subpaths of loops
in the order that they were explored.
\item 
%\added{
\emph{If the target was not reached yet, 
cut open the loop next to the current location} and jump to that loop and 
\emph{go to Step~\ref{enui: exploration tree intro ii}}.
%}
\end{enumerate}
%}

The construction depends on the direction of the exploration which was chosen 
%\added{
in (2)
%} 
to be clockwise. 
%\added{
Instead of a deterministic choice, independent coin flips could be used to decide whether
to follow each loop in clockwise or counterclockwise direction.
This would lead to a different
process.
In this article we will use the above construction which suits well our purposes.
%}
After all, the main goal is to show convergence of the loop collection, and the above construction agrees well with our observable.

\subsubsection{Main result}\label{sssec: intro main results}

The main theorem 
%\added{
of this article
%} 
is the following result. 
%\added{
For its formulation, call
%} 
a discrete domain \emph{admissible} if it
is simply connected 
%\added{
and bounded
%} 
and its boundary consists of a chain of black octagons and small squares 
as in Figure~\ref{sfig: loop representation a}.
%\added{
More generally, introduce lattice mesh by scaling the lattices
($\sqlattice, \dsqlattice, \dmdlattice, \ddmdlattice$ etc.) by a factor
$\delta>0$ and 
consider discrete, admissible domains with lattice mesh $\delta>0$,
and
use notation $\Omega^{(\delta)}$ to explicitly refer to such a domain.
%}
%\added{
A sequence of simply connected domains $\Omega_n$
(say $\Omega_n=\Omega^{(\delta_n)}$) converges to a domain $\Omega$ 
in \emph{Carath\'eodory sense} with respect to a point $\intpoint \in \Omega$,
if $\intpoint \in \Omega_n$ for all $n$ and the conformal and onto maps
$\psi_n: \disc \to \Omega_n$ with $\psi_n(0)=w_0$ and $\psi_n'(0)>0$
converge uniformly on compact subsets of $\disc$ to a conformal and onto map
$\psi: \disc \to \Omega$ with $\psi(0)=w_0$ and $\psi'(0)>w_0$.
Notice that then the inverse maps $\phi_n = \psi^{-1}_n$ 
converge uniformly 
on any compact subset $K$ of $\Omega$
($K$ belongs to the domain of $\phi_n$ for large $n$).
%}

\begin{theorem}\label{thm: main theorem}
Let $\Omega_{\delta_n}$ be a sequence of admissible domains 
that converges to a domain $\Omega$ in Carath\'eodory sense with respect to some fixed $\intpoint \in \Omega$
 and
let $\rndlpe_{\partial,\delta_n}$ be the random collection of loops which 
is the collection of all loops 
of the loop representation of the FK Ising model on $\Omega_{\delta_n}$ that intersect the boundary
%\added{
(the boundary touching loops).
%}
Let $\phi_n$ be a conformal map that sends $\Omega_n$ onto the unit disc and $\intpoint$ to $0$.
Then as $n\to \infty$, the sequence of random loop collections 
$\phi_n(\rndlpe_{\partial,\delta_n})$ converges weakly to a limit $\therndlpe$  whose
law is independent of the choices of $\Omega, \delta_n, \Omega_{\delta_n}, \intpoint$ and $\phi_n$, 
%\added{%%%
and hence the law is invariant under all conformal isomorphisms of $\disc$.
%}

Moreover the law of $\therndlpe$ 
%\added{%%%
is the boundary touching loops of CLE$(\kappa)$ with $\kappa=16/3$ and
%} %%%
%%%
is given by the image of the SLE$(\kappa,\kappa-6)$ exploration tree with $\kappa=16/3$
under a tree-to-loops mapping which inverts the construction in Section~\ref{sssec: exploration tree} and
is explained in more details in Section~\ref{ssec: exploration tree definition}
%\added{
(including definitions needed for understanding this theorem 
and making the statements more precise).
%}
\end{theorem}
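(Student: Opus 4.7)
The plan is to use the exploration tree as the main combinatorial bridge: prove convergence of the (discrete) exploration tree to its SLE analog, and then deduce convergence of the loop ensemble by inverting the construction of Section~\ref{sssec: exploration tree}. I would first fix a countable dense collection of target boundary points $\{x_k\}_{k\ge 1}$ and view the exploration tree as the consistent family of its branches $\rndbran_{\delta}(x_k)$ from the root to each $x_k$. If I can prove that, jointly in $k$, each branch converges to the corresponding branch of a continuum tree driven by \sle{\kappa,\kappa-6} with $\kappa=16/3$, then the whole tree converges; the loop ensemble convergence will follow by showing that the map from trees to loop configurations is almost surely continuous at the limiting tree.

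For a single branch, the key ingredient is a discrete holomorphic/fermionic observable $F_\delta$ associated with the branch, defined on the medial lattice with a Riemann--Hilbert boundary condition compatible with the exploration process. The observable should be a martingale with respect to the filtration generated by the exploration, and the discrete Cauchy--Riemann equations alluded to in the paper make it possible (by the standard precompactness arguments for $s$-holomorphic functions) to extract a subsequential limit $f$ which solves a continuum Riemann--Hilbert problem in the slit domain. Uniqueness of this problem identifies $f$ with the conformal image of an explicit function, and the martingale property in the limit forces the driving function of the curve to satisfy the SDE of \sle{16/3,\ 16/3 - 6}; this is the standard scheme used for the convergence of single FK~Ising interfaces, adapted to the case where the target is an arbitrary boundary point (hence the force point). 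Before identifying the limit one of course needs tightness of the discrete curves in the uniform-on-compacts topology modulo reparametrization, which I would get from verifying an Aizenman--Burchard / KS-type crossing estimate using the RSW bounds available for critical FK~Ising.

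To upgrade marginal convergence to convergence of the joint law of all branches, I would use the Markov/target-independence property of the discrete exploration (when two targets lie in the same component after some cutting, the two branches coincide until the disconnection time) together with the analogous property of \sle{\kappa,\kappa-6} to couple branches targeting different $x_k$. The joint convergence on any finite subset of targets then follows by exploring branches one after another, conditioning on the previously explored part of the tree and reapplying the single-branch convergence in each complementary domain; the limiting process is the branching SLE tree. Consistency and Kolmogorov extension give convergence of the full tree indexed by $\{x_k\}$.

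The most delicate step, and in my view the main obstacle, is the passage from the convergence of the tree to the convergence of the boundary-touching loops $\phi_n(\rndlpe_{\partial,\delta_n})$ as a random collection in a suitable topology (Hausdorff on loops together with control of how many loops exceed a given diameter). The tree-to-loops inverse of the exploration procedure is not continuous on all trees: one must rule out pathological behavior such as branches whose jumps accumulate, loops whose discrete approximations fail to close up in the limit, or infinitely many macroscopic loops appearing at the same point. I would control this by proving a uniform (in $\delta$) estimate on the number of loops of diameter at least $\eps$ that touch the boundary, again via RSW and a six-arm-type exponent, together with an equicontinuity estimate on how each loop is reconstructed from its two constituent subarcs of the tree. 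Combining the tightness of the loop ensemble with the already-established convergence of the tree will then identify the limit as the image of the continuum branching SLE tree under the tree-to-loops map, yielding the statement of Theorem~\ref{thm: main theorem} and its independence of the choices of $\Omega$, $\delta_n$, $\intpoint$ and $\phi_n$.
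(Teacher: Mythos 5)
Your overall architecture coincides with the paper's: crossing estimates of Aizenman--Burchard/KS type for tightness, a martingale observable with discrete Cauchy--Riemann structure for identification, target independence to handle several branches, and an inversion of the exploration (tree-to-loops) together with a priori bounds on macroscopic boundary-touching loops. One structural difference is that you build the full tree from a countable dense family of targets plus Kolmogorov extension, whereas the paper proves a \emph{uniform} approximation of the full tree by finite subtrees in the Hausdorff-type tree metric (its Theorem~\ref{thm: uniform tree approximation}); your assertion ``then the whole tree converges'' is exactly where such a uniform, quantitative estimate is needed, and it is also what later feeds the loop-ensemble approximation, so this step should not be treated as automatic.

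The genuine gap is in the single-branch identification, which you present as ``the standard scheme for chordal FK Ising interfaces, adapted to an arbitrary boundary target (hence the force point).'' A branch of the exploration tree is not a chordal interface: it repeatedly returns to the boundary arc carrying the target and jumps from one loop to the next there. Two consequences make the standard scheme insufficient as stated. First, the natural two-point observable is \emph{not} a martingale along the branch; the paper has to work with a four-point (and fused three-point) observable in a modified setting with an external arc, and must decorate each boundary visit with an independent sign $\xi_v$ so that the signed square root $\pm\sqrt{\beta_t}$ remains a martingale across the times when the underlying conditional probability vanishes. Second, in the limit the gap $X_t=V_t-U_t$ between driving process and force point is a Bessel-type process of dimension $3/2$, which hits $0$ on a time set of Hausdorff dimension $1/4$; on that set the martingale property of the observable alone does not exclude an extra singular monotone (local-time type) term $\Lambda_t$ in the evolution of $V_t$, i.e.\ it does not by itself distinguish SLE$(\kappa,\kappa-6)$ from other reflected variants. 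The paper eliminates $\Lambda_t$ by combining the $\tfrac12-\eps$ H\"older regularity of the driving process (coming from the KS crossing theory) with the dimension bound on the zero set (Lemma~\ref{lm: holder and devils staircase}). Without these two ingredients --- the sign-decorated martingale construction and the argument that no singular push appears at the force point --- your plan does not determine the law of the branch at precisely the boundary-touching times that carry all the new information, so this step needs to be supplied rather than cited as standard.
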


\begin{remark}
%\added{
In this article we will prove Theorem~\ref{thm: main theorem}
only in the case 
that $\Omega$ is smooth and that each of its discrete 
approximations have boundary which close to the
boundary of $\Omega$ in the sense that their distance
is bounded by a uniform constant times the lattice step of the 
approximation. 
By these assumptions,
%}
we will exclude the cases where the boundary forms long fjords to have
better estimates for the harmonic measure. 
The general case follows from the sequel~\cite{Kemppainen:2016th} of this article
on the \emph{radial} exploration tree of the FK Ising model.
These restrictions are technical and they are not needed, for instance, for the  
convergence in the 
%%%
%\added{
so-called
%} 
%%%
4-point case 
(Section~\ref{ssec: convergence observable 4-pt} and other arguments leading to 
the convergence of the interface to \varsle{\kappa,Z} in Section~\ref{ssec: martingale char for 4-pt}). 
\end{remark}

%\added{%%%
The present article aims to provide clear 
details for the basic proof techniques which include 
the regularity properties of
trees, derivation of the martingale observables and the corresponding martingale
characterization in the boundary-touching-loop setting.
%}
%%%%
In principle, one should be able to deduce the complete picture by repeatedly iterating 
this construction inside the resulting holes appearing after removing the boundary touching loops, 
the main difficult point being the fractal boundary. Instead, in the sequel~\cite{Kemppainen:2016th} 
we build the complete tree towards interior points, thus not having to deal with fractal boundaries.
This requires working with a more complicated observable, and the proof in the current article better 
explains what follows in the sequel.

Our result in the 4-point setting is interesting in its own right. In a follow-up paper~\cite{Kemppainen:CEhYg0we},
we use it to show that the interface  conditioned on an internal
arc pattern converges towards so-called hypergeometric SLE.

A sample of FK Ising branch is illustrated in Figure~\ref{fig: fk ising sample and exploration tree}.

\begin{figure}[tbh]
\subfigure[A sample of FK Ising model with free boundary conditions.] 
{
	\includegraphics[scale=.9]
{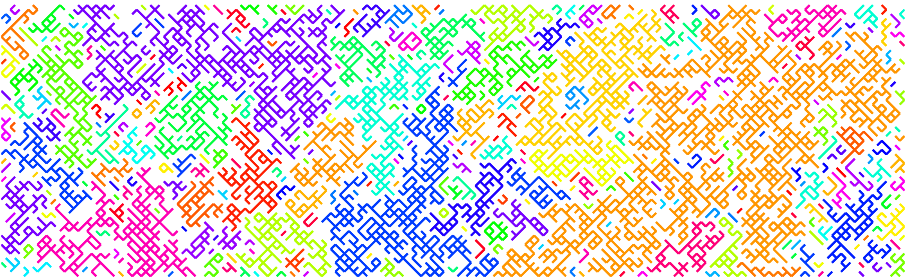}
}
\subfigure[The boundary touching loops of the loop representation which intersect the bottom of the rectangle.] 
{
	\includegraphics[scale=.9]
{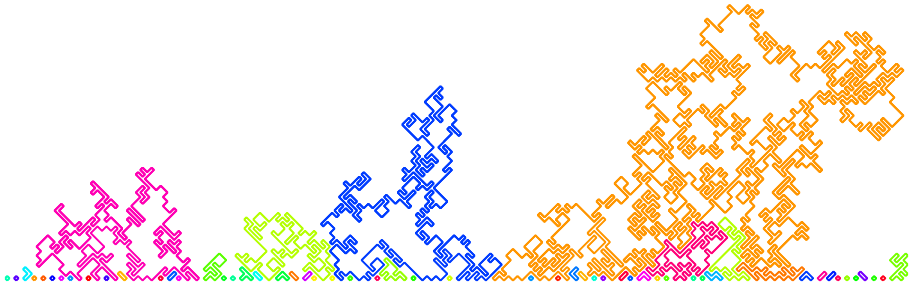}
}
\subfigure[The exploration process between the lower left and right corners. Notice that the process uses only some
loops touching the bottom arc and it uses only the loop arcs which are at top as seen from the lower right corner.] 
{
	\includegraphics[scale=.9]
{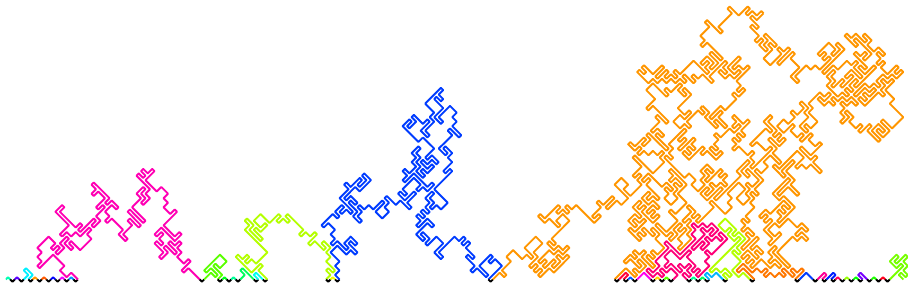}
}
\caption{A sample configuration of the critical FK Ising model on a rectangular domain 
with free boundary conditions. Components with only one vertex are not shown. 
The colors distinguish different components. 
Notice that in this particular sample, the large orange loop happens to come fairly close 
to the boundary without disconnecting the small loops to its right.
Thus the exploration path turns and explores those boundary touching loops
inside the ``fjord.''}
\label{fig: fk ising sample and exploration tree}
\end{figure}

%\added{%%%
\subsection{Previous results on conformally invariant 
scaling limits of random curves and loops}

So far, convergence of a single discrete interface  to \slek's has been established for but a few models:
 $\kappa=2$ and $\kappa=8$ \cite{Lawler:2004un},
$\kappa=3$ and $\kappa=\frac{16}{3}$ \cite{Chelkak:2014gs},
$\kappa=4$ \cite{Schramm:2005wh,Schramm:2006uc} and
$\kappa=6$ \cite{Smirnov:2001hw,Smirnov:2009vj}.
However, the framework for the full scaling limit, including all interfaces, is less developed: 
in addition to the present article
only $\kappa=3$ \cite{Benoist:2016uy}
and $\kappa=6$ \cite{Camia:2006wv} and our subsequent work 
\cite{Kemppainen:2016th}.
A similar result on a collection of random curves is
the convergece of the free arc ensemble of the Ising model in 
\cite{benoist-2016a}.

\subsection{Organization of the article}

We will give further definitions in Section~\ref{sec: setup}. In Section~\ref{sec: regularity of trees and loops}
we explore the regularity and tightness properties of the loop configurations and the exploration trees
based on crossing estimates. This gives a priori knowledge needed in the main argument.
In Section~\ref{sec: observable} we define the holomorphic observable and show its convergence.
In Section~\ref{sec: characterization} we combine these tools and extract information from the observable
so that we can characterize the scaling limit and prove the main theorem in Section~\ref{sec: proof main theorem}.

% !TEX root = FK_Ising_tree_full_scaling_limit.tex

%\section{The setup and statement of the main results}
\section{The setup and more details of the main result}\label{sec: setup}

\subsection{Graph theoretical notations and setup}

In this article the lattice $\sqlattice$ is the square lattice $\Z^2$ rotated by $\pi/4$, 
$\dsqlattice$ is its dual lattice, which itself is also a square lattice and $\dmdlattice$ is their (common) medial lattice. 
More specifically, we define three lattices $G=(V(G),E(G))$, where $G= \sqlattice,\dsqlattice,\dmdlattice$, as
\begin{gather}
V( \sqlattice ) = \left\{ (i,j) \in \Z^2 \,:\, i+j \text{ even} \right\}, \quad
   E( \sqlattice ) = \left\{ \{v,w\} \subset V( \sqlattice ) \,:\, |v-w| = \sqrt{2} \right\},  \\
V( \dsqlattice ) = \left\{ (i,j) \in \Z^2 \,:\, i+j \text{ odd} \right\}, \quad
   E( \dsqlattice ) = \left\{ \{v,w\} \subset V( \dsqlattice ) \,:\, |v-w| = \sqrt{2} \right\},  \\
V( \dmdlattice ) = (1/2 + \Z)^2 , \quad
   E( \dmdlattice ) = \left\{ \{v,w\} \subset V( \dmdlattice ) \,:\, |v-w| = 1\right\} 
\end{gather}
Notice that sites of $\dmdlattice$ are the midpoints of the edges of $\sqlattice$ and $\dsqlattice$.

We call the vertices and edges of $V( \sqlattice )$ \emph{black} and
the vertices and edges of $V( \dsqlattice )$ \emph{white}.
Correspondingly the faces of $\dmdlattice$ are colored black and white depending whether the
center of that face belongs to $V( \sqlattice )$ or $V( \dsqlattice )$.

The directed version $\odmdlattice$ is defined by setting $V(\odmdlattice)=V(\dmdlattice)$
and orienting the edges around any black face in the counter-clockwise direction.

The modified medial lattice $\ddmdlattice$, which is a square--octagon lattice, 
is obtained from $\dmdlattice$ by replacing each site by a small square.
See 
%%%
%\added{%%%
Figure~\ref{sfig: graphs b}.
%}% 
The oriented lattice $\oddmdlattice$
is obtained from $\ddmdlattice$ by orienting the edges around black and white octagonal faces
in counter-clockwise and clockwise directions, respectively.

\begin{definition}
A simply connected, non-empty, bounded domain $\domain$ is said to be a \emph{wired $\oddmdlattice$-domain} (or \emph{admissible domain})
if $\partial \domain$ oriented in counter-clockwise 
direction is a path in $\oddmdlattice$.
\end{definition}

See Figure~\ref{fig: random cluster sample and branch of tree} for an example of such a domain.
The wired $\oddmdlattice$-domains are in one to one correspondence with non-empty finite subgraphs of $\sqlattice$
which are simply connected, i.e., they are graphs who have an unique unbounded face 
and the rest of the faces are 
%\added{%%%
unit-size
%}% 
squares.

\subsection{FK Ising model: notations and setup for the full scaling limit}

Let $G$ be a simply connected subgraph of the square lattice $\sqlattice$. Consider the random cluster measure
$\mu=\mu_{p,q}^1$ of $G$ with all \emph{wired boundary conditions} in the special case of the critical FK Ising model,
that is, when $q=2$ and $p = \sqrt{2}/(1 + \sqrt{2})$.
Its dual model is again a critical FK Ising model, now with free boundary conditions on the dual graph $G^\circ$ of $G$
which is a (simply connected) subgraph of $\dsqlattice$. 
%\added{%%%
The \emph{loop representation} is obtained as 
loops which form boundaries between open cluster and the dual open
clusters and is defined as a collection of loops  on the corresponding
subgraph $G^\ddiamond$ of the modified medial
lattice $\ddmdlattice$. The loop collection satisfies the properties of the following definition.
See also Figure~\ref{fig: loop representation} for illustration of
the  common loop representation shared by the random cluster model and its dual model.
%}%%%

Let's call a 
%\added{%%%
(unordered)
%}% 
collection of loops $\mathcal{L}=(L_j)_{j =1 ,\ldots  N}$ 
on $G^\ddiamond$ a \emph{dense collection of non-intersecting loops} (DCNIL)
if 
\begin{itemize}
\item each $L_j \subset G^\ddiamond$ is a simple loop
\item $L_j$ and $L_k$ are vertex-disjoint when $j \neq k$
\item for every edge $e \in E^\diamond$ 
there is a loop $L_j$ that visits $e$. Here we use that $E^\diamond$ is naturally a subset of $E^\ddiamond$.
\end{itemize}
Let the collection of all the loops in the loop representation be $\rndlpe=(\rndlp_j)_{j =1 ,\ldots  N}$. 
Then DCNIL is exactly the support of $\rndlpe$ and
for any DCNIL collection $C$ of loops 
\begin{equation}
\mu(\rndlpe = \lpe) = \frac{1}{Z} (\sqrt{2})^{\text{\# of loops in }\lpe}
\end{equation}
where $Z$ is 
%%%
%\added{%%%
the partition function that normalizes the probability measure.
%}%

\begin{figure}[tbh]
\centering
\subfigure[A sample of random cluster model and the corresponding loop configuration.] 
{
	%\label{sfig: }
	\includegraphics[scale=.37]
{fki-p18}
} 
\hspace{0.2cm}
\subfigure[Two branches of exploration tree from $v_\text{root} \in V_\partial$ to
$w,w' \in V_\partial$. Note the target independence of the process and that the branching takes place
on the vertices of $V_{\partial,1}$.]
{
	%\label{sfig: }
	\includegraphics[scale=.37]
{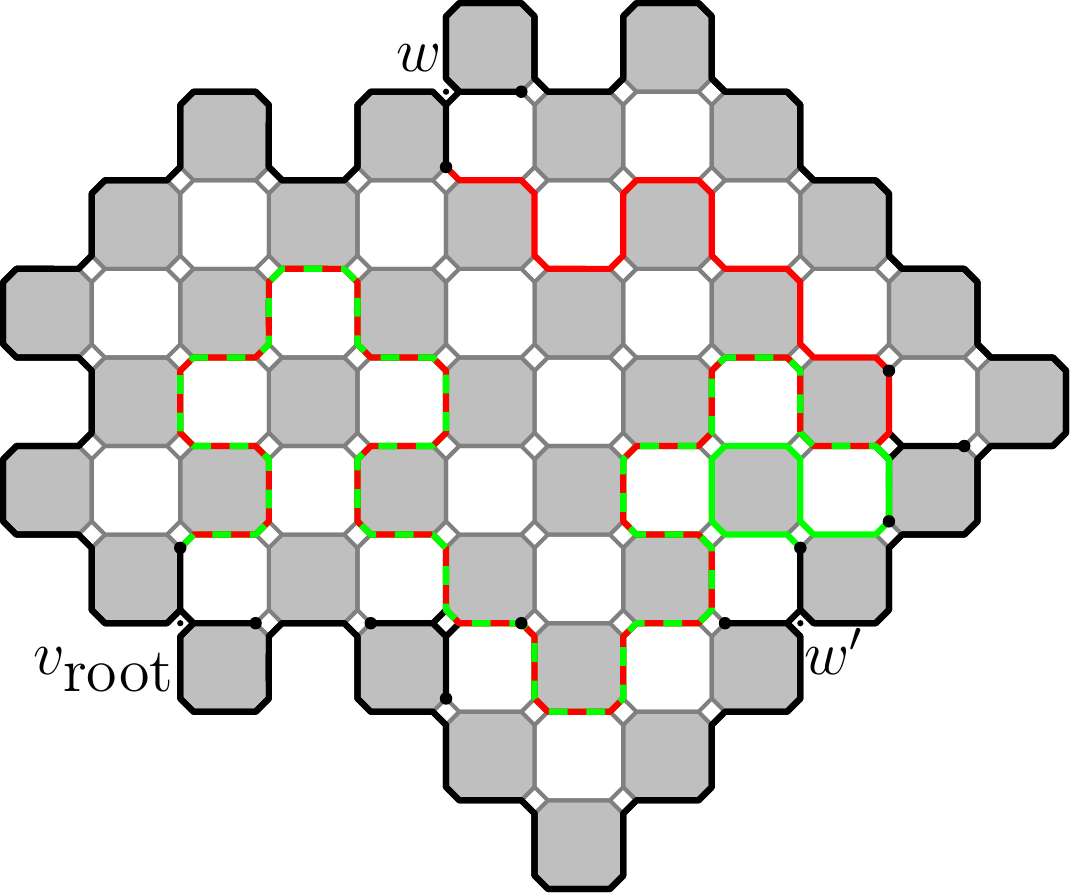}
}
\caption{The construction of the exploration tree of the random cluster model.}
\label{fig: random cluster sample and branch of tree}
\end{figure}

%\added{%
We consider two boundaries of the domain, one which is the boundary of the domain
and one which shifted by one lattice step from the first one towards
the interior of the domain. They are both simple loops on the lattice
which satisfy the same parity condition as the loops of the random cluster 
loop representation
(the octagons on both sides have uniform color).
More specifically
\begin{itemize}
\item $\partial G^\ddiamond$ is the boundary of the domain,
in the usual topological sense. Call it the \emph{external boundary}.
\item $\partial_1 G^\ddiamond$ is the outermost (simple) loop can be drawn in 
$G^\ddiamond$. In other words, it is the outermost loop of 
the empty random cluster configuration with wired boundary conditions.
Call it the \emph{internal boundary}. We say that $\partial_1 G^\ddiamond$ touches
the boundary everywhere and that a \emph{loop touches the boundary} if 
if it intersects $\partial_1 G^\ddiamond$. Notice that if
a loop and $\partial_1 G^\ddiamond$ intersect then they
share an edge (which is an edge shared by two octagons).
\end{itemize}%
%}

Define the collection of \emph{boundary touching loops}, $\rndlpe_\partial \subset \rndlpe$, 
to be simply the set of loops which intersect the 
%\added{%%%
internal
%} %%%
boundary $\partial_1 G^\ddiamond$.

%%%
%\added{%%%
Recall the Carath\'eodory convergence from Section~\ref{sssec: intro main results}.
%}%
Take a bounded simply connected domain $\domain$ in the plane. 
And take a sequence $\delta_n \searrow 0$ as $n \to \infty$
and a sequence of simply connected graphs
$G_{\delta_n}^\bullet \subset \delta_n \sqlattice$ which approximate $\domain$ in the sense that, if we denote
by $\domain_{\delta_n}$ the bounded component of $\C \setminus \partial G_{\delta_n}^\ddiamond$,
then $\domain_{\delta_n}$ converges in Carath\'eodory convergence 
to $\domain$ (with respect to any interior point of $\domain$).
Fix any $\intpoint \in \domain$ and let $\phi_{\delta_n}: \domain_{\delta_n} \to \disc$ 
be conformal transformations normalized in the usual way using $\intpoint$, that is,
\begin{equation*}
\phi_{\delta_n}(\intpoint)=0, \quad \phi_{\delta_n}'(\intpoint)> 0 .  
\end{equation*}
%%%
%\added{%%%
Let $\rndlpe_{\partial,\delta_n}$ be 
the collection of boundary touching loops in $\Omega_{\delta_n}$ and set
%}%
$\tilde{\rndlpe}_{\partial,\delta_n} = \phi_{\delta_n} (\rndlpe_{\partial,\delta_n}) $ .

Let us 
%\added{%
rephrase
%}%
here the first half of Theorem~\ref{thm: main theorem}.

\begin{mainthmA}[Conformal invariance of $\rndlpe_\partial$]
As $n \to \infty$, $\tilde{\rndlpe}_{\partial,\delta_n}$ converges weakly to a random collection $\therndlpe$ of loops in $\disc$.
The law of $\therndlpe$ is independent of $\domain, \delta_n, \domain_{\delta_n}$ and $\intpoint$.
\end{mainthmA}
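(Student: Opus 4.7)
The plan is to reduce the convergence of the boundary-touching loop ensemble to the convergence of its exploration tree, and then to the convergence of individual branches of the tree, each of which is in the scaling limit a chordal \sle{\kappa,\kappa-6} curve with $\kappa=16/3$. Concretely, I would proceed in four stages: (i) establish tightness of $\tilde{\rndlpe}_{\partial,\delta_n}$ in a topology on loop collections that is compatible with the tree construction of Section~\ref{sssec: exploration tree}; (ii) prove convergence of a single branch of the discrete exploration tree to \sle{\kappa,\kappa-6}; (iii) upgrade this to joint convergence of any finite collection of branches using target-independence, thus obtaining convergence of the full exploration tree; and (iv) invert the tree-to-loops construction to recover convergence of the loop ensemble.

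Tightness in stage (i) would come directly from the crossing and regularity estimates of Section~\ref{sec: regularity of trees and loops}. These estimates yield a uniform modulus of continuity for every branch and a uniform tail bound on the number of macroscopic boundary-touching loops, which together place $\tilde{\rndlpe}_{\partial,\delta_n}$ in a compact set for the chosen loop-collection topology. By Prokhorov we may extract a subsequential weak limit $\therndlpe$ together with a subsequential limit of the associated exploration tree.

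For stage (ii), I would fix a boundary target, look at the branch of the discrete exploration tree from the root to that target, and exploit the discrete holomorphic observable from Section~\ref{sec: observable}. Its (discrete) martingale property for the Löwner driving process of $\phi_{\delta_n}$ applied to the branch, combined with the convergence of the observable in Section~\ref{sec: observable} and the martingale characterization of Section~\ref{sec: characterization}, identifies the driving process of any subsequential limit uniquely as that of \sle{\kappa,\kappa-6} with $\kappa=16/3$. Target-independence of the discrete exploration tree — a reflection of the Markov/resampling structure of the loop ensemble, mirroring the target-independence of \sle{\kappa,\kappa-6} — would then let us bootstrap single-branch convergence into joint convergence of branches to any finite collection of boundary targets. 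Passing to a dense countable family of targets yields stage (iii), and identifies the limiting tree as the \sle{\kappa,\kappa-6} exploration tree in $\disc$.

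Stage (iv) is what I expect to be the main obstacle: the tree-to-loops inversion must be shown to be almost surely continuous at the limiting tree, so that the continuous mapping theorem gives weak convergence of $\tilde{\rndlpe}_{\partial,\delta_n}$ to the image of the tree under this map. The subtlety is that loops are reconstructed by matching sub-arcs of the branch that are far apart in time but share a common boundary point on $\partial\disc$, a matching that can be destabilised by loops touching the boundary only at a single point or by two branches arriving at a common boundary point without closing into a loop in the limit. Here the crossing estimates of Section~\ref{sec: regularity of trees and loops} and the smoothness assumption on $\domain$ provide exactly the control required to show that, almost surely, no macroscopic loop degenerates and that the coupling between the discrete tree and its loop collection passes to the limit. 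Once this is in place, conformal invariance of $\therndlpe$ is automatic: the driving process is \sle{\kappa,\kappa-6}, which is conformally invariant, and the tree-to-loops map is conformally natural; the independence of $\delta_n, \domain_{\delta_n}$ follows because the limit depends on the data only through the Carathéodory limit, and the independence of $\intpoint$ follows from the Möbius covariance of \sle{\kappa,\kappa-6} together with the normalization of $\phi_n$.
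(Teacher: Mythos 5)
Your proposal follows essentially the same route as the paper: crossing estimates give tightness of trees and loop collections, the preholomorphic observable plus the martingale characterization identify each branch of any subsequential limit as \sle{\kappa,\kappa-6} with $\kappa=16/3$, target independence upgrades this to finite subtrees, and the loop ensemble is then recovered from the limiting tree using the a priori regularity of the loops. The only presentational difference is that where you invoke almost-sure continuity of the tree-to-loops map at the limit, the paper makes the same idea quantitative through the uniform-in-$\delta$ approximation by finite subtrees (Theorems~\ref{thm: uniform tree approximation} and \ref{thm: uniform loop approximation}) combined with the a priori loop properties and the limiting reconstruction statement (Theorems~\ref{thm: a priori properties of le} and \ref{thm: tree to loops in the limit}).
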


\begin{remark}
%\added{%
Notice that the independence of the law of $\therndlpe$ from $\domain, \delta_n, \domain_{\delta_n}$ and $\intpoint$
implies that  $\therndlpe$ is invariant under all 
conformal automorphisms 
(M\"obius transformations) of $\disc$.
The rotational invariance requires a separate argument using
the correspondence between exploration tree and the loop collection
and the fact that we are free to choose the root for the exploration.
See Section~\ref{sec: proof main theorem}.
%}%
\end{remark}

\subsection{The exploration tree of FK Ising model}\label{ssec: exploration tree definition}

%%
%\added{%
Let's simplify the notation so that we use $\partial \Omega$ and $\partial_1 \Omega$ to denote
$\partial G^\ddiamond$ and $\partial_1 G^\ddiamond$.
%}%
Remember that $\partial \Omega$ and $\partial_1 \Omega$ are simple loops on $\oddmdlattice$
and that $\rndlpe_\partial$ was the set of loops in $\rndlpe$ that intersected $\partial_1 \Omega$.
Next we will explain the construction of the \emph{exploration tree} of $\rndlpe_\partial$.
The branches of the tree will be simple paths from a root edge to a directed edge of $\partial_1 \Omega$.
More specifically let $\ttrtarget$ be the vertex set of $\partial_1 \Omega$
and for each $v \in \ttrtarget$,  let $f_v$ be the edge of $\partial_1 \Omega$ arriving to $v$.

We assume that the root vertex $\vroot \in V_{\partial}$ of the exploration is fixed. For any $w \in \ttrtarget$, 
we are going to construct a simple path which starts from the inwards pointing edge of $\vroot$
and ends on the edge $f_w$ of $w$, denoting this path by $\bran_w=\bran_{\vroot,w}$.
We will call the mapping from the loop collections to the trees the ``\emph{loops-to-tree map}.''

\begin{figure}[tbh]
\centering
	\includegraphics[scale=.675]
{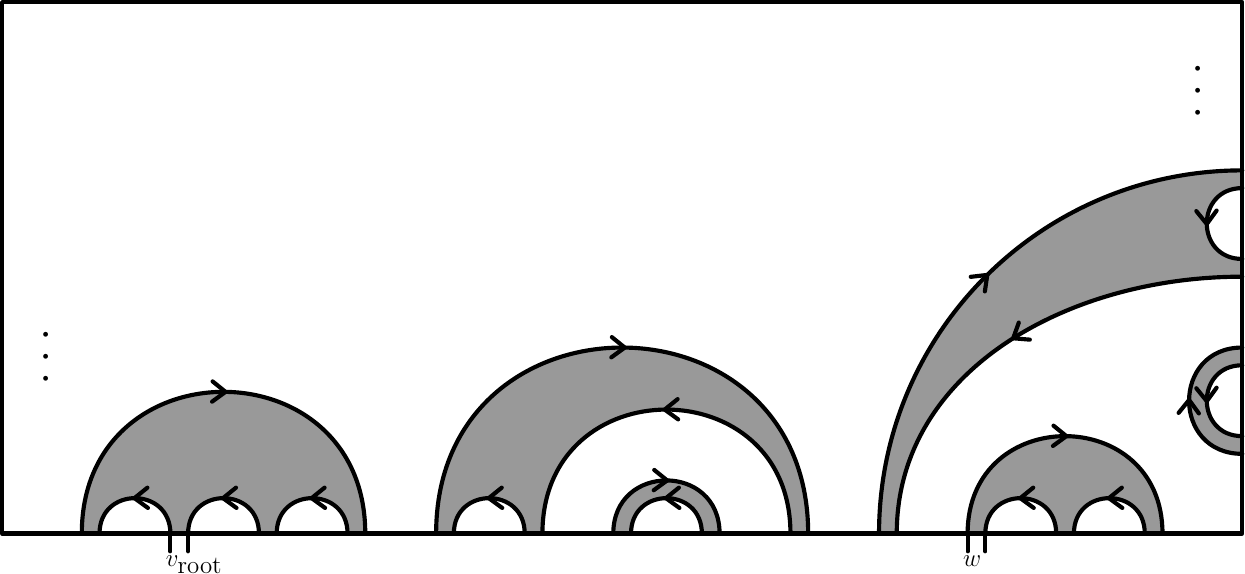} 
\caption{A schematic picture of the boundary touching loops. The interiors of the loops are shaded to make it easier to
distinguish the loops and the arrows indicate the clockwise orientation of the loops.}
\label{fig: schematic loops}
\end{figure}

%\added{%%%
We describe next the algorithm of Section~\ref{sssec: exploration tree} when the target point is on the boundary.
Consider a loop collection $\lpe=(\lp_j)_{j \in J}$ where $J$ is some finite index set,
and assume that $\lpe$ satisfies the properties of DCNIL.
The index $j \in J$ shouldn't be confused with the concrete sequence $\lp_1,\ldots,\lp_n$ 
chosen below.
Here we consider $\lp_j$ as a path in $\dmdlattice$
and lift it to $\ddmdlattice$ when needed.
%}
\begin{enumerate}\enustyo
\item 
Set $e_0$ to be the inward pointing edge of $\partial_1 \Omega$ at $\vroot$ 
and $f_\text{end}=f_w$, that is the edge of $\partial_1 \Omega$ arriving to $w$.
Denote the set of edges in $\partial_1 \Omega$ that lie between $f_\text{end}$ and $e_0$, including $f_\text{end}$,
by $F_w$.
\item Set $\lp_1$ to be the loop going through $e_0$.
Find the first edge of $\lp_1$ after $e_0$ in the orientation (remember that all the loops are oriented in the clockwise direction)
of $\lp_1$ 
%%%
%\added{%%%
that
%}% 
%%%
lies in $F_w$. Call it $f_1$ and the part of $\lp_1$ between $e_0$ and $f_1$, not including
$f_1$, $\lp_1^\itop$.

%\added{%%%
Notice
%} 
that $\lp_1$ goes through $f_\text{end}$ if and only if $f_1$ is equal to $f_\text{end}$.
Notice also that if $f_1$ is not $f_\text{end}$, then it is the first edge that takes the loop to a component of the domain that is no longer
``visible'' to $f_\text{end}$.
\item\label{enui: tree algorithm iterative step}%
Suppose that $f_n$ and $\lp_1^\itop, \lp_2^\itop, \ldots, \lp_n^\itop$ are known. If $f_n$ is equal to $f_\text{end}$, stop and return
the concatenation of $\lp_1^\itop, \lp_2^\itop, \ldots, \lp_n^\itop$ and $f_n$ as the result $\bran_w$. 
Otherwise take 
the 
%\added{%%%
inward pointing $e_n$
edge next to $f_n$ 
(starting at the endpoint of $f_n$)
%}
and the loop $\lp_{n+1}$ passing through $e_n$. 
Find the first edge of $\lp_{n+1}$ after $e_n$ in the orientation 
of $\lp_{n+1}$ that lies in $F_w$. Call it $f_{n+1}$ and the part of $\lp_{n+1}$ between $e_n$ and $f_{n+1}$, 
%\added{%%%
not including $f_{n+1}$,
%}% 
$\lp_{n+1}^\itop$. 
Repeat \ref{enui: tree algorithm iterative step}.
\end{enumerate}
The result of the algorithm $\bran_w$ is called the \emph{exploration process} from $\vroot$ to $w$.
The collection $\ttr = (\bran_w)_{w \in \ttrtarget}$ where $\ttrtarget=V_\partial$ is called
the \emph{exploration tree} of the loop collection $\lpe$ rooted at $\vroot$.

The following result is immediate from the definition of the exploration tree. 
%\added{%
Two branches coincide until the first time that the branch disconnects the target points 
by that result,
and after that the branches explore disjoint regions, which will later imply independence
of this processes for the FK Ising exploration tree.
%}%

\begin{proposition}[Target independence of exploration tree]
Suppose that $\vroot, w, w'$ are vertices in $V_\partial$ in counterclockwise order, that can be the same. 
Let $F_{w,w'}$ the edges of $\partial_1 \Omega$ that 
%\added{%%%
lie between the outward pointing edges of $w$ and $w'$
in counterclockwise direction, including the edge at $w'$ and but excluding the edge at $w$.
%}%
Then $\bran_w$ and $\bran_{w'}$ are equal up until the first edge lying in $F_{w,w'}$.
\end{proposition}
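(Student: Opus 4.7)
The plan is to show that the algorithm defining $\bran_w$ depends on the target $w$ only through the data $(F_w, f_w)$, and then to exploit the fact that, for counterclockwise-ordered $\vroot, w, w'$, the sets $F_w$ and $F_{w'}$ agree off of $F_{w,w'}$.

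First, I observe from the construction in Section~\ref{ssec: exploration tree definition} that at each step $n$ the algorithm starts from a deterministic edge $e_n$ on a specific loop $\lp_{n+1}$, and then searches along $\lp_{n+1}$ (in the clockwise orientation of the loop) for the first edge lying in the set $F_w$; this edge becomes $f_{n+1}$, the segment $\lp_{n+1}^\itop$ is determined, and the process halts if $f_{n+1}=f_w$. Hence, given the loop configuration $\lpe$, the entire output $\bran_w$ is a deterministic function of the pair $(F_w,f_w)$.

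Next, I identify the relation between the two arc-sets. Since $\vroot,w,w'$ lie on $\partial_1 \Omega$ in counterclockwise order, $F_w$ is the arc from $\vroot$ counterclockwise to $w$ (ending at $f_w$) and $F_{w'}$ is the longer arc from $\vroot$ counterclockwise to $w'$ (ending at $f_{w'}$). Therefore $F_w \subset F_{w'}$, and
\begin{equation*}
F_{w'}\setminus F_{w,w'} \;=\; F_w\setminus F_{w,w'} \;=\; F_w\setminus\{f_w\},
\end{equation*}
so the two arc-sets coincide away from $F_{w,w'}$.

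The main step is a straightforward induction on $n$. Assume that after $n$ iterations the two algorithms (for $w$ and for $w'$) have produced identical segments $\lp_1^\itop,\ldots,\lp_n^\itop$ and identical edges $f_1,\ldots,f_n\notin F_{w,w'}$; then they are poised at the same edge $e_n$ on the same loop $\lp_{n+1}$. Traversing $\lp_{n+1}$ from $e_n$ in its clockwise orientation, let $f^\star$ be the first edge of $\lp_{n+1}$ that lies in $F_{w'}$ (this is the candidate $f_{n+1}$ produced by the $w'$-algorithm). Two cases occur: either $f^\star\in F_w\setminus\{f_w\}$, in which case $f^\star\in F_w$ and no edge of $\lp_{n+1}$ strictly before $f^\star$ lies in $F_w\subset F_{w'}$, so the $w$-algorithm also selects $f^\star$ and both segments $\lp_{n+1}^\itop$ agree, with $f_{n+1}\notin F_{w,w'}$; or $f^\star\in F_{w,w'}$, in which case the branches either coincide at $f^\star=f_w$ (and $\bran_w$ terminates there) or diverge, but in every case the first edge of disagreement or of termination lies in $F_{w,w'}$.

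Iterating, the branches $\bran_w$ and $\bran_{w'}$ coincide through all edges $f_1,f_2,\ldots$ outside $F_{w,w'}$, which is exactly the asserted target independence. There is no serious obstacle here; the only care needed is the boundary bookkeeping that $f_w\in F_w\cap F_{w,w'}$, handled above by keeping $f_w$ and $\{f_w\}$ explicit in the set-theoretic identities.
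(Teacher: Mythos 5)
Your argument is correct and is essentially the paper's own (the paper simply declares the statement immediate from the definition): you formalize exactly that observation, namely that the construction of a branch depends on the target only through the stopping set $F_w$ together with its terminal edge, and that $F_w$ and $F_{w'}$ coincide outside $F_{w,w'}$, so the two algorithms run step-by-step identically until the first edge of $F_{w,w'}$ is encountered. (One small caveat: whether $F_w \subset F_{w'}$ or the reverse inclusion holds depends on how one reads ``between $f_{\text{end}}$ and $e_0$'' relative to the orientation of $\partial_1 \Omega$, but in either reading the symmetric difference of the two stopping sets lies in $F_{w,w'}$, so your induction goes through unchanged.)
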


Next we will construct the inverse of the loops-to-tree map which we will call a ``\emph{tree-to-loops map}.''
The structure of the tree $\ttr$ is the following: the branches of $\ttr$ are simple and follow the rule of leaving
white squares on their right and black on their left. The branching occurs in a subset of vertex set of $\partial_1 \Omega$.
There is a one-to-one correspondence between branching points of $\ttr$ and the boundary touching loops
of $\lpe$: the point on the loop, which lies on the boundary and is the closest one to the root if we move clockwise along the boundary,
is a branching point and every branching point has this property for some loop. 
See also Figure~\ref{fig: schematic loops}.
Suppose that at a branching point $w$ the incoming edges are $e_1$ and $e_2$ and the outgoing edges
are $f_1$ and $f_2$ and they are in the order $e_1,f_2,e_2,f_1$ 
%\added{%%%
clockwise
%}%
and that the exploration process enters $w$ through $e_1$. 
Then necessarily each of the pairs
$e_1, f_1$ and $e_2, f_2$
lie on the same loop of $\lpe$ and these two loops are different. 
Also it follows that $f_1$ and $e_2$ are on $\partial_1 \Omega$ while $f_2$ and $e_1$ are not. 
It follows that the last edge of $T_w$ is $e_2$ and that the part of $T_w$ between $f_2$ and $e_2$ is exactly the loop
of $\lpe$ that touches the boundary at $f_w=e_2$.
Doing the same thing for every branching point 
defines a mapping from a suitable set of trees onto the set of loop collections of boundary touching loops of DCNIL.
This mapping inverts the construction of the exploration tree and we summarize it in the following lemma.

\begin{lemma}\label{lem: tree to loop ensemble}
The mapping from the collection of boundary touching loops $\lpe_\partial$ to the chordal exploration tree $\ttr$ is a bijection.
\end{lemma}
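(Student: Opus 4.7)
The plan is to prove bijectivity by exhibiting the inverse explicitly. Let $\Phi$ denote the loops-to-tree map given by running the algorithm of Section~\ref{sssec: exploration tree} for every target $w \in \ttrtarget$, and let $\Psi$ denote the candidate inverse tree-to-loops map sketched in the paragraph preceding the lemma: for each branching vertex $w$ of $\ttr$ with incoming edges $e_1,e_2$ and outgoing edges $f_1,f_2$ arranged cyclically $e_1,f_2,e_2,f_1$ (with the branch entering via $e_1$), take the subpath of $\ttr$ from $f_2$ to $e_2$ as a boundary touching loop. I would show $\Psi\circ\Phi=\mathrm{id}$ and $\Phi\circ\Psi=\mathrm{id}$, with essentially all the content going into the first identity.

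For $\Psi\circ\Phi=\mathrm{id}$, fix a DCNIL $\lpe_\partial$ and a loop $\lp\in\lpe_\partial$. Let $v\in\partial_1\Omega$ be the vertex of $\lp\cap\partial_1\Omega$ that appears first when one moves clockwise along $\partial_1\Omega$ starting from $\vroot$; such a $v$ exists and is unique since $\lp$ is a boundary touching simple loop. The algorithm forces the branch $\bran_w$ to jump onto $\lp$ exactly when its current jump edge $f_n$ coincides with the inward edge at $v$, and this occurs precisely for the targets $w$ lying on the clockwise arc of $\partial_1\Omega$ enclosed between the two boundary visits of $\lp$ that flank $v$. By target independence all such branches coincide up to $v$ and then split into two distinct continuations: one traversing the $\itop$-arc of $\lp$, the other jumping to the next loop. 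This makes $v$ a branching vertex of $\ttr$ with the cyclic configuration $e_1,f_2,e_2,f_1$ demanded by $\Psi$, where the subpath from $f_2$ to $e_2$ is exactly $\lp$ traversed clockwise. Thus $\Psi$ recovers $\lp$, and since distinct loops yield distinct first-clockwise-vertices $v$, every loop in $\lpe_\partial$ is recovered exactly once, and no spurious loops appear.

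For $\Phi\circ\Psi=\mathrm{id}$, argue branch by branch: feeding the loop collection output by $\Psi$ back into the algorithm, the rule ``follow the current loop clockwise until the target is disconnected, then jump'' is deterministic and depends only on the clockwise order of $\partial_1\Omega$; the sequence of jump edges $f_n$ and picked-up loops $\lp_n$ is therefore forced to match the decomposition of the original branches $\bran_w$ into their $\itop$-arcs. The main obstacle is purely combinatorial: one must verify that at each branching vertex of $\ttr$ the four incident tree edges really do occur in the cyclic order $e_1,f_2,e_2,f_1$ and are paired into exactly two distinct loop visits. This rests on the vertex-disjointness and simplicity of DCNIL loops on $\ddmdlattice$, together with the fact that the modification from $\dmdlattice$ to $\ddmdlattice$ resolves would-be double points into small squares, so that two distinct loops meeting the boundary at the same medial vertex are unambiguously separated. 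Handling this case analysis carefully is the only place where the bijection is not immediate from the definitions.
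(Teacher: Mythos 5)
Your overall plan is the one the paper itself follows: the paper's ``proof'' of Lemma~\ref{lem: tree to loop ensemble} is nothing more than the construction in the preceding paragraph (identify each boundary touching loop with a branching vertex via its extremal boundary point nearest the root, and read the loop off the branch targeted there), so constructing $\Psi$ and checking the compositions is the right architecture, and I will not quibble about the clockwise/counterclockwise conventions, which the paper itself uses inconsistently. But two of your supporting claims are wrong as stated. First, the set of targets $w$ whose branch hops onto $\lp$ at $v$ is \emph{not} the arc between the two boundary visits of $\lp$ flanking $v$: it is the whole boundary arc strictly beyond $v$ inside the gap delimited by the loop the exploration arrives on (the loop whose arc lands at $v$), and in general this strictly contains the boundary support of $\lp$. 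Branches to targets beyond $\lp$'s support but still inside that gap also enter $\lp$ at $v$, ride its $\itop$-arc and hop off at its far extreme, while branches to targets on the root side of $v$ pass through $v$ along the previous loop without ever touching $\lp$. Consequently the split at $v$ is ``take the interior edge $f_2$ onto $\lp$'' versus ``continue along the boundary edge $f_1$ on the loop one arrived on''; your description ``one traversing the $\itop$-arc, the other jumping to the next loop'' does not match the local picture $e_1,f_2,e_2,f_1$ that $\Psi$ is supposed to exploit (also, the stopping edge $f_n$ is a boundary edge of the loop being left, not the inward edge at $v$).

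Second, the two facts that actually carry the lemma are asserted rather than argued. (a) That the branch targeted at $v$ itself, after taking $f_2$, traverses \emph{all} of $\lp$ (top arc, boundary stretches and bottom arcs) and returns to $v$ via $e_2$: this is where the definition of $F_w$ does the work, since $v$ is the extremal boundary point of $\lp$ nearest the root, so none of the boundary edges used by $\lp$ lies in $F_v$ except the final edge $f_{\text{end}}=e_2$ arriving at $v$; without this, $\Psi$ only recovers the top arc, not the loop. (b) Your ``no spurious loops appear,'' i.e.\ that every branching vertex of $\ttr$ is the nearest-to-root boundary vertex of some loop with the stated edge pattern, is the surjectivity half of the paper's correspondence and is exactly the case analysis you defer to the end; without it $\Psi\circ\Phi=\mathrm{id}$ is not established, because $\Psi$ could output extra ``loops'' at divergence vertices not of the prescribed type. (Note that for the lemma only $\Psi\circ\Phi=\mathrm{id}$ is needed; your second composition concerns the characterization of the image and is not required.) The paper is admittedly equally brief on (a) and (b), but it gets the local statements right; your write-up replaces them by the incorrect target-arc claim, so the justification as given does not go through and the deferred case analysis is precisely the content that still has to be supplied.
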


Similar constructions work in the continuous setting. See \cite{Sheffield:2009} for the construction of the SLE$(\kappa,\kappa-6)$
exploration tree and the construction for recovering the loops.

Let us repeat here the second half of Theorem~\ref{thm: main theorem}.

\begin{mainthmB}
The law of $\therndlpe$ is given by the image  of the SLE$(\kappa,\kappa-6)$ exploration tree with $\kappa=16/3$
under the above tree-to-loops mapping.
\end{mainthmB}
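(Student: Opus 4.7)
The plan is to leverage the bijection of Lemma~\ref{lem: tree to loop ensemble} and transfer every statement about the loop ensemble into an equivalent statement about the chordal exploration tree. Part~(a) already supplies the weak limit $\therndlpe$ of $\tilde{\rndlpe}_{\partial,\delta_n}$ together with the necessary tightness; via the loops-to-tree bijection and the regularity results of Section~\ref{sec: regularity of trees and loops}, this should be promoted to weak convergence of the discrete exploration trees to some random limiting tree $\therndttr$. It then remains to identify $\therndttr$ with the SLE$(\kappa,\kappa-6)$ exploration tree for $\kappa=16/3$, after which applying the continuous tree-to-loops map on both sides yields the claim.

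The first step is to fix an ordered finite target set $w_1,\ldots,w_k$ on $\partial\disc$ and to establish joint convergence of the branches $(\bran_{w_1},\ldots,\bran_{w_k})$ to the branches of the claimed SLE tree. By the target independence proposition above and the analogous property of SLE$(\kappa,\kappa-6)$ from \cite{Sheffield:2009}, two branches agree until they disconnect from the shared target region and then become conditionally independent; this reduces the joint problem to the single-branch one. For a single branch $\bran_w$ the identification is carried out using the holomorphic observable of Section~\ref{sec: observable}: one verifies that, when stopped along the growing path, the observable is a discrete martingale, checks convergence of the observable to an explicit holomorphic function solving a boundary value problem in the slit domain, and then appeals to the martingale characterization of Section~\ref{sec: characterization} to conclude that any subsequential limit of the driving processes coincides with that of chordal SLE$(16/3,16/3-6)$ started at $\vroot$ targeted at $w$. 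This pins down both $\kappa$ and $\rho$.

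Once every finite family of branches converges jointly to its SLE$(\kappa,\kappa-6)$ counterpart, tightness from Section~\ref{sec: regularity of trees and loops} allows one to upgrade to convergence of the full tree viewed as a random compact subset of $\overline{\disc}$ equipped with Hausdorff topology, since branches targeted at a countable dense subset of $\partial \disc$ determine the whole tree up to events of probability zero. The final step is to observe that the tree-to-loops map of Section~\ref{ssec: exploration tree definition} is continuous on a set of trees carrying full probability mass (those whose branching structure is non-degenerate in a quantitative sense controlled by the a priori crossing estimates), so that the continuous mapping theorem transfers the convergence from trees back to loop ensembles, identifying $\therndlpe$ with the image of the SLE$(\kappa,\kappa-6)$ exploration tree under this map.

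The main obstacle will be the interplay between the fractal nature of the boundary and the infinite number of boundary touching loops: tightness and continuity of the tree-to-loops map both rely on uniform harmonic-measure and crossing estimates near $\partial \domain$, which is precisely what motivates the smoothness and fjord-free assumptions on the approximating domains mentioned in the remark after the theorem. Beyond this technical point, the substantive content is the single-branch SLE$(\kappa,\kappa-6)$ convergence; target independence and the bijection between trees and loop ensembles then do most of the combinatorial work for free, so the overall argument is largely a packaging of the martingale observable convergence with abstract tree-to-loops continuity.
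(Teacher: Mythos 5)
Your proposal follows essentially the same route as the paper: single-branch identification via the martingale observable and the martingale problem of Section~\ref{sec: characterization}, target independence to handle finite subtrees, tightness plus uniform finite-subtree approximation to capture the full tree, and finally transfer back to the loop ensemble through the tree--loops correspondence. The only cosmetic difference is your last step: where you invoke continuity of the tree-to-loops map on a full-measure set and the continuous mapping theorem, the paper instead proves the uniform approximation of the loops themselves by finite subtrees together with the a priori loop-ensemble estimates (distinguishability, non-pinching, boundary support) and the reconstruction-in-the-limit statement, which is precisely the quantitative content your continuity claim would require.
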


%\added{%%%
The proof of Theorem~\ref{thm: main theorem}
is given in Section~\ref{sec: proof main theorem}.
%}%

% !TEX root = FK_Ising_tree_full_scaling_limit.tex

\section{Tightness of trees and loop collections}\label{sec: regularity of trees and loops}

In this section, we establish a priori bounds for trees and loop ensembles. The setting
is relatively general, although we only apply it here to the FK Ising exploration tree
of the boundary touching loops.

\subsection{A probability bound on multiple crossings by the tree}\label{ssec: multip crossings tree}

%\added{%%%
An approach to establish
compactness properties of sequences of probability measures based on
probability bounds of multiple crossings of annuli by random curves was set up in 
\cite{Kemppainen:2012vm}
extending the results of \cite{Aizenman:1999dt}.
Below we use that type of result for the FK Ising exploration tree.
We start from the essential definitions.
%}

%%%
%\added{%
For any fixed measurable space $(\mathcal{S},\F)$, we
%}%
call a random variable $X$ tight over 
%\added{%%%
a collection $\Sigma_0$
of probability measures $\P \in \Sigma_0$
on the space $(\mathcal{S},\F)$,
%}%
if for each $\eps>0$
there exists a constant $M>0$ such that $\P(|X|<M)>1-\eps$
for all $\P$.

A crossing of an annulus $A(z_0,r,R)=\{z \in \C \,:\, r<|z-z_0|<R\}$
is a closed segment of a curve that intersects both connected components of $\C \setminus A(z_0,r,R)$
and a minimal crossing doesn't contain any genuine subcrossings.

Recall the general setup of \cite{Kemppainen:2012vm}: we are given a collection $(\varconfmap,\P) \in \Sigma$ 
where the conformal map $\varconfmap$ 
contains also the information about its domain of definition 
$(\vardomain,\vroot,\intpoint)=(\vardomain(\varconfmap),\vroot(\varconfmap),\intpoint(\varconfmap))$ 
through the requirements
\begin{equation}\label{eq: def conf map setting}
\varconfmap^{-1}(\disc)=\vardomain, \qquad \varconfmap(\vroot)=-1 \qquad \text{and} \qquad \varconfmap(\intpoint)=0
\end{equation}
and $\P$ is the probability law of
FK Ising model on the discrete domain $\vardomain$ and in particular gives the distribution of
the FK Ising exploration tree.
Given the collection $\Sigma$ of pairs $(\varconfmap,\P)$ 
we define the collection $\Sigma_\disc = \{\varconfmap\P \,:\, (\varconfmap,\P)\in\Sigma \}$
where $\varconfmap\P$ is the pushforward measure defined by $(\varconfmap\P) (E) = \P(\varconfmap^{-1}(E))$. 

\begin{theorem}\label{thm: aizenman--burchard bounds}
The following claim holds for the collection of the probability laws of FK Ising exploration trees
\begin{itemize}
\item 
%\added{%%%%
for any $\Delta>0$, there exists $n \in \N$ and $K>0$ such that 
\begin{equation}\label{ie: annulus multiple crossing}
\P( \text{at least $n$ disjoint segments of $\rndttr$ cross } A(z_0,r,R))
  \leq K \left( \frac{r}{R} \right)^\Delta
\end{equation}
for all $\P \in \Sigma_\disc$ and for all $z_0 \in \C$ and $R>r>0$.
%}
\end{itemize}
and there exist positive numbers $\alpha, \alpha'>0$ such that the following claims hold
\begin{itemize}
\item 
if for each $r>0$, $M_r$ is the minimum of all $m$ such that each 
$\rndbran \in \rndttr$ can be split into $m$
segments of diameter less or equal to $r$, then
there exists a random variable $K(\rndttr)$ such that
$K$ is a tight random variable for the family $\Sigma_\disc$
and
\begin{equation*}
M_r \leq K(\rndttr) \, r^{-\frac{1}{\alpha}}
\end{equation*}
for all $r>0$.
\item All branches of $\rndttr$ can be jointly parametrized so that they are
all $\alpha'$-H\"older continuous and the H\"older norm can be bounded
by a random variable $K'(\rndttr)$ such that $K'$  
is a tight random variable for the family $\Sigma_\disc$.
\end{itemize}
\end{theorem}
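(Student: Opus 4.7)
The plan is to reduce the two tightness statements to the single multiple-crossing bound \eqref{ie: annulus multiple crossing} and to derive the latter from the general framework of \cite{Aizenman:1999dt,Kemppainen:2012vm}. That framework shows that a uniform power-law bound on the number of annulus crossings by a random curve family implies both tightness of the minimal number $M_r$ of $r$-pieces needed to cover it and existence of an $\alpha'$-H\"older parametrization, and in turn that bound itself follows from a uniform single-annulus separation condition (roughly: with positive probability, independent of the mesh, a sub-annulus is \emph{not} crossed).

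For \eqref{ie: annulus multiple crossing}, the first observation is that every disjoint crossing of $A(z_0,r,R)$ by $\rndttr$ is contained in some loop of $\rndlpe$ that also crosses the annulus, while each simple loop contributes only a bounded number of disjoint subcrossings to any fixed branch. Hence $n$ disjoint tree crossings force either many distinct loops of $\rndlpe$ to cross $A(z_0,r,R)$ or a long alternating sequence of radial FK/dual arms between $\partial B(z_0,r)$ and $\partial B(z_0,R)$; either alternative is controlled, through classical FK Ising RSW estimates together with iteration over dyadic sub-annuli, by a quantity of the form $K(r/R)^\Delta$ with constants uniform in the mesh and in the domain, which gives \eqref{ie: annulus multiple crossing}.

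Once \eqref{ie: annulus multiple crossing} is in hand, the tightness of $M_r \cdot r^{1/\alpha}$ and the joint H\"older parametrization follow from the Aizenman--Burchard pigeonhole argument. If a single branch required more than $K r^{-1/\alpha}$ segments of diameter at most $r$, then pigeonholing over dyadic scales would produce an annulus $A(z_0,r',R')$ with $n$ disjoint crossings by $\rndttr$; summing \eqref{ie: annulus multiple crossing} over a countable dyadic collection via Borel--Cantelli provides a tight random variable $K(\rndttr)$. For the H\"older statement, parametrizing each branch by a rescaling of its discrete arclength yields a uniform modulus of continuity on individual branches, and target independence lets one glue these parametrizations consistently across branching points on $\partial_1\domain$ without degrading the H\"older norm.

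The main obstacle is establishing the single-annulus separation condition uniformly, in particular when $A(z_0,r,R)$ is near or straddles the boundary of $\domain_{\delta_n}$. In the bulk, critical FK Ising enjoys uniform RSW and mixing estimates that are by now standard. At the boundary, the same estimates hold with constants that are uniform only under regularity assumptions on the approximating domains, which is precisely why the remark following Theorem~\ref{thm: main theorem} restricts to smooth $\domain$ with piecewise approximately linear approximations and excludes long fjords. Under those restrictions, boundary crossing estimates apply with constants depending only on the domain class, and the entire argument proceeds uniformly over $\Sigma_\disc$.
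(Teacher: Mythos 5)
Your overall architecture matches the paper's: prove the multiple-crossing bound \eqref{ie: annulus multiple crossing} and then invoke the Aizenman--Burchard machinery for the covering-number and H\"older statements, which is exactly what the paper does. However, there is a genuine gap in how you treat the crossing bound itself. The statement is asserted for all $\P \in \Sigma_\disc$, i.e.\ for the \emph{conformal images} of the trees in the unit disc, and the annulus $A(z_0,r,R)$ is a round annulus in $\disc$, not in the lattice domain. Your argument runs entirely in the approximating domain $\domain_{\delta_n}$ (your discussion of annuli ``near or straddling the boundary of $\domain_{\delta_n}$'' makes this explicit), where RSW-type estimates are available, but it never transfers round annuli in $\disc$ back to annuli in the lattice domain. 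The paper's proof does precisely this via the distortion lemmas of Appendix~\ref{sec: appendix annulus} (Lemmas~\ref{lem: dist ann bulk} and \ref{lem: dist ann boundary}): one splits $A(z_0,r,R)\subset\disc$ into roughly $\log(R/r)$ concentric sub-annuli of fixed modulus, and each of these pulls back under $\varconfmap^{-1}$ to contain a genuine annulus $A(z_m,r_m,2r_m)$ in the domain whose crossings are forced by crossings of the sub-annulus; the power law in $r/R$ then comes from multiplying the per-scale estimate over these $\log(R/r)$ scales. Without this step your bound is for the wrong family of measures, and the restriction to regular domains you invoke is beside the point here (that restriction concerns harmonic-measure estimates for the observable, not the tightness argument, which works for arbitrary simply connected domains through the distortion lemmas).

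Two further points. First, ``classical FK Ising RSW'' is not quite enough: bounding the probability of many disjoint crossings requires crossing estimates that are uniform over arbitrary boundary conditions (because of the conditioning created by previously found arms and by the explored tree with its mixed wired/free arcs); this is why the paper appeals to the stronger results of \cite{Chelkak:2016jw} (Lemma~5.7) rather than to older RSW statements. Second, your claim that each simple loop contributes only a boundedly many disjoint subcrossings to a branch is false (a single simple loop can cross an annulus arbitrarily often) and also unnecessary: the correct reduction, as in Lemmas~\ref{lem: crossings and arms bulk} and \ref{lem: crossings and arms boundary}, is that every tree crossing follows a single loop and hence carries an open arm on one side and a dual-open arm on the other, so $2n$ disjoint tree crossings force about $n$ dual-open arms; in the boundary case one must additionally account for the at most two crossings containing branching points, which is why the count there drops to $n-1$.
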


Each of the claims have their own applications
below although they are closely related, see \cite{Aizenman:1999dt}.

\begin{proof}
We need to verify the first claim
and the two other claims follow from it, by results of \cite{Aizenman:1999dt};
%\added{%%%
more specifically the second claim follows from the reformulated statement
presented in the beginning of the proof of Theorem~1.1 of \cite{Aizenman:1999dt}
and the third claim follows from Theorem~1.1 of \cite{Aizenman:1999dt}.
%}
Notice that we need to verify 
%\added{%%%
the inequality~\eqref{ie: annulus multiple crossing}
for $\Delta = \Delta_{n_i}$ and $n=n_i$ where $n_i$ is an increasing sequence of natural numbers
and $\Delta_{n_i}$ is a sequence of positive real numbers tending to infinity, since 
the left-hand side of the inequality in the first claim is non-increasing in $n$.
%}

Let $A=A(z_0,r,R)$ be annulus, $z_0 \in \disc$. Since
either $B(z_0,\sqrt{rR}) \subset \disc$ or $B(z_0,\sqrt{rR}) \cap \partial \disc \neq \emptyset$,
it holds that we can choose $r_1= \sqrt{rR}$, $R_1=R$ or $r_1=r$, $R_1= \sqrt{rR}$ 
such that for $A_1 = A(z_0,r_1,R_1)$, either $A_1 \subset \disc$ or 
$B(z_0,r_1) \cap \partial \disc \neq \emptyset$.
For that $A_1$ and for $C>1$ big enough, apply the estimate of conformal distortion given by either
Lemma~\ref{lem: dist ann bulk} or Lemma~\ref{lem: dist ann boundary}, depending on the case,
%\added{%%%
to show that for any $m=1,2,\ldots,\lfloor \log \ffrac{R}{r} \rfloor$,
there exist $\tilde A_m = A(z_m,r_m,2 r_m)$ such that
the conformal image
of any crossing of $A(z_0,C^{m-1} r, C^m r)$ under $\varconfmap^{-1}$ is a crossing of $A_m$.
%}

By Lemmas~\ref{lem: crossings and arms bulk} and \ref{lem: crossings and arms boundary}
in Appendix~\ref{sec: appendix a priori}
and the
results of \cite{Chelkak:2016jw} (in particular, Lemma~5.7)
applied to crossings of $\tilde A_m$,
it follows that for each $\eps>0$ there is $n$ such that 
$\P(\text{at least $n$ disjoint segments of $\therndttr$ cross $\tilde A_m$} ) < \eps$.
Thus 
\eqref{ie: annulus multiple crossing} holds for $n$ and constants
$K = \eps^{-1}$ and $\Delta = \log \frac{1}{\eps}$,
Here the constant $\Delta$ tends to $\infty$ as $n\to \infty$ (i.e. as $\eps$ tends to zero),
and the estimates are uniform over all $\P \in \Sigma_\disc$ and
annuli $A(z_0,r,R)$ with $R>r$.
\end{proof}

\subsection{The crossing property of trees}\label{ssec: tree crossing}

Let $\gamma_k$, $k=0,\ldots,N-1$, is the  
collection $\rndttr = ( \rndbran_x)$ in a (random) order
and suppose that the random curves $\gamma_k$ are each parametrized by $[0,1]$. 
The chosen permutation specifies
\emph{the order of exploration} of the curves. More specifically, set 
$\underline{\gamma}(t) = \gamma_k(t-k)$ when $t \in [k,k+1)$.
%\added{%%%
We call $\underline{\gamma}$ an \emph{explored} collection of branches.
%}

For a given domain $\vardomain$ and for a given simple (random) curve $\gamma$ on $\vardomain$, we set
$\vardomain_\tau = \vardomain \setminus \gamma[0,\tau]$ for each (random) time $\tau$. 
Similarly, for a given domain $\vardomain$ and for the given finite 
collection of curves $\gamma_k$ on $\vardomain$, we set
$\underline{\vardomain}_\tau = \vardomain \setminus \underline{\gamma}[0,\tau]$ for each (random) time $\tau$. 

We call $\vardomain_\tau$ or $\underline{\vardomain}_\tau$  the domain at time
$\tau$.

The following definition generalizes Definition~2.3 from \cite{Kemppainen:2012vm}.
%\added{%%%
This definition is needed in order to recognize those
crossing events which have low probability.
%}

\begin{definition}
%\added{%%%
For a given domain $(\vardomain,\vroot)$ and for a given order of exploration 
(which defines $\underline{\gamma}$) 
of curves $\gamma_x$,
%} 
%%%
$x \in V_\text{target}$, where each curve 
$\gamma_x$ 
is contained in $\overline{\vardomain}$, starting from $\vroot$ and ending at a point $x$ in the set $V_\text{target}$,  
define for any annulus $A = A(z_0,r,R)$, for every (random) time $\tau \in [0,N]$ and $x \in V_\text{target}$, 
$A^{\text{u},x}_\tau = \emptyset$ if $\partial B(z_0,r) \cap \partial \underline{\vardomain}_\tau = \emptyset$ and
\begin{equation}
A^{\text{u},x}_\tau = \left\{ z \in \underline{\vardomain}_\tau \cap A \,:\, 
   \begin{gathered}
   \text{the connected component of $z$ in $\underline{\vardomain}_\tau \cap A$} \\
   \text{doesn't disconnect 
   %\added{%%%
   $\underline{\gamma}(\tau)$
   %}%
   from $x$ in $\underline{\vardomain}_\tau$}
   \end{gathered}
   \right\}
\end{equation}
otherwise. Define also
\begin{equation}
A^{\text{f},x}_\tau = \left\{ z \in \vardomain_\tau \cap A \,:\, 
   \begin{gathered}
   \text{the connected component of $z$ in $\underline{\vardomain}_\tau \cap A$} \\
   \text{is crossed by any path connecting 
   %\added{%%%
   $\underline{\gamma}(\tau)$
   %}%
   to $x$ in $\underline{\vardomain}_\tau$}
   \end{gathered}
   \right\}
\end{equation}
and set $A^{\text{u}}_\tau = \bigcap_{x \in V_\text{target}}A^{\text{u},x}_\tau$
and $A^{\text{f}}_\tau = \bigcup_{x \in V_\text{target}} A^{\text{f},x}_\tau $.
We say that $A^{\text{u},x}_\tau$ is \emph{avoidable} for $\gamma_x$ and $A^{\text{u}}_\tau$
is \emph{avoidable for all} (branches).
We say that $A^{\text{f},x}_\tau$ is \emph{unavoidable} for $\gamma_x$ and $A^{\text{f}}_\tau$
is \emph{unavoidable for at least one} (branch).
Here and in what follows we only consider allowed lattice paths when we talk about
connectedness.
\end{definition}

\begin{remark}
Note that $A^{\text{u}}_\tau \cap A^{\text{f}}_\tau = \emptyset$.
This follows from $A^{\text{u},x}_\tau \cap A^{\text{f},x}_\tau = \emptyset$
which holds by definition.
\end{remark}

%\added{%%%
Recall that $\partial \Omega$ is the boundary of $\Omega$ and $\partial_1 \Omega$ 
is the internal boundary of $\Omega$ which is the outermost of all (simple) lattice paths
are contained in $\Omega$.
%}
A point on $\partial_1 \Omega$ is a \emph{branching point} of the tree $\rndttr$
if it is the last common point of two branches. In that case the edge on the primal lattice
passing through the point has to be open in the random cluster configuration.
See also Figure~\ref{fig: branching and open paths}.

\begin{figure}[tbh]
\centering
	\includegraphics[scale=.35]
{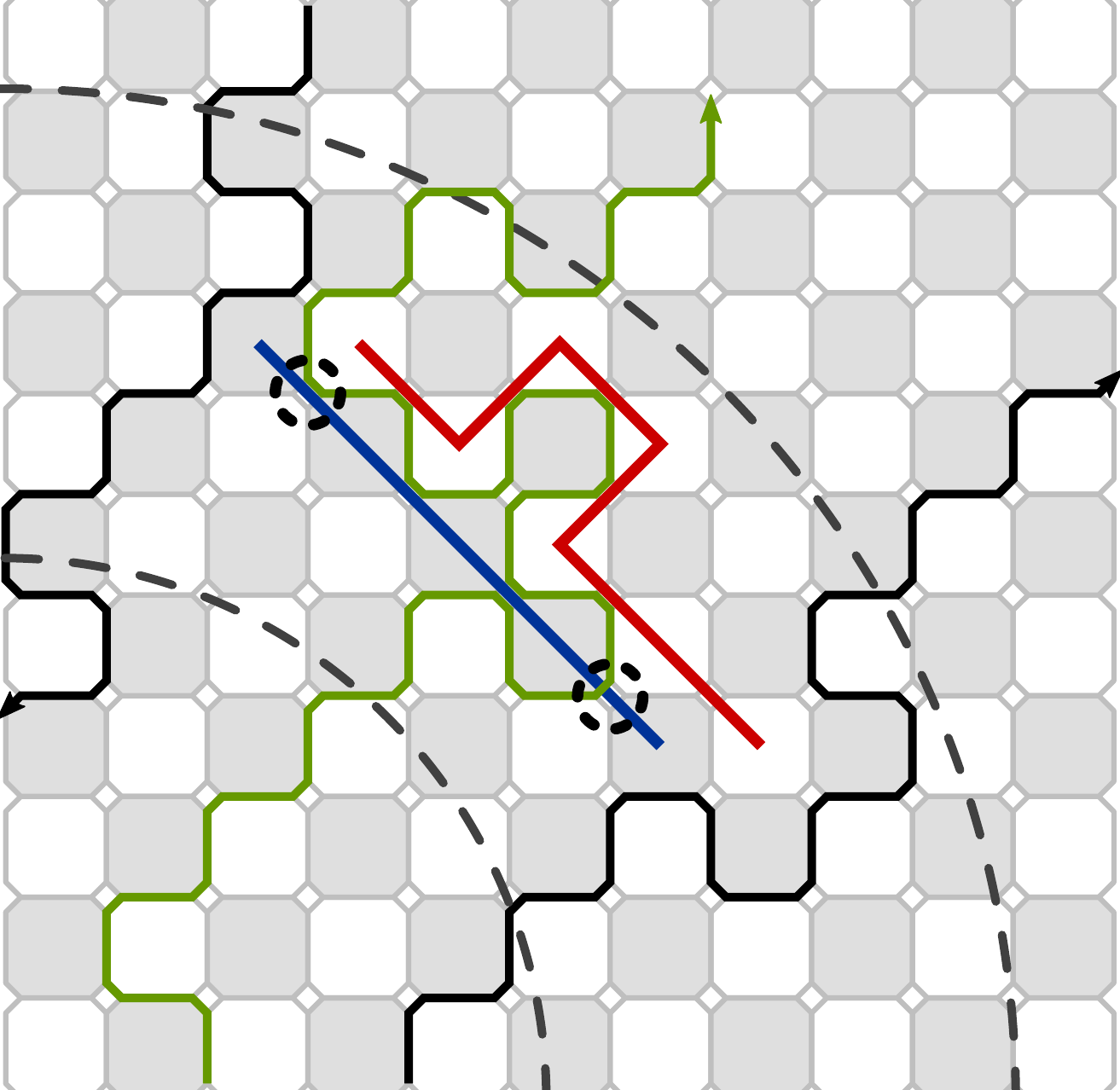} 
\caption{The region between the large dashed circular arcs is a quarter of an annulus.
A crossing of an annulus 
%\added{%%%
by the exploration path
%}%
is drawn in green and the boundaries of the domain $\underline{\vardomain}_\tau$ in black and
they are oriented according to the orientation of the medial lattice.
As indicated by the figure if there are transversal open and dual-open paths in the annulus,
then the crossing has to go near the left- and right-hand boundaries of the annular sector.
%\added{%%%
Notice that at the branching point on the right-hand side the drawn branch jumps
from a loop to another one,
%}% 
%\added{%%%
whereas on the left-hand side it keeps following the loop
explored at that time.
%}%%%
}
\label{fig: branching and open paths}
\end{figure}

Next we will write down 
%\added{%%%
an estimate
%}% 
in the form of a hypothesis analogous the ones presented in Section~2
of \cite{Kemppainen:2012vm}. 
%\added{%%%
The estimate is sufficient for the desired compactness properties
of the exploration tree.
%}%
In fact, we will present two equivalent conditions here. As we will later
see that conformal invariance will hold for this type of conditions, again analogously to  \cite{Kemppainen:2012vm}.
These conditions will be verified for the FK Ising exploration tree below.

\begin{condition}\label{cond: annulus}
Let $\Sigma$ be as above.
If there exists $C >1$ such that
for any $(\phi,\P) \in \Sigma$,
for any stopping time $0 \leq \tau \leq N$ and for any annulus $A=A(z_0,r,R)$ where $0 < C \, r \leq R$, 
it holds that
\begin{equation}\label{eq: cond annulus}
\P \left( \left. \;
      \begin{gathered}
      \underline{\gamma}[\tau,N] \textrm{ makes a crossing of } A \\
      \text{ which is contained in } \overline{A^u_\tau} \\
      \text{\emph{or}} \\ 
      \text{which is contained in } \overline{A^f_\tau} 
      \text{ and the first minimal crossing}\\
      \text{doesn't have branching points on both sides} 
      \end{gathered}      
   \;\,\right|\, \F_\tau \right)  < \frac{1}{2}  . 
\end{equation}
then the family $\Sigma$
is said to satisfy a \emph{geometric joint unforced--forced crossing bound}
Call the event above $E^{\text{u,f}}$.
\end{condition}

See Figure~\ref{sfig: f crossing} for more information about different types of  branching points.

\begin{condition}\label{cond: annulus exp}
The family $\Sigma$ 
is said to satisfy a \emph{geometric joint unforced--forced crossing power-law bound}
if there exist $K >0$ and $\Delta>0$ such that 
for any $(\phi,\P) \in \Sigma$,
for any stopping time $0 \leq \tau \leq N$ and for any annulus $A=A(z_0,r,R)$ where $0 < r \leq R$,
\begin{equation}\label{eq: cond annulus exp}
\text{LHS} \leq K \left( \frac{r}{R} \right)^\Delta  . 
\end{equation}
Here LHS is the left-hand side of \eqref{eq: cond annulus}.
\end{condition}

\begin{figure}[tb]
\newcommand*{\myscale}{0.65}
\centering
\subfigure[Crossing with branching points on both sides] 
{
	\label{sfig: f crossing a}
	\includegraphics[scale=\myscale]
{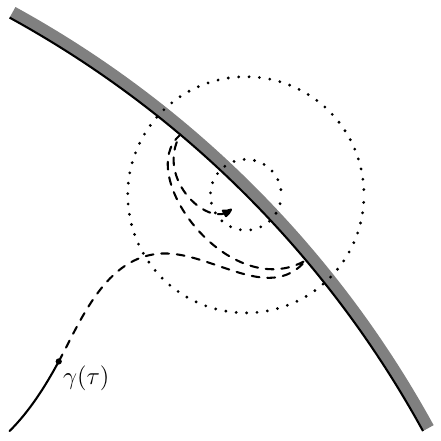}
} 
\hspace{0.5cm}
\subfigure[Crossing without any branching points]
{
	\label{sfig: f crossing b}
	\includegraphics[scale=\myscale]
{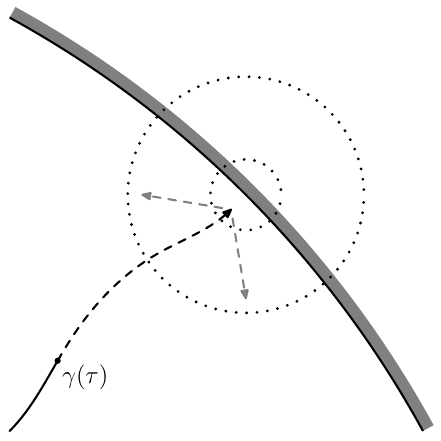}
}
\subfigure[Crossing with branching points only on its left]
{
	\label{sfig: f crossing c}
	\includegraphics[scale=\myscale]
{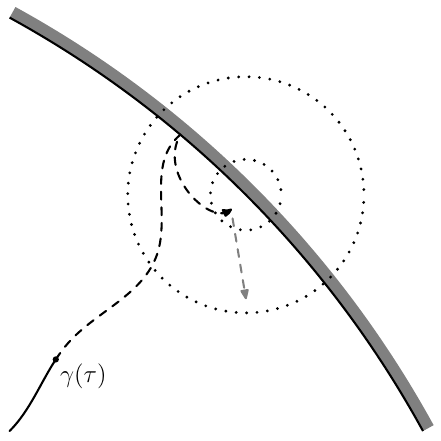}
} 
\hspace{0.5cm}
\subfigure[Crossing with branching points only on its right]
{
	\label{sfig: f crossing d}
	\includegraphics[scale=\myscale]
{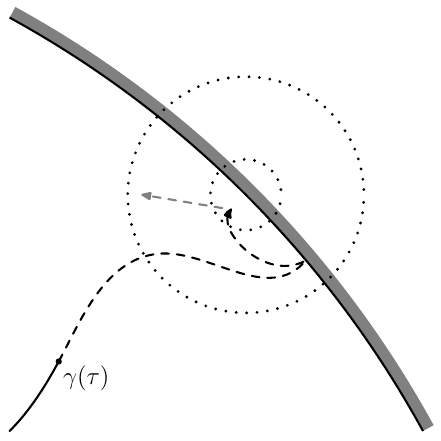}
}
\caption{Condition~\ref{cond: annulus} or \ref{cond: annulus exp} imply that the crossing events of any of the types
illustrated in the figures \subref{sfig: f crossing b}, \subref{sfig: f crossing c} and \subref{sfig: f crossing d}
has small probability. The longer black arrow is the crossing event considered in \eqref{eq: cond annulus} and
the shorter gray arrows are the crossings of the annulus that are still possible afterwards.}
\label{sfig: f crossing}
\end{figure}

We want to use Condition~\ref{cond: annulus} or equivalently \ref{cond: annulus exp} as a hypothesis for theorems.
We start by verifying them for the  critical FK Ising model exploration tree.

\begin{theorem}\label{thm: FK Ising and condition}
If $\Sigma$ is the collection of pairs $(\phi,\P)$ where $\phi$ 
%\added{%%%
satisfies the properties given in 
\eqref{eq: def conf map setting}
%}%
and also that 
%\added{%%%
its domain of definition $\Omega(\phi)$
%}% 
is a discrete domain with some lattice mesh,
and $\P$ is the law of the  critical FK Ising model exploration tree $\rndttr$ on $U(\phi)$,
then $\Sigma$ satisfies Conditions~\ref{cond: annulus} and \ref{cond: annulus exp}.
\end{theorem}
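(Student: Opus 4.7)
The plan is to establish Condition~\ref{cond: annulus exp} directly, since Condition~\ref{cond: annulus} follows from it by the choice $C=(2K)^{1/\Delta}$. I would first condition on $\F_\tau$: by the domain Markov property of the FK random cluster model, given the explored past, the restriction of the configuration to $\underline{\vardomain}_\tau$ is a critical FK Ising model with mixed wired/free boundary conditions induced by the explored portion of the tree. The crucial input, used already in the proof of Theorem~\ref{thm: aizenman--burchard bounds}, is that the FK Ising RSW/arm estimates of \cite{Chelkak:2016jw} are uniform in such boundary conditions; this uniformity is what the whole argument leverages.

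The heart of the argument is a geometric reduction of the event $E^{\text{u,f}}$ in \eqref{eq: cond annulus} to a pair of alternating primal/dual crossings of $A=A(z_0,r,R)$. I would split along the ``or'' in the definition of $E^{\text{u,f}}$. In the first sub-case, the crossing of $A$ is contained in $\overline{A^u_\tau}$; since $A^u_\tau$ is avoidable for every remaining target, this crossing must be made by a segment of a single loop of the loop representation making an excursion through $A$ inside one component of $\underline{\vardomain}_\tau \cap A$. Because any FK loop separates an open primal cluster on one side from a dual-open cluster on the other, such a loop crossing forces the coexistence of an open primal and a dual-open crossing of $A$ in that component. In the second sub-case the first minimal crossing lies in $\overline{A^f_\tau}$ and has no branching point on, say, its right side; recalling from Figure~\ref{fig: branching and open paths} that branching points are precisely those points of $\partial_1 \underline{\vardomain}_\tau$ through which an open primal edge passes, the absence of branching points on the right forces a dual-open arc running along that side of the crossing inside $A$, which combined with the dual-flank of the crossing itself again yields an alternating pair of crossings of a sub-annulus of $A$.

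Given this reduction, \eqref{eq: cond annulus} (in the stronger form of Condition~\ref{cond: annulus exp}) follows from the classical power-law bound
\[
\P\bigl(\text{an open primal and a dual-open crossing of $A$} \bigm| \F_\tau\bigr) \leq K\,(r/R)^\Delta,
\]
which is uniform in boundary conditions and lattice mesh. As in the proof of Theorem~\ref{thm: aizenman--burchard bounds} this is obtained by decomposing $A$ into $O(\log(R/r))$ concentric annuli of bounded modulus and combining the bounded-modulus estimates via quasi-multiplicativity of FK Ising alternating-arm probabilities.

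The main obstacle I expect is the topological bookkeeping in the second sub-case: the boundary $\partial_1 \underline{\vardomain}_\tau$ can be quite rough and the endpoints of the minimal crossing may sit on or very close to it, so the implication ``no branching points on one side of the crossing $\Rightarrow$ a dual-open arc runs along that side inside $A$'' must be checked against the local configurations in Figures~\ref{sfig: f crossing a}--\ref{sfig: f crossing d}. Once this topological step is carried out cleanly, the remaining probabilistic estimate is a routine application of the uniform-in-boundary FK Ising RSW machinery.
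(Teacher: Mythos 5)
The weak point is not the topological bookkeeping you flag at the end, but the probabilistic estimate you call routine. The bound you invoke,
\[
\P\bigl(\text{an open primal and a dual-open crossing of } A \bigm| \F_\tau\bigr) \leq K\,(r/R)^\Delta ,
\]
is false with the claimed uniformity. Conditionally on $\F_\tau$, the boundary of $\underline{\vardomain}_\tau$ contains the already-explored part of the tree, whose two flanks are conditioned to be open on one side and dual-open on the other; hence for any annulus centred on (or near) the explored curve, or at the tip $\gamma(\tau)$, or at a point where the induced boundary conditions change colour, an alternating pair of crossings of $A$ is present with conditional probability one (or at least bounded below), no matter how small $r/R$ is. This is precisely why Conditions~\ref{cond: annulus} and \ref{cond: annulus exp} are formulated with the avoidable/unavoidable dichotomy and the branching-point clause: a \emph{forced} crossing is not unlikely at all, and only unforced crossings, or forced crossings lacking branching points on a side, are. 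Your reduction maps the event $E^{\text{u,f}}$ into the unrestricted two-arm event and thereby discards exactly the restriction (which component of $\underline{\vardomain}_\tau \cap A$ the arms live in, relative to the induced wired/free arcs) that makes the statement true, so the final inequality you need is unavailable; and if you keep the restriction, the estimate is no longer a ``classical'' arm bound obtainable by quasi-multiplicativity, but essentially the statement to be proved.

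The paper argues in the opposite direction, and that direction is the one that works: following Section~4.1 of \cite{Kemppainen:2012vm}, one lower-bounds, uniformly in the configuration revealed by $\F_\tau$, the probability of \emph{blocking} primal/dual crossings in the annular sectors of Figure~\ref{fig: crossing annuli boundary conditions} (the dashed paths), whose occurrence prevents an unforced crossing and forces branching points on both sides of any forced crossing (cf.\ Figure~\ref{fig: branching and open paths}). The uniformity in the unknown mixed boundary conditions is handled by FKG/monotonicity, pushing the boundary conditions on the annulus arcs to the least favourable colour, after which Theorem~1.1 of \cite{Chelkak:2016jw} (RSW for topological quadrilaterals, uniform in boundary conditions) gives a positive lower bound. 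This yields Condition~\ref{cond: annulus} for $R/r \geq C$, and the power-law form of Condition~\ref{cond: annulus exp} then follows by iterating this constant bound over $O(\log(R/r))$ concentric annuli, not from an upper bound on alternating arm events. If you want to salvage your write-up, replace the two-arm upper bound by this blocking/lower-bound argument; the topological step you describe is then the correct one to carry out, but in the service of showing that the blocking paths exclude $E^{\text{u,f}}$.
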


\begin{proof}
%\added{%%%%
The theorem can be proved in the same way as the result that a single interface in FK Ising model 
satisfies a similar condition, which was presented in Section~4.1 of \cite{Kemppainen:2012vm}. 
However, stronger crossing estimates are needed for the exploration tree
compared to a single interface.
Luckily such estimates were established in
Theorem~1.1 of \cite{Chelkak:2016jw}. 
The full argument goes as follows
\begin{itemize}
\item Similarly as in \cite{Kemppainen:2012vm}, 
we try to bound uniformly from above the probability of crossings by the interface in an annular sector.
This bound can achieved by given a uniform lower bound for open or dual open paths of edges
in the random cluster model in the transversal direction.
By symmetry we can suppose that we consider open crossings. 
\item By a similar arguments as in Section~4.1 of \cite{Kemppainen:2012vm},
we can use FKG inequality to reduce it to a question of open crossing of 
a topological quadrilateral. See also Figures~12 and 13 in  \cite{Kemppainen:2012vm}.
We can move the wired boundary where the crossing starts and
introduce free boundary along the two sides which are parallel to the possible open crossing. 
However we
cannot move the free boundary or replace it by wired boundary if the boundary condition
at the endpoint of the possible open crossing is indeed free. Thus we need to
consider a general topological quadrilateral and not just regular one (with 
an archetypical shape), which was enough in \cite{Kemppainen:2012vm}. 
\item As indicated by considerations in Figure~\ref{sfig: crossing annuli boundary conditions b},
we end up to  wired-free-free-free 
or wired-free-wired-free
boundary conditions after
the FKG transformation. (That is, we introduce new boundary along the boundaries
of the annulus of the opposite ``color'' as the dashed crossing of the quadrilateral,
in the sense that if we consider open crossing the new boundary is free and
if the crossing crossing is dual open the new boundary is wired. Then we use the 
duality as we stated above.)
\item Then we use the crossing estimates of \cite{Chelkak:2016jw} to reduce
an uniform lower bound. By Theorem~1.1 of \cite{Chelkak:2016jw},
the lower bound (as well as the upper bound) are uniform and depend only on
the extremal length of the topological quadrilateral.
\end{itemize}
The upper bound for existence
of a curve can improved to the form $C \left( \frac{r}{R} \right)^\alpha$
using a similar argument as Proposition~2.6 of \cite{Kemppainen:2012vm}
by considering a disjoint set of concentric annuli where an upper
bound of the form $1-\eps$ holds.
%}%%%%
\end{proof}

\begin{figure}[tb]
\centering
\subfigure[The different types of locations of annuli needed to be considered in Theorem~\ref{thm: FK Ising and condition}.] 
{
	\label{sfig: crossing annuli boundary conditions a}
	\includegraphics[scale=1.3]
{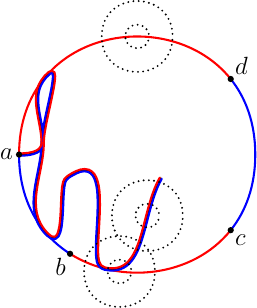}
} 
\hspace{0.5cm}
\subfigure[For each annulus type,
we need to establish a probability lower bound on the crossing events of connected random cluster paths
indicated by the dashed lines to get
the bound of Condition~\ref{cond: annulus exp}.]
{
	\label{sfig: crossing annuli boundary conditions b}
\parbox[b]{.5\textwidth}{\centering%
	\includegraphics[scale=.8]
{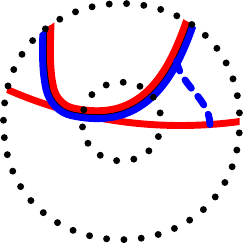}\hspace{0.5cm}
	\includegraphics[scale=.8]
{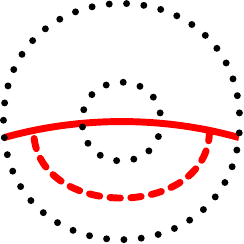}\\
	\includegraphics[scale=.8]
{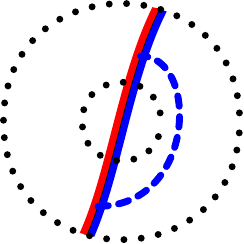}
} }
\caption{Crossing estimates needed for Theorem~\ref{thm: FK Ising and condition}. The setup with four marked points is the
same as in Figure~\ref{fig: domain in general case} in Section~\ref{sec: observable}.}
\label{fig: crossing annuli boundary conditions}
\end{figure}

As shown in 
%\cite[\added{Proposition 2.6}]{Kemppainen:2012vm}, 
\cite[Proposition 2.6]{Kemppainen:2012vm},
this type of bounds behave well under conformal maps. We have uniform
control on how the constants in Conditions~\ref{cond: annulus} and \ref{cond: annulus exp} change
if we transform the random objects conformally from one domain to another.

Given a collection $\Sigma$ of pairs $(\phi,\P)$ 
we define the collection $\Sigma_\disc = \{\phi\P \,:\, (\phi,\P)\in\Sigma \}$
where $\phi\P$ is the pushforward measure defined by $(\phi\P) (E) = \P(\phi^{-1}(E))$. 

\begin{theorem}\label{thm: FK Ising and condition conf}
If $\Sigma$ is as in Theorem~\ref{thm: FK Ising and condition},
then $\Sigma_\disc$ 
satisfies Conditions~\ref{cond: annulus} and \ref{cond: annulus exp}.
\end{theorem}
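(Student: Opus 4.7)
The plan is to reduce Conditions~\ref{cond: annulus} and \ref{cond: annulus exp} for $\Sigma_\disc$ to the corresponding conditions on $\Sigma$, which were verified in Theorem~\ref{thm: FK Ising and condition}, by pulling annuli back through $\phi^{-1}$ and applying conformal distortion estimates to annuli. This is the same scheme used in \cite{Kemppainen:2012vm} to show that the single-curve unforced crossing condition behaves well under conformal maps, and we adapt it here to the joint unforced--forced setting involving branching points of the tree.

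First, I would fix $(\phi,\P)\in\Sigma$, an annulus $A=A(z_0,r,R)$ in $\disc$ and a stopping time $\tau$. Following the scheme of the proof of Theorem~\ref{thm: aizenman--burchard bounds}, I would split into the bulk case $B(z_0,\sqrt{rR})\subset\disc$ and the boundary case $B(z_0,\sqrt{rR})\cap\partial\disc\neq\emptyset$. In either case, for a sufficiently large constant $C>1$, Lemma~\ref{lem: dist ann bulk} or Lemma~\ref{lem: dist ann boundary} produces a chain of about $\log_C(R/r)$ pairwise disjoint round annuli $\tilde A_m = A(z_m,r_m,2r_m)\subset U(\phi)$ of bounded modulus such that any crossing of $A$ by $\phi\circ\underline{\gamma}$ pulls back to a path that crosses every $\tilde A_m$.

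Second, I would observe that the events defining $A^{\text{u}}_\tau$ and $A^{\text{f}}_\tau$, as well as the presence of branching points of $\rndttr$ on a given side of a minimal crossing, are purely topological, depending only on the connectivity of $\underline{\vardomain}_\tau$ relative to $\ttrtarget$ and on the open/dual-open status of edges on the internal boundary. Since $\phi$ is a homeomorphism on the closures (in the sense appropriate to the discrete setup), these notions transfer: a crossing of $A$ contained in $\overline{A^{\text{u}}_\tau}$ on the disc side pulls back to a crossing of each $\tilde A_m$ contained in the corresponding $\tilde A^{\text{u}}_m$, and likewise for $\overline{A^{\text{f}}_\tau}$ and for the condition that branching points are absent on a given side of the minimal crossing.

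Third, for each $\tilde A_m$ in the chain I would apply Condition~\ref{cond: annulus exp} from Theorem~\ref{thm: FK Ising and condition} at the natural stopping time (the first time $\underline{\gamma}$ enters $B(z_m,r_m)$ after $\tau$). Since each $\tilde A_m$ has universally bounded modulus and the chain is disjoint, iterating the conditional bound (as in Proposition~2.6 and its consequences in \cite{Kemppainen:2012vm}) gives a decay that is geometric in the number of levels, i.e.\ of the form $q^{\log_C(R/r)}\leq K'(r/R)^{\Delta'}$ with $\Delta',K'>0$ depending only on $C$ and on the constants from Theorem~\ref{thm: FK Ising and condition}. This is Condition~\ref{cond: annulus exp} for $\Sigma_\disc$; Condition~\ref{cond: annulus} then follows by choosing $C$ large enough to drive the bound strictly below $1/2$. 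The main obstacle is the boundary case: when $z_0$ is close to $\partial\disc$ the conformal distortion can be severe and the annuli $\tilde A_m$ may interact with $\partial_1 U(\phi)$, so one must ensure that the topological notions of $A^{\text{u}},A^{\text{f}}$ and of branching points pull back faithfully and that the RSW-type estimates of \cite{Chelkak:2016jw} used in the proof of Theorem~\ref{thm: FK Ising and condition} still apply in every geometry produced by the distortion lemmas of Appendix~\ref{sec: appendix a priori}.
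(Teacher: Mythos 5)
Your proposal is correct and follows essentially the route the paper intends: the paper itself gives no separate argument for Theorem~\ref{thm: FK Ising and condition conf}, deferring to \cite{Kemppainen:2012vm} with the remark that such bounds transform well under conformal maps, and the mechanism it has in mind is exactly your scheme of splitting $A$ into concentric fixed-modulus annuli, pulling them back via Lemmas~\ref{lem: dist ann bulk} and \ref{lem: dist ann boundary}, using the topological (conformally invariant) nature of the avoidable/unavoidable and branching-point notions, and invoking the stopping-time machinery of \cite{Kemppainen:2012vm} to convert the single-annulus bound of Theorem~\ref{thm: FK Ising and condition} into the power-law bound for $\Sigma_\disc$ --- the same decomposition used in the paper's proof of Theorem~\ref{thm: aizenman--burchard bounds}. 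The boundary-case caveats you raise are real but are precisely the points the paper also leaves to the cited reference and its appendix lemmas, so no further comment is needed.
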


\subsection{Uniform approximation by finite subtrees}\label{ssec: approximating by finite trees}

\subsubsection{Topology on trees}\label{sssec: topo trees}

Let $\de_\text{curve}$ to be the curve distance 
(defined to be the infimum over all reparametrizations of the supremum norm of the difference).
We define the space of 
trees as the space of closed subsets of the space of curves
and we endow this space with the Hausdorff metric.
More explicitly, if $(\bran_x)_{x \in  S}$ and $(\hat{\bran}_{\hat{x}})_{\hat{x} \in  \hat{S}}$
are trees then the distance between them is given by
\begin{equation}
\de_\text{tree}\left( (\bran_x)_{x \in  S}\,,\,(\hat{\bran}_{\hat{x}})_{\hat{x} \in  \hat{S}}\right)
  = \max \left\{ \sup_x \inf_{\hat{x}} \de_\text{curve}\left(\bran_x,\hat{\bran}_{\hat{x}}\right) \,,\,
      \sup_{\hat{x}} \inf_x \de_\text{curve}\left(\bran_x,\hat{\bran}_{\hat{x}}\right) \right\} .
\end{equation}

\subsubsection{Uniform approximation by finite subtrees}\label{sssec: approximating by finite trees}

Let us divide the boundary of the unit disc into a finite number of connected arcs $I_k$,
$k=1,\ldots,N$,
which are disjoint, except possibly at their end points. Let $x^\pm_k$ be the
end points of $\overline{I_k}$. Suppose that $-1 = \phi(\vroot)$ is an end point.
Naturally it is then an end point of two arcs.

Let $\mathcal{I} = \{ I_k \,:\, k=1,\ldots,N \}$ and denote
the maximum of the diameters of $I_k$ by $m(\mathcal{I})$ and $\{ x^\pm_k \,:\, k=1,\ldots,N\}$ by $\hat{\mathcal{S}}^\disc$. 
Consider now the random tree $\tree_\delta = (\rndbran_{\vroot,x})_x$ whose law is given by $\P_\delta$. 
The finite subtree $\tree_\delta(\mathcal{I})$ is defined 
by the following steps:
\begin{itemize}
\item first take 
%\added{%%%
a
%}%%%
discrete approximation
$\hat{\mathcal{S}}_\delta$ of the set of points $\phi^{-1}(\hat{\mathcal{S}}^\disc)$ 
on the vertex set of $\partial_1 \Omega_\delta$.
\item then consider the finite subtree $\hat{\tree}_\delta = (T_{\vroot,x})_{x \in \hat{\mathcal{S}}_\delta}$
\item finally set $\mathcal{S}_\delta$ to be the union of $\hat{\mathcal{S}}_\delta$ and all the branching points
(in the sense of the definition in Section~\ref{ssec: tree crossing})
of $\hat{\tree}_\delta$. Denote $(T_{\vroot,x})_{x \in \mathcal{S}_\delta}$ by $\tree_\delta$.
Here $T_{\vroot,x}$ for a branching point is defined as the subpath of $(T_{\vroot,x})_{x \in \hat{\mathcal{S}}_\delta}$
that starts from $\vroot$ and terminates at $x$.
\end{itemize}
In other words, we take the subtree corresponding to $\hat{\mathcal{S}}$ and then we augment it 
by adding all its branching points and the branches ending at those branching points to the tree.
We will call below $\tree_\delta=\tree_\delta(\mathcal{I})$ the 
\emph{finite subtree} corresponding to $\mathcal{I}$. It is finite in a uniform way over
the family of probability laws and domains of definition. Hence the name.

Denote the image of $\tree_\delta(\mathcal{I})$ under $\phi$ by $\tree^\disc_\delta(\mathcal{I})$.

\begin{theorem}\label{thm: uniform tree approximation}
For each $R>0$
\begin{equation}
\sup \P_\delta \left[ \,\de_\text{tree}\left(\tree^\disc_\delta(\mathcal{I})
  \,,\, \tree^\disc_\delta\right)\, > R \right] = \oo(1)
\end{equation}
as $m(\mathcal{I}) \to 0$ 
%%%
%\added{%%%
where the supremum is taken over $\delta>0$.
%}%
\end{theorem}

\begin{remark}
%\added{%%%
In fact, the supremum in the claim can be taken to be over all shapes $\Omega$,
since probability bounds that the proof is based on hold for any shape.
%}%
\end{remark}

\begin{figure}[tb]
\centering
\subfigure[The returning path doesn't branch on $x_+ \vroot$ before branching in $x x_+$
and then the path towards $x_+$ reaches distance at least $3R/4$ from $x_-x_+$ before reaching $x_+$.] 
{
	%\label{sfig: crossing annuli boundary conditions a}
	\includegraphics[scale=.35]
{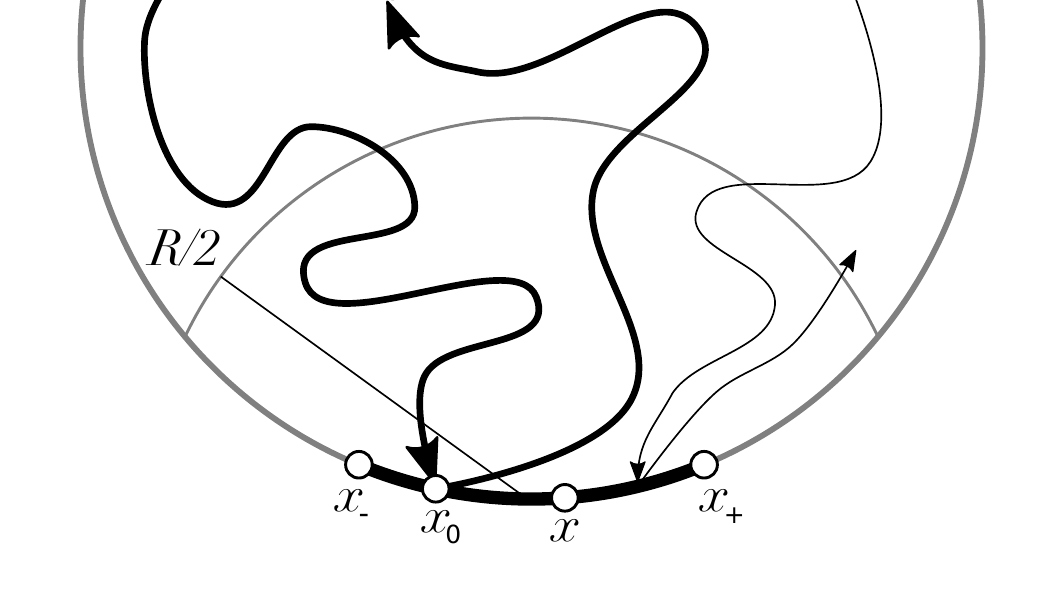}
} 
\hspace{0.5cm}
\subfigure[The returning path branches in $x x_+$ and
then the path towards $x$ reaches distance $3R/4$ from $x_-x_+$ before reaching $x$.]
{
	%\label{sfig: crossing annuli boundary conditions b}
	\includegraphics[scale=.35]
{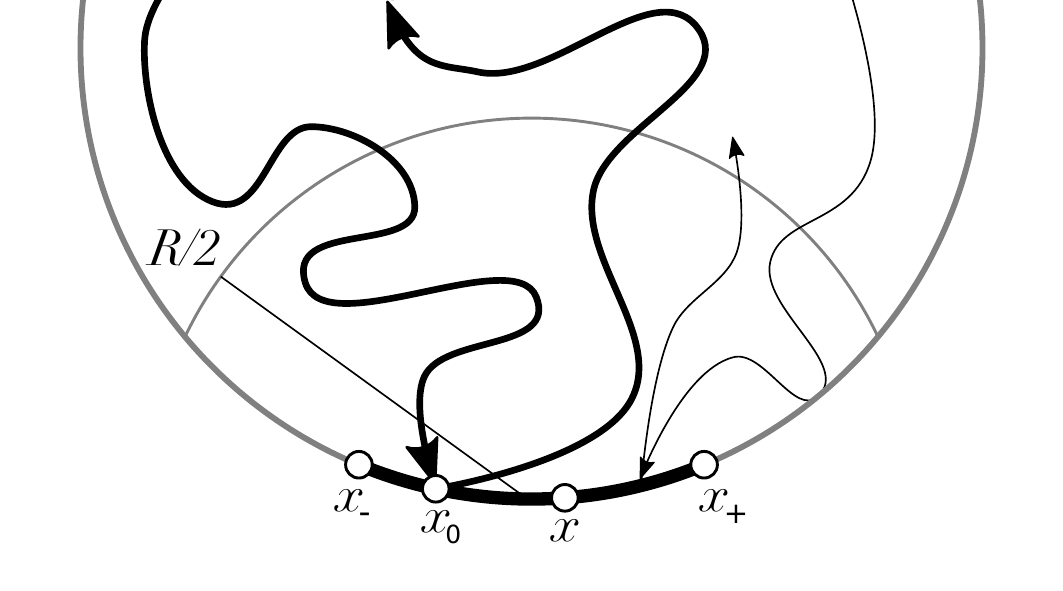}
}
\caption{
%\added{%%%
The events which are shown to have low probability in the proof of
Theorem~\ref{thm: uniform tree approximation}. Two thick arrows are the initial
segment of the exploration which visits the boundary first time in $x_- x_+$ at $x_0$
and then exits to distance $R/2$. The thin arrows indicate the events whose probability is estimated.
%}%
}
\label{fig: crossing annuli 1}
\end{figure}

\begin{proof}
Let $\eps>0$ and suppose that $K$ and $\alpha$ are positive numbers such that
$M_r \leq K r^{-\frac{1}{\alpha}}$ for all $r>0$ with probability greater than $1-\eps$
for any $\P_\delta$. Such constants exist by Theorem~\ref{thm: aizenman--burchard bounds}.

Let $x \in \ttrtarget$. Take $k$ such that $x \in I_k$. Set
$x_\pm = x^\pm_k$.
We will show that with high probability $T_{\vroot,x}$ is close to either
$T_{\vroot,x_-}$, $T_{\vroot,x_+}$ or some $T_{\vroot,\tilde{x}}$ where $\tilde{x}$ is a branch point
%\added{%%%
(in the sense of the definition in Section~\ref{ssec: tree crossing})
%}%
on the arc $x_- x_+$. We will show that with a high probability
this happens for all $x$ simultaneously.

Fix $R>0$. We will study under which circumstances
\begin{equation}\label{eq: middist def}
\min\left\{ \de (T_{\vroot,x}, T_{\vroot,\tilde{x}}) \,:\, \tilde{x} \in x_- x_+\right\} 
\end{equation}
is larger than $R$ or smaller or equal to $R$.
%\added{%
We can assume that $m(\mathcal{I})$ is less than $R/10$, say.
Then the diameter of $x_-x_+$ is at most $R/10$.
%}%

First we notice, that if $x_0$ the first branching point on $I_k$, then the branches to $x,x_-$ and $x_+$
are all the same until they reach $x_0$.

We will consider two complementary cases: either $x_0 \in x_-x$ or $x_0  \in x x_+$.

If $x_0 \in x_-x$, the branches of $x$ and $x_+$ continue to be the same after $x_0$ 
at least for some time.
Suppose that 
$\min\{\de (T_{\vroot,x}, T_{\vroot,x_0}),\de (T_{\vroot,x}, T_{\vroot,x_+})\}>R$. 
Then in particular the branch that we follow to continue towards $x_-$ and $x_0$
has to reach distance $3R/4$ from $x_-x_+$ without making a branch point in $x \vroot$,
%\added{%%%
i.e. a visit to the boundary.
%}  %%%%
It has to contain therefore a subpath which has diameter at least $R/2$, 
%\added{%%%
starts at the boundary and  is
%}% 
otherwise disjoint from the boundary. Call the subpaths with this property
$\mathcal{J}$. 
%\added{%%%
Then by the bound on $M_r$ we get $\#\mathcal{J} \leq K (R/2)^{-\frac{1}{\alpha}}$
(see Lemma~2.2 in \cite{Aizenman:1999dt} for the relationship of maximal number of disjoint segments
and the minimal number of segments needed to cover a path).
%}%

If we happen to reach distance $3R/4$ from $x_-x_+$ without making a branch point,
then in order $\de (T_{\vroot,x}, T_{\vroot,x_+})>R$ to hold the path needs to come close to $x_- x_+$ again
and branch on 
%\added{%%%
$x x_+$
%}% 
which it will do surely.  
But while doing so one of the following has to hold (Figure~\ref{fig: crossing annuli 1}):
\begin{itemize}
\item 
%\added{%%%
the returning path doesn't branch on $x_+ \vroot$ before branching in $x x_+$
and then the path towards $x_+$ reaches distance at least $3R/4$ from $x_-x_+$ before reaching $x_+$
%}%
\item 
%\added{%%%
or the returning path branches in $x x_+$ and
then the path towards $x$ reaches distance $3R/4$ from $x_-x_+$ before reaching $x$.
%}%
\end{itemize}
By the conditions presented in the previous section the probabilities of both of them can be made less
than $\eps$ by choosing length of $x_-x_+$ small.

Now there are $\#\mathcal{J}$ subpaths where this can happen. 
%\added{%%%
Hence
%}%
the total probability is at most
\begin{equation}\label{ie: approximation by finite trees estimate}
%\added{%%%
\eps + K (R/2)^{-\frac{1}{\alpha}} \eps = cst.(R) \eps
%}%
\end{equation}
which can be made arbitrarily small. 
%\added{%%%
This argument
can be made more formal by introducing stopping times $\tau_n$ such that 
$\tau_0=0$ and  $\tau_n$, $n \geq 1$ is the smallest $t$ such that
$t> \tau_{n-1}$ and it holds that there exists $s \in [\tau_{n-1},t)$
such that the exploration at time $s$ in on the boundary and
the exploration on $(s,t)$ is disjoint from the boundary and has diameter at least $R/2$.
Then if $E_n$ is the event above (described in the two bullets), then 
the inequality~\eqref{ie: approximation by finite trees estimate} follows
simply by using a union bound for $\bigcup_n E_n$.%
%}

The other case $x_0  \in x x_+$ is similar.
\end{proof}

\subsection{Precompactness of the loops and recovering them from the tree}

\subsubsection{Topology for loops}

Let $\mathbb{T}$ be the unit circle.
We consider a loop to be a continuous function on $\mathbb{T}$, considered modulo orientation
preserving reparametrizations of $\mathbb{T}$. 
The metric on loops is given in a similar fashion as for curves. We define
\begin{equation}
\delp(\gamma,\hat{\gamma}) = \inf \left\{ \left\Vert f-\hat{f} \right\Vert_\infty 
   \,:\, f\in \gamma, \; \hat{f}\in \hat{\gamma} \right\} 
\end{equation}
%\added{%%%
where we use the notation $f\in \gamma$ to denote that $f$ is a parametrization
of the (unparametrized) curve $\gamma$.
%}%
The space of loops endowed with $\delp$ is a complete and separable metric space.

A loop collection $\rndlpe$ is a closed subset of the space of loops.
On the space of loop collections we use the Hausdorff distance.
Denote that metric by $\de_{\text{LE}}$, where LE stands for \emph{loop ensemble}.

The random loop configuration we are considering is the collection of FK Ising loops which
we orient in clockwise direction.

\subsubsection{Precompactness of the loops}

\begin{theorem}\label{thm: precompactness loop collections}
The family of probability laws of $\rndlpe$ is tight in the metric space of loop collections.
\end{theorem}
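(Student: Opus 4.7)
My plan is to transfer the uniform regularity of the exploration tree to every boundary touching loop via the tree-to-loops mapping, and then deduce tightness of the family of loop collections from an Arzel\`a--Ascoli argument combined with the standard fact that the Hausdorff metric on closed subsets of a compact metric space is itself compact.

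First, I would fix $\eps>0$ and apply Theorem~\ref{thm: aizenman--burchard bounds} to obtain an exponent $\alpha'>0$ and a constant $M=M(\eps)>0$ such that, with probability at least $1-\eps$ uniformly over the family $\Sigma_\disc$, all branches of $\rndttr$ admit a joint parametrization as $\alpha'$-H\"older continuous curves in $\overline{\disc}$ with H\"older norm at most $M$. I would restrict the rest of the argument to this high-probability event.

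Next, I would invoke the tree-to-loops map described in Section~\ref{ssec: exploration tree definition} (the inverse of Lemma~\ref{lem: tree to loop ensemble}) to identify each boundary touching loop with a single sub-arc of one branch of $\rndttr$: at a branching point $w\in V_\partial$, the associated loop is exactly the portion of the branch $\rndbran_w$ between the outgoing edge $f_2$ and the incoming edge $e_2$, a sub-arc whose two endpoints both coincide with $w$ and thus close up into a continuous loop. Re-parametrizing this sub-arc $f\colon[a,b]\to\overline{\disc}$ by $\mathbb{T}$ proportionally to the branch parametrization as $g(t)=f(a+(b-a)t)$, I would control $|g(s)-g(t)|$ in two cases: when the $\mathbb{T}$-distance between $s$ and $t$ is realized inside $[0,1]$, the H\"older estimate for $f$ yields $|g(s)-g(t)|\le M(b-a)^{\alpha'}|s-t|^{\alpha'}$; when the $\mathbb{T}$-distance is realized by wrapping through $0$, I would use $g(0)=g(1)$ to write $g(s)-g(t)=(g(s)-g(1))+(g(0)-g(t))$ and apply the same estimate twice, then combine via subadditivity of $x\mapsto x^{\alpha'}$. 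This gives a H\"older norm of $g$ on $\mathbb{T}$ bounded by $C\,M(b-a)^{\alpha'}\le CM$ for a universal constant $C=C(\alpha')$.

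Thus, with probability at least $1-\eps$, every loop in $\rndlpe$ lies in the set $\mathcal{K}$ of $\alpha'$-H\"older continuous loops in $\overline{\disc}$ with H\"older norm at most $CM$. By Arzel\`a--Ascoli, $\mathcal{K}$ is compact in the loop space under $\delp$, so the collection of closed subsets of $\mathcal{K}$ is compact in the Hausdorff metric $\de_{\text{LE}}$ on loop collections. This yields the tightness claimed. The step I expect to require the most care is the third one: ensuring that the H\"older bound for the loop parametrization on $\mathbb{T}$ is independent of how small the sub-interval $[a,b]$ may be, especially across the closure point of the loop at $w$. The key cancellation is the factor $(b-a)^{\alpha'}\le 1$ produced by the time-rescaling, which compensates exactly the potential blow-up from shrinking $[a,b]$ and keeps the bound uniform across all loops, however microscopic. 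Once that is in place, the compactness and tightness conclusions are standard.
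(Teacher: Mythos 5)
Your proof is correct and follows essentially the same route as the paper: transfer the a priori regularity of the exploration tree from Theorem~\ref{thm: aizenman--burchard bounds} to the boundary touching loops via the loop--tree correspondence of Lemma~\ref{lem: tree to loop ensemble} (each loop being a sub-arc of a single branch closing at its branching point), and then conclude compactness of a high-probability set of loop collections in the Hausdorff metric. The only difference is cosmetic: the paper invokes the segment-count bound $M_r \leq K r^{-1/\alpha}$ to cover the loops, while you use the joint H\"older parametrization from the same theorem together with Arzel\`a--Ascoli, which are closely related formulations of the same regularity.
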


\begin{proof}
%%%\added{%
The claim follows directly from Lemma~\ref{lem: tree to loop ensemble} and 
the tightness of the trees.
Namely, observe that there exists a constant $\alpha>0$ and
a tight random variable $K>0$
such that the minimum number of segments needed to cover any branch in the tree
with segments of diameter at most $R$ is bounded by $K \,R^{-\alpha}$.
By the bijection between loop collections and trees, each loop is
a subpath of a branch of the tree and thus we find that
the minimum number of segments needed to cover any loop in the loop collection
with segments of diameter at most $R$ is bounded by $K \,R^{-\alpha}$.
%}%
\end{proof}

\subsection{Uniform approximation of loops by the finite subtrees}\label{ssec: subtree approximation}

By Lemma~\ref{lem: tree to loop ensemble} it is possible to reconstruct the loops from the full tree.
And by Theorem~\ref{thm: uniform tree approximation} we can approximate the full tree
by finite subtrees.
How do we recover loops approximately from the finite subtrees? Can we do it in a uniform manner?

Consider a loop $\rndlp \in \rndlpe$.  
We divide it into \emph{arcs} which are excursions from the boundary to the boundary, otherwise disjoint
from the boundary.
Since the loop is oriented in the clockwise direction, exactly one of the arcs goes away from $\vroot$
in the counterclockwise orientation of the boundary and rest of them go towards $\vroot$. The first
case is the \emph{top arc} $\gamma^\itop$ of the loop $\gamma$ and the rest of them are the \emph{bottom arcs}.
See Figure~\ref{fig: schematic loops}. The concatenation of the bottom arcs is denoted by $\gamma^\ibot$.

Suppose that the loop doesn't intersect a fixed neigborhood of $\vroot^\disc$.
The diameter of the entire loop is bounded by a uniform constant (depending on the chosen neighborhood)
times the diameter of the top arc and vice versa. This means that 
if the loop has diameter greater than a fixed number, then
the top arc is traced by finite subtree for
fine enough mesh according to Theorem~\ref{thm: uniform tree approximation}.

Also the bottom arcs from right to left are traced until the last branch point to the points
defining the finite subtree. Again by Theorem~\ref{thm: uniform tree approximation}, the part which
remains to be discovered of the loops, has small diameter. Hence if we define approximate loop
to have just a linear segment in that place, we see that the loop and the approximation are close
in the given metric.

So we define the finite-subtree approximation $\rndlpe_\delta$ of the random loop collection $\rndlpe$ to be
the collection of all those discovered top arcs concatenated with their discovered bottom arcs
and the linear segment needed to close the loop. By Theorem~\ref{thm: uniform tree approximation},
we have the following result.

\begin{theorem}\label{thm: uniform loop approximation}
For each $R>0$
\begin{equation}
\sup_{ \delta} \P_\delta \left[ \,\de_{\textnormal{LE}}\left(\rndlpe^\disc_\delta(\mathcal{I})
  \,,\, \rndlpe^\disc\right)\, > R \right] = \oo(1)
\end{equation}
as $m(\mathcal{I}) \to 0$.
\end{theorem}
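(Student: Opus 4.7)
The plan is to combine Lemma~\ref{lem: tree to loop ensemble} with Theorem~\ref{thm: uniform tree approximation} to control both sides of the Hausdorff distance between $\rndlpe^\disc$ and $\rndlpe^\disc_\delta(\mathcal{I})$. Fix $R>0$ and $\eps>0$, and choose a large constant $C>1$. Applying Theorem~\ref{thm: uniform tree approximation} with threshold $R/C$ together with Theorem~\ref{thm: aizenman--burchard bounds} yields $m_0>0$ and $N\in\N$ such that, whenever $m(\mathcal{I})<m_0$, there is an event $\mathcal{E}$ of probability at least $1-\eps$ uniformly in $\delta$ on which $\dettr(\tree^\disc_\delta(\mathcal{I}),\tree^\disc_\delta)<R/C$ and every branch of $\tree^\disc_\delta$ splits into at most $N$ subpaths of diameter at most $R/C$. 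An initial reduction excludes loops intersecting a fixed small neighborhood of $\phi(\vroot)$, at a cost that is $\oo(1)$ as that neighborhood shrinks; this is needed because the top/bottom decomposition of loops degenerates near the root.

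Working on $\mathcal{E}$, to each loop $\rndlp\in\rndlpe^\disc$ I would associate its boundary branching point $p$ and the arc $I_k\in\mathcal{I}$ containing $p$. By the target-independence of the exploration tree, the branch $T_{\vroot,x_k^+}\in\tree^\disc_\delta(\mathcal{I})$ coincides with the true branch from $\vroot$ to any target just past $p$ on their common initial segment, and this segment traces the top arc $\rndlp^\itop$. The bottom arcs of $\rndlp$ are then recovered in reverse order from the branches of the finite subtree to targets located between $\vroot$ and $p$, up to the last branching point of $\hat{\tree}_\delta$ inside $\rndlp$. The remaining undiscovered portion of $\rndlp$ is pinched between two traced sub-branches meeting the boundary at distance at most $R/C$ apart, so Condition~\ref{cond: annulus exp} (applied to a surrounding annulus) forces its diameter to be at most $R/C$ as well. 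Closing the approximant by a linear segment produces $\tilde\rndlp\in\rndlpe^\disc_\delta(\mathcal{I})$ with $\delp(\rndlp,\tilde\rndlp)<R$ once $C$ is chosen large enough. The reverse Hausdorff direction is automatic, since every element of $\rndlpe^\disc_\delta(\mathcal{I})$ is, by construction, a portion of an actual loop closed by a short chord.

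Loops whose branching point falls in an arc $I_k$ that produces no branching of $\hat{\tree}_\delta$ are not directly approximated, but the same crossing estimate forces such loops to have diameter at most $R/C$ and to be localized near $I_k$; they can therefore be matched to any neighboring element of $\rndlpe^\disc_\delta(\mathcal{I})$, or to a trivial point-loop on $I_k$ included in the definition. The main obstacle is this geometric confinement: proving that once the branches $T_{\vroot,x_k^\pm}$ of the finite subtree re-enter the boundary within distance $R/C$ of each other, any loop of $\rndlpe^\disc$ with branching point in $I_k$ must lie entirely inside the corresponding thin fjord. This is precisely where Condition~\ref{cond: annulus exp} and the crossing analysis of Sections~\ref{ssec: tree crossing}--\ref{ssec: approximating by finite trees} are indispensable, and it is the only step beyond a direct application of Theorem~\ref{thm: uniform tree approximation}.
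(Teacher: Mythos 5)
Your proposal is, in outline, the paper's own argument: split each loop into its top arc and bottom arcs, discard loops meeting a fixed neighbourhood of the root where the comparability of the diameters of a loop and of its top arc fails, use Theorem~\ref{thm: uniform tree approximation} to see that the finite subtree traces the top arc and the bottom arcs down to the last branching point, argue that the untraced remainder is small so that the closing chord is cheap, and observe that the reverse Hausdorff inclusion holds by construction. However, one step is a genuine gap: the claim that target independence alone guarantees that the branch $T_{\vroot,x_k^+}$ traces the top arc $\rndlp^\itop$ of every loop whose branching point $p$ lies in $I_k$. The branches to $x_k^+$ and to the target just past $p$ agree only until the first edge landing on the boundary arc between these two targets, and that edge can occur \emph{before} the exploration ever reaches $p$: this happens precisely when $\rndlp$ is hidden beneath a larger loop which returns to the boundary at a point $q$ with $p<q<x_k^+$. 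In that configuration the branch to $x_k^+$ jumps off at $q$, never visits $p$, and does not trace $\rndlp^\itop$. The scenario is only excluded probabilistically: disjointness of the loops forces the boundary support of $\rndlp$ into $[p,q]\subset I_k$, so $|\pleft(\rndlp)-\pright(\rndlp)|\leq m(\mathcal{I})$, and then estimate \eqref{eq: a priori boundary support} of Theorem~\ref{thm: a priori properties of le} shows that with probability $1-\oo(1)$ (uniformly in $\delta$, as $m(\mathcal{I})\to 0$) no loop of diameter $\geq R$ is in this situation; alternatively one argues, as the paper does, that a big untraced top arc would contradict the closeness of the finite subtree to the full tree furnished by Theorem~\ref{thm: uniform tree approximation}. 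One of these inputs is indispensable; target independence by itself does not deliver the claim.

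Two smaller points. For the undiscovered remainder you invoke Condition~\ref{cond: annulus exp} and a confinement argument; the paper instead reads the smallness off Theorem~\ref{thm: uniform tree approximation} again, since the full branch that would trace the remainder is uniformly close to a finite-subtree branch stopping at the last branching point. Your route can be made to work, but confinement in a fjord of mouth width $R/C$ does not by itself bound the diameter, so you would still need the stopping-time crossing estimate together with a union bound over the tight number of such excursions from Theorem~\ref{thm: aizenman--burchard bounds} --- in effect re-proving the mechanism behind Theorem~\ref{thm: uniform tree approximation}. Finally, adding ``trivial point-loops on $I_k$'' to $\rndlpe^\disc_\delta(\mathcal{I})$ modifies the object appearing in the statement and is unnecessary: undiscovered loops are small with high probability and lie close to discovered ones, since the outermost loops along the boundary are traced by the branches to the points $x_k^\pm$.
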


\subsection{Some a priori properties of the loop ensembles}

The following theorem gathers some technical estimates needed below.

\begin{theorem}\label{thm: a priori properties of le}
The family of critical FK Ising loop ensemble measures $(\P_\delta)_{\delta>0}$ satisfy
the following properties
\begin{enumerate}\enustyiii
\item \label{enui: a priori properties of le i}
(finite number of big loops) For each $R>0$,
\begin{equation}\label{eq: a priori no of big loops}
\sup_{\delta>0} \P_\delta \left( 
  \# \left\{ \gamma \,:\, \diam(\gamma) \geq R \right\} \; \geq N
\right) = \oo(1)
\end{equation}
as $N \to \infty$.
\item \label{enui: a priori properties of le ii}
(small loops and branching points are dense on the boundary)
There exists $\Delta>0$ such that for each $x \in \partial \disc$ and $R>0$
\begin{equation}\label{eq: a priori small loops are dense}
\sup_{\delta>0} \P_\delta \left( 
  \sup \left\{ \diam(\gamma) \,:\, \gamma \cap B(x,r) \neq \emptyset \right\} \; \geq R
\right) = \OO\left( \left( \frac{r}{R} \right)^\Delta \right)
\end{equation}
as $r \to 0$.
%
%\added{%%%
Consequently, for all $c >1$
\begin{equation}\label{eq: a priori small loops are dense ii}
\P_\delta \left( 
  \inf \left\{ \diam(\gamma) \,:\, \gamma \cap B(x,c \delta) \neq \emptyset \right\} \; \geq R
\right) = \OO\left( \left( \frac{\delta}{R} \right)^\Delta \right)
\end{equation}
for all $\delta>0$ and $R> c \delta$, and
thus for any $\beta \in (0,1)$ and $r>0$ and any partitioning of the boundary
of the domain to connected sets $I_j$ of diameter at least $r$ it holds that
\begin{equation}\label{eq: a priori small loops are dense iii}
\P_\delta \left( \forall j, \;
  I_j \text{ is touched by a loop with diameter at most } \delta^\beta
\right) = 1 - \OO\left(  \delta^{\Delta(1-\beta)} \right)
\end{equation}%
as $\delta \to 0$.
%}%
%
\item (big loops have positive support on the boundary and they touch the boundary infinitely often around the extremal points)
For any $R>0$
\begin{equation}\label{eq: a priori boundary support}
\sup_{\delta>0} \P_\delta \left( 
  \exists \gamma \text{ s.t. } 
  \diam(\gamma) \geq R  \text{ and } 
  |\pleft(\gamma) - \pright(\gamma)| < r 
\right) = \oo(1)
\end{equation}
as $r \to 0$. 
Here $\pleft(\gamma)$ and $\pright(\gamma)$ are the left-most and the right-most points of $\gamma$
along the boundary of the domain
%\added{%%%
(as seen from the root).
%}% 
We call the boundary arc $\pleft(\gamma)\pright(\gamma)$ 
the \emph{support of $\gamma$ on the boundary}.
Furthermore, for any constant $0<\eta<1$ 
there exists a sequence $\delta_0(m)>0$ and constant $\lambda>0$  such that 
if 
%\added{%%%
$x(\gamma)=\pleft(\gamma)$ or $x(\gamma)=\pright(\gamma)$
\begin{equation}\label{eq: a priori boundary visits}
\sup_{\delta \in (0, \delta_0(m))} \P_\delta \left( 
  \begin{gathered}
  \exists \gamma \text{ s.t. } 
  \diam(\gamma) \geq R  , 
  |\pleft(\gamma) - \pright(\gamma)| \geq r \text{ and } \\
  \# \{ n=1,2,\ldots,m \,:\, \gamma \text{ touches boundary} \\
          \text{in } A(x(\gamma),\eta^n \, r, \eta^{n-1} \, r)   \} \leq \lambda m 
  \end{gathered}
\right) = \oo(1)
\end{equation}
%}%
as $m \to \infty$.
\item (big loops are not pinched)
\begin{equation}\label{eq: a priori big loops are fat top}
\sup_{\delta>0} \P_\delta \left( 
  \begin{gathered}
  \exists \gamma \text{ s.t. } \gamma^\itop = \gamma_1^\itop \sqcup \gamma_2^\itop, \, \gamma_1^\itop \cap \gamma_2^\itop = \{x\}\\
  \diam(\gamma_k^\itop) \geq R \text{ for }  k=1,2 \text{ and } \dist(x, \partial \disc) < r  \\
  \end{gathered}
\right) = \oo(1)
\end{equation}
and
\begin{equation}\label{eq: a priori big loops are fat bot}
\sup_{\delta>0} \P_\delta \left( 
  \begin{gathered}
  \exists \gamma \text{ and } \gamma' \subset \gamma^\ibot \text{ s.t. }
    \gamma' = \gamma_1' \sqcup \gamma_2', \, \gamma_1' \cap \gamma_2' = \{x\}\\
  \gamma' \cap \partial_1 \Omega =\emptyset, \; 
  \diam(\gamma_k') \geq R \text{ for }  k=1,2  \text{ and } \dist(x, \partial \disc) < r  \\
  \end{gathered}
\right) = \oo(1)
\end{equation}
as $r \to 0$. 
\item (big loops are 
%\added{%%%
distinct)
%}%
\begin{equation}\label{eq: a priori distinguishable}
\sup_{\delta>0} \P_\delta \left( 
  \exists \gamma, \tilde{\gamma} \text{ s.t. } 
  \diam(\gamma),\diam(\tilde{\gamma}) \geq R  \text{ and } 
  \de_{\text{loop}}(\gamma,\tilde{\gamma}) < r 
\right) = \oo(1)
\end{equation}
as $r \to 0$.
\end{enumerate}
\end{theorem}

\begin{proof}
The bound \eqref{eq: a priori no of big loops} follows from Theorem~\ref{thm: uniform tree approximation}.
%\added{%%%
The bound \eqref{eq: a priori small loops are dense} 
follows directly from the crossing bound of annuli
at $x$, and the bound~\eqref{eq: a priori small loops are dense ii}
is merely rephrasing the previous bound and specializing to $r= c\delta$.
For the bound~\eqref{eq: a priori small loops are dense iii}, take
$x_j$ to be any point on $I_j$, say, not too close to the endpoints
of $I_j$, and 
use the bound~\eqref{eq: a priori small loops are dense ii}
for $x=x_j$, $R=\delta^\beta$ and sum over $j$ to get the an upper bound for an
union of events of type in \eqref{eq: a priori small loops are dense ii};
the required bound is obtained as the complement.
%}%

The bound \eqref{eq: a priori boundary visits} is shown to hold by considering the domain 
%\added{%%%
$U \setminus \underline{\gamma}(\tau)$
%}%
where $\tau$ is a stopping time such that the top arc of a big loop $\gamma$ is reaching its endpoint $x(\gamma)$ at time $\tau$, 
and using the crossing bound of annuli at $x(\gamma)$.

The proofs of \eqref{eq: a priori boundary support}, \eqref{eq: a priori big loops are fat top}
and \eqref{eq: a priori big loops are fat bot} are similar. 
%\added{%%%
The exploration process discovers, as seen from the root, the top
arc of any loop before the lower arcs.
%}%
As noted before, the diameter of the top arc is comparable to the diameter
of the whole loop. Hence it enough to work with loops that have top-arc diameter more than $R$.

For \eqref{eq: a priori boundary support}, stop the process when the current arc exits the ball of radius
$R/4$ around the starting point $z_0$ of the arc.
The rest of the exploration process makes with high probability a branching point
in the annulus $A(z_0,r,R/4)$
before finishing the loop at $z_0$
%\added{%%%
by the crossing estimate (namely, the estimate~\eqref{eq: cond annulus exp} 
used in $A(z_0,r,R/4)$ implies that the path will touch both sides
of the of the slit domain boundary inside the annulus 
before reaching distance $r$ from $z_0$).
%}% 
This implies that the event \eqref{eq: a priori boundary support}
does not occur for that loop. These segments of diameter $R/4$ in the exploration process
are necessarily disjoint (they start on the boundary and remain after that disjoint from the boundary)
and hence their number is a tight random variable by Theorem~\ref{thm: aizenman--burchard bounds}. 
The bound \eqref{eq: a priori boundary support}
follows by a simple union bound.

For \eqref{eq: a priori big loops are fat top}, stop the process when the diameter
of the current arc is at least $R$ and the tip lies within distance $r$ from the boundary.
Let the point closest on the boundary be $z_0$. 
These arc of diameter $R$ are necessarily disjoint: they start from the boundary and remain after that
disjoint from the boundary.
The rest of the exploration makes with high probability a branching point before it exits the ball of radius $R$
centered at $z_0$. The bound \eqref{eq: a priori big loops are fat top}
follows by a simple union bound. 

The proof of \eqref{eq: a priori big loops are fat bot} is very similar and we omit the details.

For the bound \eqref{eq: a priori distinguishable}, let $\gamma_1$ and $\gamma_2$ be two loops of
diameter larger than $R$. Let $S$ be the set of points of $\gamma_2$ and let $\gamma_1^\itop$
be the top arc of $\gamma_1$ and let $\gamma_1^\ibot$ be the rest of $\gamma_1$.
Without loss of generality, we can assume that $\gamma_1^\itop$ separates $S$ from $\gamma_1^\ibot$ in $\disc$.
Otherwise we can exchange the roles of $\gamma_1$ and $\gamma_2$. Suppose that $\delp(\gamma_1,\gamma_2) < r$.
Then we claim that the Hausdorff distance of $\gamma_1^\itop$ and $\gamma_1^\ibot$ is less than $2 r$. 
The distance from any point of $\gamma_1^\ibot$ to $\gamma_1^\itop$ is less than $r$.
This follows when we notice that any line segment connecting $\gamma_1^-$ to $S$ intersects $\gamma_1^+$.
Due to topological reasons, since $\gamma_1$ and $\gamma_2$ are oriented and $\delp$ uses this orientation,
it also holds that the distance from  any point of $\gamma_1^\itop$ to $\gamma_1^\ibot$ is less than $2r$.
Namely, any point in $\gamma_1^\itop$ has a point of $\gamma_2$ in its $r$ neighborhood, which has distance less than $r$
to $\gamma_1^\ibot$. We can now use the exploration process to explore $\gamma_1^\itop$. Take a point $z_0 \in \gamma_1^\itop$
that has distance at least $\sqrt{rR}$ to the boundary. 
Such a point exists by \eqref{eq: a priori boundary support} 
and \eqref{eq: a priori big loops are fat bot}.
Now by the crossing property 
%\added{%%%
used in $A(z_0,r,\sqrt{rR})$ at the stopping time when the exploration
has explored fully $\gamma_1^\itop$,
%} 
with high probability the exploration of $\gamma_1^\ibot$ doesn't come close to $z_0$.
\end{proof}

\subsubsection{Some consequences}\label{sssec: a priori consequences}

In the discrete setting we are given a tree--loop ensemble pair. The tree and the loop ensemble
are in one to one correspondence as explained earlier. Recall that, given the loop ensemble,
the tree is recovered by the exploration process which follows the loops in counterclockwise
direction and jumps to the next loop at boundary points where the loop being followed turns away from
the target point. Recall also that the loops are recovered from the tree by noticing
that the leftmost point in the loop corresponds to a branching point of the tree and the rest of the 
loop is the continuation of the branch to the point just right of that branching point.

%\added{%%%
Consider any random tree--loop ensemble pair which is a subsequential weak
limit of the tree--loop ensemble pairs of the FK Ising model.
Since the discrete collections of loops are have finite number of big loops 
with the uniform bound \eqref{eq: a priori no of big loops},
the loop collection is 
almost surely 
at most countable also in the limit (use the Portmanteau theorem for the closed
event that there are at most $n$ loops of diameter strictly greater than $R$).
%}%
By the properties of the loop ensemble given in Theorem~\ref{thm: a priori properties of le},
the limiting pair and the process of taking the limit have the following properties
\begin{itemize}
\item the loops are distinguishable in the sense that
there is no sequence of pairs of 
%\added{%%%
distinct
%}% 
loops that would converge to the same loop.
\item Each loop consists of a single top arc which is disjoint from the boundary except
at the endpoints and a non-empty collection of bottom arcs. In particular,
the endpoints of the top arc (the leftmost and rightmost points of the loop) are different.
\end{itemize}
From these properties we can prove the following result. The second assertion basically means that
there is a way to \emph{reconstruct the loops from the tree} also in the limit.

\begin{theorem}\label{thm:  tree to loops in the limit}
Let $(\rndlpe^\disc,\rndttr^\disc)$ be the almost sure limit of $(\rndlpe_{\delta_n}^\disc,\rndttr_{\delta_n}^\disc)$
as $n \to \infty$.
Write $\rndlpe^\disc =(\rndlp_j)_{j \in J}$ and $\rndlpe_{\delta_n}^\disc = (\rndlp_{n,j})_{j \in J}$
(with possible repetitions) such that almost surely for all $j \in J$, $\rndlp_{n,j}$ converges
to $\rndlp_j$ as $n \to \infty$, and then set $x_{n,j}$ to be the target point of the branch of 
$\rndttr_{\delta_n}^\disc$ that corresponds to $\rndlp_{n,j}$ in the above bijection 
(described in the beginning of the subsection~\ref{sssec: a priori consequences}).
Then 
\begin{itemize}
\item Almost surely all $x_{n,j}$ converge to some points $x_{j}$ as $n\to \infty$ and
all the branches $\rndbran_{x_{n,j}^+}$ converge to some branches denoted by $\rndbran_{x_{j}^+}$  as $n\to \infty$.
Furthermore, $x_{j}$ are distinct and 
they form a dense subset of $\partial \disc$ and $\rndttr$ is the closure of $(\rndbran_{x_{j}^+})_{j \in J}$.
\item On the other hand, $(\rndbran_{x_{j}^+})_{j \in J}$ is characterized as being the subset of $\rndttr$
that contains all the branches of $\rndttr$ that have a doublepoint on the boundary.
Furthermore, that doublepoint is unique and it is the target point (that is, endpoint) of that branch.
\item
%\added{%%%
Any loop $\rndlp_j$ can be reconstructed from $(\rndbran_{x_{j}^+})_{j \in J}$ in the following way:
the leftmost point of $\rndlp_j \cap \partial \disc$ (the point closest to the root in the counterclockwise direction)
is one of the doublepoints $x^+_j$
and the loop $\rndlp_j$ is the part between the first and last visit to $x^+_j$ by $\rndbran_{x_{j}^+}$.
%}%
\end{itemize}
\end{theorem}

\subsection{Precompactness of branches and description as Loewner evolutions}\label{ssec: a priori single branch}

\subsubsection{Precompactness of a single branch} 

We know now that the sequence of exploration trees is tight in the space of curve collections,
which has the topology of Hausdorff distance on the compact sets of the space of curves.
This enables us to choose for any subsequence a convergent subsequence. However, 
it turns out that we need stronger tools to be able to characterize the limit.
We will review the results of \cite{Kemppainen:2012vm} that we will use.

The hypothesis  of \cite{Kemppainen:2012vm} is similar to the conditions in Section~\ref{ssec: tree crossing}.
Once that hypothesis holds for a sequence of random curves, it is shown in that paper that the sequence
is tight in the topology of the space of curves. Furthermore, it is established that such a sequence
is also tight in the topology of uniform convergence of driving terms of Loewner evolutions
in such a way that mapping between curves and Loewner evolutions is uniform enough so that
if a sequence converges in both of the above mentioned topologies the limits have to be the same.

The hypothesis  of \cite{Kemppainen:2012vm} for the FK Ising branch has been already established
since we can see it as a special case of Theorem~\ref{thm: FK Ising and condition}.

\begin{theorem}[Kemppainen--Smirnov, \cite{Kemppainen:2012vm}]\label{thm: ks}
Under certain hypothesis, a sequence probability laws $\P_n$ of random curves of $\half$ 
has the following properties:
for each $\eps>0$, there exists an event $K$ such that $\inf_n \P_n(K) \geq 1 - \eps$ and
the capacity-parametrized curves in $K \cap \{\gamma \text{ simple}\}$   
form an equicontinuous family, their driving processes form an equicontinuous family
and finally $|\gamma(t)| \to \infty$ as $t \to \infty$ uniformly.
Moreover the driving processes on the event $K \cap \{\gamma \text{ simple}\}$
are $\beta$-H\"older continuous with a bounded H\"older constant for any $\beta \in (0, \frac{1}{2})$.
\end{theorem}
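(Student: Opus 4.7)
The plan is to follow the line of argument of Kemppainen--Smirnov, whose essential input is a condition of the same form as our Condition~\ref{cond: annulus exp}, now specialized to the single curve (rather than tree) setting: the probability that the curve makes an unforced crossing of an annulus $A(z_0,r,R)$ in the remainder $[\tau,\infty)$ after a stopping time $\tau$ is bounded by $K(r/R)^\Delta$. This hypothesis is the single-branch version of what we already verified in Theorem~\ref{thm: FK Ising and condition}, so it applies to our branches.

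First I would prove tightness of the laws in the metric space of curves, modulo reparametrization. This is the Aizenman--Burchard argument: the annulus crossing bound controls the number of disjoint minimal crossings of $A(z_0,r,R)$ by a random power of $r/R$, which upgrades to a modulus-of-continuity estimate and gives an equicontinuous family under a suitable parametrization (this is essentially the content of Theorem~\ref{thm: aizenman--burchard bounds} restricted to a single branch). From this one obtains the event $K$ on which the curves are uniformly Hölder under the intrinsic parametrization.

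The main step, and the main technical obstacle, is to pass to the capacity parametrization and to deduce equicontinuity of the Loewner driving functions $W_t = g_t(\gamma(t))$. The obstacle here is a two-sided comparison of capacity time with intrinsic time: one direction is automatic from monotonicity of capacity, but the other direction requires excluding ``bottleneck'' configurations where the curve surrounds a large region while being trapped in a small neighbourhood of its tip -- it is exactly this configuration that would allow a small move in $\gamma$ to produce a large jump in the conformal image and hence in $W$. The unforced crossing bound rules out precisely such bottlenecks: if the tip remains within a small disc while the hull grows substantially, then topologically the curve has had to make unforced crossings of annuli around the tip, which is a rare event. Combining this with standard conformal distortion estimates for $g_t$ (near the tip and in the bulk) converts equicontinuity of $\gamma$ into equicontinuity of $W$ under capacity parametrization, and simultaneously shows $|\gamma(t)| \to \infty$ uniformly on $K$ (otherwise the capacity could not tend to infinity).

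Finally, to get $\beta$-Hölder regularity of $W$ for any $\beta \in (0,1/2)$ I would argue as follows. From Loewner theory one has, for simple curves, the bound $|W_{t+s} - W_t|^2 \lesssim s \cdot (\text{diameter of the hull grown in }[t,t+s])^2 / s$ via a half-plane capacity estimate; more directly, one uses that the half-plane capacity increment over a time interval of length $s$ is exactly $2s$ while it is also comparable to (diameter)$^2$ of the new piece of curve, so the diameter of the new piece is $\OO(\sqrt{s})$, and by the distortion estimate $W$ moves by the same order. Combined with the crossing bound, which gives a uniform tail estimate on multiplicative constants, one obtains a $\beta$-Hölder bound for every $\beta < 1/2$ with a Hölder constant that is tight over the family. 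The main obstacle in this last step is already handled by the preceding analysis: it is the non-bottleneck property that allows us to turn a capacity increment of $s$ into a genuine spatial displacement of size $\OO(\sqrt{s})$, and hence into an $\OO(\sqrt{s})$ bound on $|W_{t+s} - W_t|$ up to a tight random multiplicative constant.
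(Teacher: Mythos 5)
First, a remark on what the paper itself does: Theorem~\ref{thm: ks} is not proved in this article at all --- it is quoted from \cite{Kemppainen:2012vm}, and the only work done here is the verification of its hypothesis for the FK Ising branches (Theorem~\ref{thm: FK Ising and condition}). So there is no internal proof to compare against; your text is a reconstruction of the argument of the cited reference. At the structural level your reconstruction is faithful: the unforced-crossing condition, Aizenman--Burchard type estimates for tightness of the curves, and the exclusion of bottleneck/fjord configurations as the key step linking the curve to its capacity parametrization and to the regularity of the driving term are indeed the ingredients of \cite{Kemppainen:2012vm}.

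However, your final step (H\"older regularity of $W$ for every $\beta\in(0,\tfrac12)$) has a genuine gap. The deterministic claim that the capacity increment $2s$ is ``comparable to the squared diameter of the new piece'' is false in the direction you use it: one only has $\hcap(K)\le C\,\diam(K)^2$ for hulls attached to $\R$, so $2s\le C d^2$ gives $d\gtrsim\sqrt{s}$, not $d=\OO(\sqrt{s})$; a piece of curve hugging the boundary (or its own past, which $g_t$ maps to $\R$) can have large diameter and tiny capacity, which is precisely the bottleneck scenario. More importantly, even after invoking the crossing bound your conclusion --- ``an $\OO(\sqrt{s})$ bound on $|W_{t+s}-W_t|$ up to a tight random multiplicative constant'' --- asserts uniform $\tfrac12$-H\"older continuity, which is too strong: since $\tfrac12$-H\"older balls are closed under uniform convergence, such a uniform bound on events of probability $\ge 1-\eps$ would force any weak limit to be $\tfrac12$-H\"older, contradicting the fact that the limiting driving process is a multiple of Brownian motion. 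The mechanism that actually works is quantitative: if $W$ moves by $u$ while the capacity grows only by $s\ll u^2$, then the image curve must make an \emph{unforced} crossing of a long thin quadrilateral along $\R$ whose conformal modulus is of order $u^2/s$, and the conformally invariant form of the crossing condition gives a tail bound decaying exponentially in that modulus, i.e.\ in $u^2/s$; a union bound over dyadic time intervals then yields the modulus of continuity $C\sqrt{s\log(1/s)}$ on an event of probability $\ge 1-\eps$, which is what gives \emph{every} $\beta<\tfrac12$. A fixed power-law tail $K(\sqrt{s}/u)^\Delta$, which is all your sketch extracts from the crossing condition, would only give H\"older exponents bounded away from $\tfrac12$. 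A similar (smaller) gap concerns $|\gamma(t)|\to\infty$: this does not follow ``because the capacity tends to infinity'' --- only the hull, not the tip, is forced out of compact sets --- and one must again use the crossing estimate to rule out the tip returning to a bounded region after the hull has grown large.
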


In addition to \cite{Kemppainen:2012vm}, see also Section~6.3 in \cite{Kemppainen:tb} 
for this type of argument in the case of site percolation.

\subsubsection{Precompactness of finite subtrees}\label{ssec: a priori several branches}

For fixed finite number of curves it is straightforward to generalize Theorem~\ref{thm: ks}.
In fact, the conclusions of Theorem~\ref{thm: ks} hold for any finite subtree that we considered in 
Section~\ref{ssec: subtree approximation}.

In the rest of the paper we use these tools available for us and aim to characterize
the scaling limits of finite subtrees of the exploration tree. If we manage to establish
the uniqueness of the subsequent scaling limit of those objects, then Theorem~\ref{thm: main theorem}
follows from tightness and from the approximation result, Theorem~\ref{thm: uniform loop approximation}.

% !TEX root = FK_Ising_tree_full_scaling_limit.tex

\section{Preholomorphic martingale observable}\label{sec: observable}

\subsection{The setup for the observable}\label{ssec: setup}

It is natural to generalize the setup of the previous section to domains of type illustrated in 
Figure~\ref{fig: domain in general case}.

\begin{figure}[tbh]
\centering
	\includegraphics[scale=.52]
{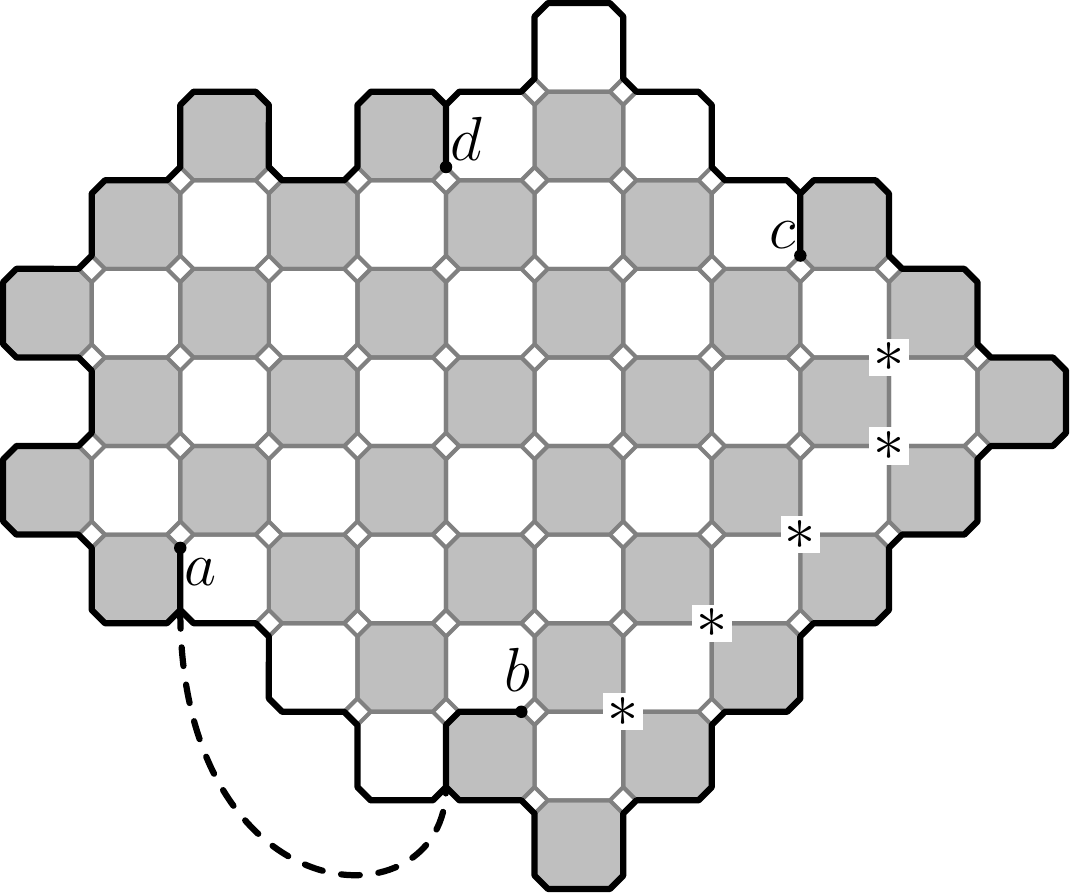}
\caption{Generalized setting for the chordal exploration tree. 
We add an external arc from $b$ to $a$ with the following interpretation.
The observable contains an indicator factor for the curve starting from $c$ and a complex factor depending
on the winding of that curve. If that curve goes to $b$ we continue to follow it through the external arc to
$a$ and from there to $d$. On the other hand, if the curve starting from $c$ ends directly to $d$, then
the curve connecting $a$ to $b$ is counted as a loop giving an additional weight $\sqrt{2}$ to the configuration.
The arc $bc \subset V_{\partial,1}$ is marked with $*$'s.
%\added{%%%
Notice that the general domain can have common parts for different boundary arcs longer than
one lattice step such as near $b$. In fact, we need to consider domains in this generality, if we wish to explore the interfaces starting
at the marked points and the corresponding observables
conditioned on the information of the progressing exploration.%
%}
}
\label{fig: domain in general case}
\end{figure}

Henceforth we consider $G^\ddiamond$ with four special boundary vertices
$a,b,c,d$, where the boundary edges (when consider as edges of the directed graph $G^\ddiamond_\orient \subset \oddmdlattice$) 
at $a$ and $c$ point inwards and the boundary edges at $b$ and $d$ outwards. 
Now $a$ and $d$ play the roles of $v_\rt$ and $w$, respectively, in the construction of the exploration
tree of the previous section. 
The arcs $ab$ and $cd$ have white boundary ($\dsqlattice$ wired) 
and $bc$ and $da$ have black boundary ($\sqlattice$ wired).

To get a random cluster measure on $G^\ddiamond$ consistent with the all wired boundary conditions and the exploration process 
in Section~\ref{sec: setup}, we have to count the wired arcs $bc$ and $da$ to be in the same cluster. This corresponds
to the external arc configuration where $a$ and $b$ are connected by an arc and $c$ and $d$ are connected by an arc.
Denote such \emph{external arc pattern} $\extarcp{a}{b}{c}{d}$. Later we will denote \emph{internal arc patterns} by $\intarcp{a}{b}{c}{d}$ etc.

In fact, we will choose not to draw the external arc $\extarc{c}{d}$. The reason for this is that it is not used in the definition
of the observable and the weights for loop configurations that we get either with or without $\extarc{c}{d}$ are all proportional
by the same $\sqrt{2}$ factor. Thus it doesn't change the probability distribution.

The configuration on $G^\ddiamond$ is $(\gamma_1,\gamma_2,l_i \,:\, i=1,2,\ldots N_\text{loops})$,
where $\gamma_1$ and $\gamma_2$ are the paths starting from $a$ and $c$, respectively.
Define 
\begin{equation}\label{eq: def hat gamma}
\hat{\gamma} = \begin{cases}
  \gamma_2   & \text{if $\gamma_2$ exits through $d$} \\
  \gamma_2 \sqcup \alpha \sqcup \gamma_1   & \text{if $\gamma_2$ exits through $b$}
\end{cases}
\end{equation}
where $\alpha$ is the planar curve that realizes the exterior arch $\extarc{a}{b}$ as 
in Figure~\ref{fig: domain in general case}
and $\sqcup$ denotes the concatenation of paths.
The first case in \eqref{eq: def hat gamma} is the internal arc pattern $\intarcp{a}{b}{c}{d}$ and
the second is $\intarcp{a}{d}{c}{b}$.

For a sequence of domains $G_\delta^\ddiamond \subset \delta \ddmdlattice$ with
$a_\delta,b_\delta,c_\delta,d_\delta$,
define the observable as
\begin{equation}
f_\delta ( e ) = \theta_\delta \E_\delta (  \ind_{e \in \hat{\gamma}} e^{- i \frac{1}{2} W(d_\delta,e)} ) 
\end{equation}
for any $e \in E(G^\diamond_\delta) \subset E(G^\ddiamond_\delta)$.
Here $W(d_\delta,e)$ is the winding from boundary edge of $d_\delta$ to $e$ along the reversal of $\hat{\gamma}$ and
$\theta_\delta$ (satisfying $|\theta_\delta|=1$) is a constant, whose value we specify later.
Notice that $f_\delta$ doesn't depend on the choice of $\alpha$ since the winding is well-defined modulo $4 \pi$.

\subsection{Preholomorphicity of the observable}\label{ssec: preholo}

\begin{figure}[tbh]
\centering
{
	\includegraphics[scale=.8]
{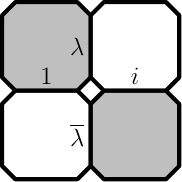}
} 
\hspace{1cm}
{
	\includegraphics[scale=.8]
{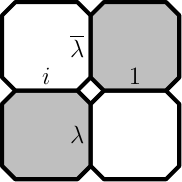}
}
\caption{The associated lines $\R, i \R, \lambda \R, \overline{\lambda} \R$ around the two different types of vertices
of $\dmdlattice$. Note that for a directed edge $e \in \odmdlattice$ the line in the complex plane is $\sqrt{\overline{e}} \, \R$ where
$e$ is interpreted as complex unit vector in its direction.}
\label{fig: complex factor on the lattice}
\end{figure}

Let $\lambda=e^{-i\frac{\pi}{4}}$.
Associate to each edge $e$ of the modified medial lattice one of the following four lines through the origin
$\R, i \R, \lambda \R, \overline{\lambda} \R$ as in Figure~\ref{fig: complex factor on the lattice}.
Denote this line by $l(e)$.

\textbf{Spin preholomorphicity:}
Choose the constant $\theta$ in the definition of $f_\delta$ so that 
the value of $f_\delta$ at the edge $e$ belongs to the line $l(e)$. 
Then $\theta \in \{\pm 1, \pm i, \pm \lambda, \pm \overline{\lambda} \}$.
The $\pm$ sign mostly doesn't play any role, but in some situations it should be chosen consistently.
The observable $f_\delta$ satisfies the relation
\begin{equation}\label{eq: edge s-hol}
f_\delta(e_W) + f_\delta(e_E)=f_\delta(e_S) + f_\delta(e_N) 
\end{equation}
for every vertex $v$ of the medial graph, whose four neighboring edges in counterclockwise order are called $e_N,e_W,e_S,e_E$.
The relation is verified using the same involution among the loop configurations as in \cite{Smirnov:2010ie}.

Therefore, using \eqref{eq: edge s-hol},
we can define $f_\delta(v) \dd= f_\delta(e_W) + f_\delta(e_E)=f_\delta(e_S) + f_\delta(e_N)$ and it satisfies
for any neighboring vertices $v,w$ the identity
\begin{equation}\label{eq: proj s-hol}
\Proj_e (f_\delta(v)) = \Proj_e (f_\delta(w)) 
\end{equation}
where $e$ is the edge between $v$ and $w$ and $\Proj_e$ is the orthogonal projection to the line $l(e)$.
That is, if $l(e)=\eta_e \R$ where $\eta_e$ is a complex number with unit modulus, then
\begin{equation}
\Proj_e (z) = \real[z \,\eta_e^*] \,\eta_e = \frac{z + z^* \eta_e^2}{2} .
\end{equation}
Since $f_\delta$ on $V(G^\diamond_\delta)$ satisfies the relation~\eqref{eq: proj s-hol}, we call it 
\emph{spin-preholomorphic} (or
\emph{strongly preholomorphic}).
Spin-preholomorphic functions satisfy a discrete version
of the Cauchy--Riemann equations \cite{Smirnov:2010ie}.

\subsection{Martingale property  of the observable}

Let $\gamma = T_{a,d}$ where $T_{a,d}$ is the branch of the exploration tree constructed in Section~\ref{ssec: exploration tree definition}.
Let's parametrize $\gamma$ by lattice steps so that 
at integer values of the time, $\gamma$ is at the head of an oriented edge (in $\oddmdlattice$) between black and white octagon 
and at half-integer values of the time, it is at the tail of such an edge. 
Note that step between the times $t=k-1/2$ and $t=k$, $k =1,2,\ldots$, is deterministic given the information up to time $t = k-1/2$. 
Hence between two consecutive integer times, at most one bit of information is generated: 
namely, whether the curve turn left or right in an ``intersection''. Even this choice might be predetermined, if
we are visiting a boundary vertex or a vertex visited already before.

Denote the loop configuration by $\omega$, in the four marked point setting it consist of two arcs and a number of loops.
We note that the pair $(\gamma[0,t],\omega)$ can be sampled in two different ways
%\added{%%%
(which are basically describing the conditional distributions
of one given the other):
%}
\begin{enumerate}
\item In the first option, we sample first $\omega$ and then $\gamma[0,t]$ is a deterministic function of $\omega$ as explained above.
\item In the second option, 
we sample first $\gamma[0,t]$ (or rather we keep the sample of $\gamma[0,t]$ of the previous construction)
and then we (re-) sample $\omega'$ 
%\added{%%%
(We can rename $\omega'$ in the end as $\omega$. The prime symbol is only used so that there is no confusion with $\omega$ defined above.)
%}
in two steps: we sample $\omega''$ in the complement of $\gamma[0,t]$ 
using the boundary condition given by
$\gamma[0,t]$ and then 
for each visit of $\gamma[0,t]$ to the arc $bc \subset V_{\partial,1}$ 
we flip an independent coin $\zeta_v \in \{0,1\}$, $v \in bc$,
such that
\begin{equation}\label{eq: def zeta}
\P(\zeta_v=0) = \frac{1}{1 + \sqrt{2}} , \qquad \P(\zeta_v=1) = \frac{\sqrt{2}}{1 + \sqrt{2}}
\end{equation}
and then we open for each visited $v \in bc$ such that $\zeta_v=1$ the edge $e_v \in E(\sqlattice)$ in $\omega'' \cup \gamma[0,t]$
and call the resulting configuration $\omega'$. 
%\added{%%%
Notice that the numbers on the right-hand sides in
\eqref{eq: def zeta} are $1-p_c$ and $p_c$, repectively. The number $1-p_c$
is the weight of the configuration where the edge at $v$ is closed relative to the configuration
where the edge is open.
%}
\end{enumerate}
See Table~\ref{tbl: martingale property} for details about the equivalence
of these two constructions of $(\gamma[0,t],\omega)$.
%\added{%%%
The configurations are grouped in Table~\ref{tbl: martingale property}
in pairs so that the configurations only differ by the state of the edge at $v$.
By construction,  
the right column occurs for $\gamma$ surely, since the left column leads to branching at $v$. 
The relative weights of the $\sqlattice$-edge at $v$ being open or closed 
are independent of 
are calculated in the table
giving \eqref{eq: def zeta}; namely, the relative weight of the pair of the configuration
is a constant.
%}
%\added{%%%
Notice also that $bc$ mattered only here since when the branch hits $ab$ or $da$ (or even $cd$,
if we continue to process all the way up to $d$)
it continuous to follow the same ``loop''. Thus the loop-to-loop jump can only occur on $bc$.
%}

\begin{table}[!ht]
\begin{tabular}{c|c|c}
 & 
\begin{minipage}[b]{5cm} 
\centering 
\def\svgwidth{5cm}
%% Creator: Inkscape inkscape 0.92.2, www.inkscape.org
%% PDF/EPS/PS + LaTeX output extension by Johan Engelen, 2010
%% Accompanies image file 'boundary-involution-14-2a.pdf' (pdf, eps, ps)
%%
%% To include the image in your LaTeX document, write
%%   \input{<filename>.pdf_tex}
%%  instead of
%%   \includegraphics{<filename>.pdf}
%% To scale the image, write
%%   \def\svgwidth{<desired width>}
%%   \input{<filename>.pdf_tex}
%%  instead of
%%   \includegraphics[width=<desired width>]{<filename>.pdf}
%%
%% Images with a different path to the parent latex file can
%% be accessed with the `import' package (which may need to be
%% installed) using
%%   \usepackage{import}
%% in the preamble, and then including the image with
%%   \import{<path to file>}{<filename>.pdf_tex}
%% Alternatively, one can specify
%%   \graphicspath{{<path to file>/}}
%% 
%% For more information, please see info/svg-inkscape on CTAN:
%%   http://tug.ctan.org/tex-archive/info/svg-inkscape
%%
\begingroup%
  \makeatletter%
  \providecommand\color[2][]{%
    \errmessage{(Inkscape) Color is used for the text in Inkscape, but the package 'color.sty' is not loaded}%
    \renewcommand\color[2][]{}%
  }%
  \providecommand\transparent[1]{%
    \errmessage{(Inkscape) Transparency is used (non-zero) for the text in Inkscape, but the package 'transparent.sty' is not loaded}%
    \renewcommand\transparent[1]{}%
  }%
  \providecommand\rotatebox[2]{#2}%
  \ifx\svgwidth\undefined%
    \setlength{\unitlength}{431.94676208bp}%
    \ifx\svgscale\undefined%
      \relax%
    \else%
      \setlength{\unitlength}{\unitlength * \real{\svgscale}}%
    \fi%
  \else%
    \setlength{\unitlength}{\svgwidth}%
  \fi%
  \global\let\svgwidth\undefined%
  \global\let\svgscale\undefined%
  \makeatother%
  \begin{picture}(1,0.87695338)%
    \put(0,0){\includegraphics[width=\unitlength,page=1]{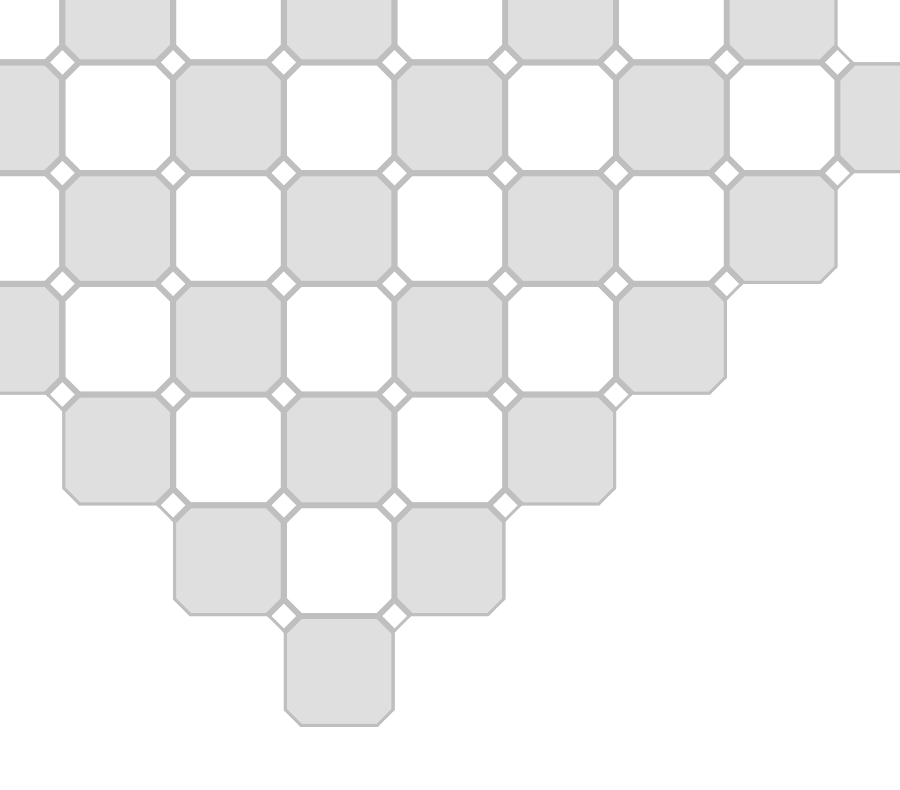}}%
    \put(0.575,0.39){\color[rgb]{0,0,0}\makebox(0,0)[lb]{\smash{$v$}}}%
    \put(0.57008427,0.49441716){\color[rgb]{0,0,0}\makebox(0,0)[lb]{\smash{$e_1$}}}%
    \put(0.485,0.37137052){\color[rgb]{0,0,0}\makebox(0,0)[lb]{\smash{$e_2$}}}%
    \put(0.61151213,0.69539457){\color[rgb]{0,0,0}\makebox(0,0)[lb]{\smash{$e$}}}%       
%%%    \put(0.57559577,0.39981151){\color[rgb]{0,0,0}\makebox(0,0)[lb]{\smash{\textbf{\textit{v}}}}}%
%%%    \put(0.57008427,0.49441716){\color[rgb]{0,0,0}\makebox(0,0)[lb]{\smash{\textbf{\textit{e}}}}}%
%%%    \put(0.58989771,0.48261558){\color[rgb]{0,0,0}\makebox(0,0)[lb]{\smash{1}}}%
%%%    \put(0.51834947,0.37137052){\color[rgb]{0,0,0}\makebox(0,0)[lb]{\smash{\textbf{\textit{e}}}}}%
%%%    \put(0.53816291,0.35956894){\color[rgb]{0,0,0}\makebox(0,0)[lb]{\smash{2}}}%
%%%    \put(0.61151213,0.69539457){\color[rgb]{0,0,0}\makebox(0,0)[lb]{\smash{\textbf{\textit{e}}}}}%
    \put(0,0){\includegraphics[width=\unitlength,page=2]{boundary-involution-14-2a-1.pdf}}%
  \end{picture}%
\endgroup%

\end{minipage} & 
\begin{minipage}[b]{5cm} 
\centering 
\def\svgwidth{5cm}
%% Creator: Inkscape inkscape 0.92.2, www.inkscape.org
%% PDF/EPS/PS + LaTeX output extension by Johan Engelen, 2010
%% Accompanies image file 'boundary-involution-13-3b.pdf' (pdf, eps, ps)
%%
%% To include the image in your LaTeX document, write
%%   \input{<filename>.pdf_tex}
%%  instead of
%%   \includegraphics{<filename>.pdf}
%% To scale the image, write
%%   \def\svgwidth{<desired width>}
%%   \input{<filename>.pdf_tex}
%%  instead of
%%   \includegraphics[width=<desired width>]{<filename>.pdf}
%%
%% Images with a different path to the parent latex file can
%% be accessed with the `import' package (which may need to be
%% installed) using
%%   \usepackage{import}
%% in the preamble, and then including the image with
%%   \import{<path to file>}{<filename>.pdf_tex}
%% Alternatively, one can specify
%%   \graphicspath{{<path to file>/}}
%% 
%% For more information, please see info/svg-inkscape on CTAN:
%%   http://tug.ctan.org/tex-archive/info/svg-inkscape
%%
\begingroup%
  \makeatletter%
  \providecommand\color[2][]{%
    \errmessage{(Inkscape) Color is used for the text in Inkscape, but the package 'color.sty' is not loaded}%
    \renewcommand\color[2][]{}%
  }%
  \providecommand\transparent[1]{%
    \errmessage{(Inkscape) Transparency is used (non-zero) for the text in Inkscape, but the package 'transparent.sty' is not loaded}%
    \renewcommand\transparent[1]{}%
  }%
  \providecommand\rotatebox[2]{#2}%
  \ifx\svgwidth\undefined%
    \setlength{\unitlength}{431.94676208bp}%
    \ifx\svgscale\undefined%
      \relax%
    \else%
      \setlength{\unitlength}{\unitlength * \real{\svgscale}}%
    \fi%
  \else%
    \setlength{\unitlength}{\svgwidth}%
  \fi%
  \global\let\svgwidth\undefined%
  \global\let\svgscale\undefined%
  \makeatother%
  \begin{picture}(1,0.87695338)%
    \put(0,0){\includegraphics[width=\unitlength,page=1]{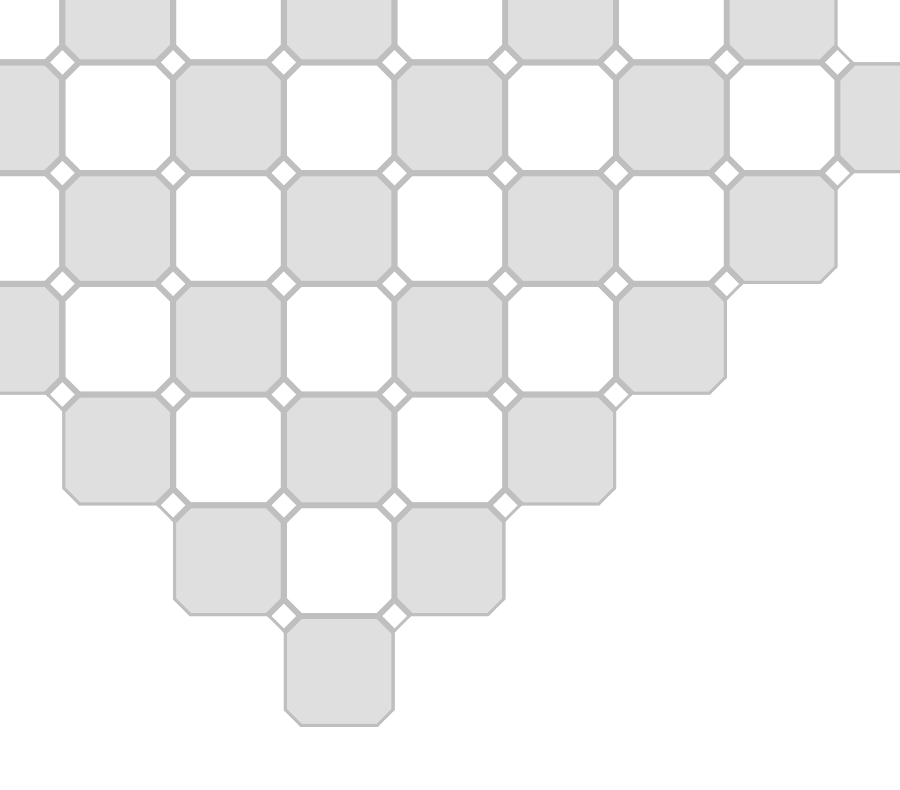}}%
    \put(0.575,0.39){\color[rgb]{0,0,0}\makebox(0,0)[lb]{\smash{$v$}}}%
    \put(0.57008427,0.49441716){\color[rgb]{0,0,0}\makebox(0,0)[lb]{\smash{$e_1$}}}%
    %\put(0.58989771,0.48261558){\color[rgb]{0,0,0}\makebox(0,0)[lb]{\smash{1}}}%
    \put(0.485,0.37137052){\color[rgb]{0,0,0}\makebox(0,0)[lb]{\smash{$e_2$}}}%
    %\put(0.53816291,0.35956894){\color[rgb]{0,0,0}\makebox(0,0)[lb]{\smash{2}}}%
    \put(0.61151213,0.69539457){\color[rgb]{0,0,0}\makebox(0,0)[lb]{\smash{$e$}}}%
    \put(0,0){\includegraphics[width=\unitlength,page=2]{boundary-involution-13-3b-1.pdf}}%
  \end{picture}%
\endgroup%

\end{minipage} \\
\hline
\begin{minipage}[c]{4.5cm} \smallskip \flushleft weight of configuration with $e_1 \in \hat{\gamma}$ \smallskip\end{minipage}
  & $X\sqrt{2}$ & $X$ \\
\hline
\begin{minipage}[c]{4.5cm} \smallskip \flushleft weight of configuration with $e_1 \notin \hat{\gamma}$ \smallskip\end{minipage}
  & $Y\sqrt{2}$ & $Y$ \\
\hline
\begin{minipage}[c]{4.5cm} \smallskip \flushleft winding at $e$ given $e \in \hat{\gamma}$ \smallskip\end{minipage}
  & $W$ & $W \; (\text{mod } 4\pi)$ 
\end{tabular}
\smallskip
\caption{The effect of a jump at a vertex $v$ to the weight of a loop configuration and the winding of $\hat{\gamma}$.
The involution 
%\added{%%%
where the state of the random cluster configuration is changed at $v$
and which is illustrated in this pair of figures,
%} 
preserves the winding factor in the observable and the relative weight of the configuration.
%\added{%%%
All the edges mentioned are in $E(G^\diamond_\delta)$, that is,
they are edges between two octagons. The edges $e_1$ and $e_2$ are edges starting at a boundary vertex $v$
(a vertex along the internal boundary).
The edge $e$ is any fixed edge on the graph, and it is the point at which we consider the value of
the observable.
%}
}
\label{tbl: martingale property}
\end{table}

For a non-negative integer $t$,
let $G_t^\ddiamond$ be the 
%\added{%%%
slit graph
%}  %%%
where we have removed $\gamma[0,t]$ and 
let $a_t$ be the edge $\gamma([t-1/2,t])$.

Next we decorate the boundary arc $bc \subset V_{\partial,1} (G^\diamond)$ with i.i.d.\ random variables
$(\xi_v)$, $\xi_v \in \{-1,1\}$, 
%\added{%%%
$v$ a vertex on $bc$,
%} %%%
such that
\begin{equation}
\P( \xi_v = -1 )=1 - \frac{1}{\sqrt{2}}, \qquad \P( \xi_v = 1 )= \frac{1}{\sqrt{2}}  .
\end{equation}
Let $(\F_t)_{t \in \R_+}$ be the filtration such that the $\sigma$-algebra
$\F_t$ is generated by $\gamma[0,t]$ and $\xi_v$ for any $v \in \gamma[0,t-1] \cap bc$.

Notice that we can couple $\zeta_v$ and $\xi_v$ in such a way that
\begin{equation}
\P( \xi_v = 1 \,|\, \zeta_v=0)=1, \qquad \P( \xi_v = 1 \,|\, \zeta_v=1)=\P( \xi_v = -1 \,|\, \zeta_v=1)=\frac{1}{2} .
\end{equation}
%\added{%%%
We interpret $\zeta_v$ so that $\zeta_v = 1$ if and only if we jump at $v$ 
from one boundary touching loop to the neighboring one,
and $\xi_v$ so that $\xi_v=1$ when there is no jump and $\xi_v$ is
a fair coin flip when there is a jump.
%}

Set for any integer $t$
\begin{align}
M_t^+ &= \P^{G^\ddiamond_t,a_t,b_t,c,d}( \gamma_1 \subset \hat{\gamma}) \\
M_t   &= \left(\prod_{v \in \gamma[0,t-1] \cap bc} \xi_v \right) M_t^+
\end{align}
where we can either take $b_t=b$ or $b_t$ is 
%\added{%%%
the rightmost point visible from $cd$.
The rightmost visible point 
means that we move $b_t$ on all branching events to the place where we cut
the next loop open, and to the edge which points towards that vertex, to be more specific.
%}
%\added{%%%
That is, $M_t^+$ is the probability that $a$ is connected to $d$ by the interface 
(internal arc) in the slit domain and by the proof of the next result, it 
is interpreted as a conditional probability in the original domain.
Lemmas~\ref{lem: M martingale} and \ref{lem: N martingale} below are very central for the
proof of the main theorem, namely, the law of the exploration process
is determined from the martingale property of these quantities.
%}%% 

\begin{lemma}\label{lem: M martingale}
$(M_t)$ is $(P,\F_t)$ martingale.
\end{lemma}

\begin{proof}
First notice that if $\gamma(t) \notin bc$, then $\E[ M_{t+1}^+ \,|\, \F_t] = M_t^+$ by the Markov property
of random cluster model and hence
\begin{equation}
\E[ M_{t+1} \,|\, \F_t] = \left(\prod_{v \in \gamma[0,t-1] \cap bc} \xi_v \right) \E[ M_{t+1}^+ \,|\, \F_t] = M_t .
\end{equation}
On the other hand, if $\gamma(t) \in bc$, then $M_{t+1}^+ = (1+\sqrt{2}) \, M_t^+$ 
%\added{%%%
by \eqref{eq: def zeta},
in other words, $M_{t+1}^+$ would hit $0$ if $\gamma$ continued to follow a loop which turns away from $d$,
and thus the other possible value of $M_{t+1}^+$ (when $\gamma$ turns towards $d$) has to be by
the Markov property of random cluster model and consequent martingale property, equal to
$(1+\sqrt{2}) \, M_t^+$ which is one over the probability of that event times the value
of $M_t^+$.
Hence
%}
\begin{equation}
\E[ M_{t+1} \,|\, \F_t] = \left(\prod_{v \in \gamma[0,t-1] \cap bc} \xi_v \right) M_t^+ \,  
   \underbrace{(1+\sqrt{2}) \, \E[ \xi_{\gamma(t)} \,|\, \F_t]}_{=1} = M_t .
\end{equation}
Therefore $M_t$ is a martingale.
\end{proof}

Define 
%\added{%%%
for fixed edge $e$
%}
\begin{equation}
N_t = \theta_\delta \, \E^{G^{\ddiamond}_t,a_t,b_t,c,d} (\ind_{e \in \hat{\gamma}} e^{-i \frac{1}{2} W(d,e)} )
\end{equation}
where $\theta_\delta$ is the constant with unit modulus chosen in Section~\ref{ssec: preholo}.
In words, the value of the process $(N_t)$ at time $t$ is the value at the given edge $e$ of the observable 
on the slit domain at time $t$. 
For the next property $\theta_\delta$ needs to be chosen consistently. This can be done for example by requiring that the sign of
the observable is always the same at $d$.

\begin{lemma}\label{lem: N martingale}
$(N_t)$ is $(P,\F_t)$ martingale.
\end{lemma}

\begin{proof}
If $\gamma(t) \notin bc$, then clearly $\E[ N_{t+1} | \F_t] = N_t$.

If $\gamma(t) \in bc$, then $N_{t+1} = N_t$,
%\added{%%%
since by the observations in Table~\ref{tbl: martingale property}
the random variables
$\ind_{e \in \hat{\gamma}}$ and $e^{-i \frac{1}{2} W(d,e)}$ are independent 
from the state of the edge at $\gamma(t)$ conditionally on the history up to time $t$.
%}
Therefore $N_t$ is a martingale.
\end{proof}

\subsection{Convergence of the observables}

In this subsection we first work on the scaling limit of the \emph{4-point observable}. In that approach we keep the points $c_\delta$
and $d_\delta$ at a macroscopically positive distance. That could be considered as a motivation for the so called
\emph{3-point} or \emph{fused observable} where we take the same limit, but keeping $c_\delta$
and $d_\delta$ at a microscopically bounded distance.

\subsubsection{Convergence of the 4-point observable}\label{ssec: convergence observable 4-pt}

\begin{figure}[tbh]
\centering
\subfigure[The setup: $a$ and $c$ are inward pointing edges, $b$ and $d$ are outward pointing edges. The picture indicates also
the boundary conditions: $ab$ is free, $bc$ wired etc. 
We need the external connection from $b$ to $a$ to ensure
that the random cluster measure is consistent with the one introduced in Section~\ref{sec: setup}.] 
{
	\label{sfig: setup and observable a}
	\includegraphics[scale=.8]
{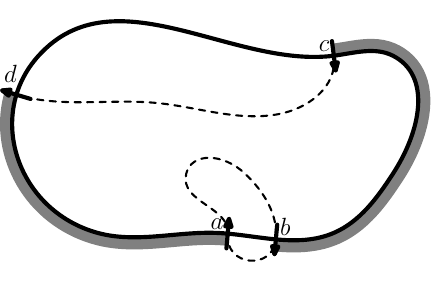}
} 
\hspace{0.2cm}
\subfigure[The discrete function $H_\delta$ is almost discrete harmonic: its restriction to black squares is 
discrete subharmonic
and to white squares discrete superharmonic. The boundary conditions it satisfies are given in the picture.
Notice that there is an unknown constant $0<\beta<1$ which has the interpretation that $\sqrt{\beta}$ 
is the probability of an internal connection
pattern.]
{
	\label{sfig: setup and observable b}
	\includegraphics[scale=.8]
{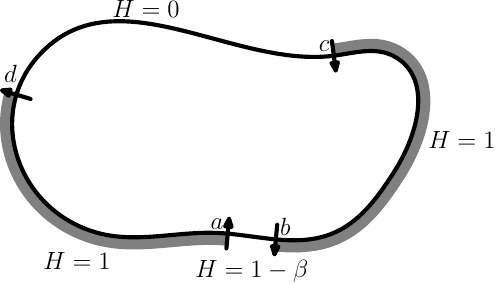}
}
\caption{The setup for the random cluster measure, for the curves and loops and the boundary conditions satisfied
by the (discrete) harmonic functions.}
\label{fig: setup and observable}
\end{figure}

It is straightforward to apply the reasoning of \cite{Smirnov:2010ie} to this case. We only summarize the method here without any proof.
See also the lecture notes \cite{DuminilCopin:2012wb}.

The observable $f_\delta$ is given on the medial lattice. It is preholomorphic on the vertices of the lattice and
it has well-defined projections to the complex lines $l(e)$ defined on the edges of the lattice.
Define a function $H_\delta$ on the square lattice by setting $H_\delta(B_0)=1$ where $B_0$ is the black square
next to $d$ and then extending to other squares by
\begin{equation}
H_\delta(B) - H_\delta(W) = |f_\delta(e)|^2
\end{equation}
where $B$ and $W$ are any pair of neighboring black and white square and $e$ is the edge of the medial lattice between
them. Now $H_\delta$ is well-defined since by the properties of $f_\delta$ the sum of differences of $H_\delta$ along
a closed loop is zero.
The boundary values $H_\delta$ are the following, see also Figure~\ref{fig: setup and observable}:
\begin{itemize}
\item $H_\delta$ is equal to $0$ on the arc $c d$.
\item $H_\delta$ is equal to $1$ on the arcs $d a$ and $b c$.
\item $H_\delta$ is equal to $1-\beta_\delta$ on the arc $a b$.
\end{itemize}

The relation between $f_\delta$ and $H_\delta$ becomes more clear after a small calculation. Namely,
by this calculation for neighboring black squares $B,B'$ with a vertex $v \in L_\diamond$
\begin{equation}
H_\delta(B') - H_\delta(B) = \frac{1}{2} \imag \left( f_\delta(v)^2 \,\frac{B'-B}{\delta} \right).
\end{equation}
Notice that the complex number $(B'-B)/\delta$ has modulus $\sqrt{2}$.
The natural interpretation is that $H_\delta(B)$ is the imaginary part of
the discrete integral (counted in lattice-step units) of $\frac{1}{2}f_\delta^2$ from $B_0$ to $B$ along
any connected path of black squares.
%\added{%%%
This means that
%}
\begin{equation}
H_\delta(B') = H_\delta(B) +  \imag \int_B^{B'} \left( \frac{1}{\sqrt{2\delta}} f_\delta\right)^2
\end{equation}
where 
%\added{
the integral sign denotes
%}
the discrete integral is over any lattice path connecting $B$ to $B'$ 
and is defined as the sum over the edges of the path,
of the integrand evaluated at the mid point of the edge times the complex number which is the difference of the head and the tail
of the edge.

Denote the restriction of $H_\delta$ to black and white squares by $H_\delta^\bullet$ and $H_\delta^\circ$, respectively.
Again by the properties of $f_\delta$, $H_\delta^\bullet$ is subharmonic 
and $H_\delta^\circ$ is superharmonic, see 
\cite[Lemma~3.8]{Smirnov:2010ie}. 
%\cite[\added{Lemma~3.8}]{Smirnov:2010ie}. 
Let $\tilde{H}_\delta^\bullet$ be the preharmonic function on the black squares with the same boundary values
as $H_\delta^\bullet$ and similarly $\tilde{H}_\delta^\circ$ be the preharmonic extension of the boundary values
of $H_\delta^\circ$. Also extend all these function to be continuous functions, say, by using the bilinear
extension which in takes the form
\begin{equation*}
h(x,y)= a_1 +a_2 x + a_3 y + a_4 x y 
\end{equation*}
in each square and matches with the values given at the corners of the square.
Then at each interior point
\begin{equation}\label{ie: harmonic major and minor}
\tilde{H}_\delta^\circ(z) \leq H_\delta^\circ(z) \leq H_\delta^\bullet(z) \leq \tilde{H}_\delta^\bullet(z) .
\end{equation}
Next apply standard difference estimates to show that the preharmonic functions $\tilde{H}_\delta$ and 
$\tilde{H}_\delta^\bullet$ have convergent
subsequences and also using crossing estimates show that their boundary values approach to each other.
Since $0 \leq \beta_\delta \leq 1$, by taking a subsequence 
%\added{%%%
we assume that
%}
$\beta_\delta$ converges to a number $\beta$ and $H_\delta$ converges to a harmonic function on $\domain$
with the boundary values $H=0$ on $cd$, $H=1$ on $bc$ and $da$ and $H=1-\beta$ on $ab$.

As explained in 
%\cite[\added{Section~5}]{Smirnov:2010ie} 
\cite[Section~5]{Smirnov:2010ie} 
we can extend the convergence of $H_\delta$ to the convergence of $f_\delta$
and hence along the same subsequence as $H_\delta$ converges to $H$ also $\frac{1}{\sqrt{2 \delta}} f_\delta$ converges to
$f$ defined by
\begin{equation}
f(z) = \sqrt{\phi'(z)}
\end{equation}
where $\phi$ is any holomorphic function with $\imag \phi = H$.

In fact, the value of $\beta$ is determined uniquely and it depends only on the conformal type of the domain. 
We'll give the argument in Section~\ref{ssec: beta 4pt}
for completeness following the lines of Section~\ref{sssec: convergence 3pt}. Suppose for now that it is the case.
Then it follows that the whole sequence $\frac{1}{\sqrt{2 \delta}} f_\delta$ converges.

\subsubsection{A proof for the fused case}\label{sssec: convergence 3pt}

Consider now the same setup, but when $c$ and $d$ are close to each other, say, at most at bounded lattice distance from each other.
Denote by $z_0$ the point in the continuum that $c$ and $d$ are approximating. 
We will deal with the case of flat boundary near $z_0$.

We will change the definition of $H$ to that of $1-H$, it means that on the arc $ab$, $H \equiv \beta$, on the arcs $bc$ and $da$,
$H \equiv 0$ and on the arc $cd$, $H \equiv 1$. Then $H$ is superharmonic when restricted to $\sqlattice$ and subharmonic
when restricted to $\dsqlattice$ and the inequalities in \eqref{ie: harmonic major and minor} are reversed.
 
We expect that in the fused ($3$-point) limit, $\beta_\delta$ goes to zero, which we have to compensate
by renormalizing the observable. In effect, the value of $H$ on $cd$ will go to infinity and we expect to get a Poisson kernel type
singularity. Hence we say that there is a singularity at $z_0$ (or at $c$ or $d$).

We will make the following definitions:
\begin{itemize}
\item The discrete half-plane $\half_\delta^{(z_0, \theta)}$ is a discrete approximation of 
$\half^{(z_0, \theta)} \dd=z_0 + e^{i \theta} \half$
where $\half = \{ z \in \C \,:\, \imag[z]>0\}$. 
Suppose that the boundary lies between two parallel lines that are at
a distance which remains bounded (in lattice steps) as $\delta \to 0$. Assume also that the projection of a parametrization of 
the boundary on one of the lines is a monotone function, at least in sufficiently large neighborhood of $z_0$. 
This ensures that there are no long fjords near $z_0$.
\item 
$H^{+}_\delta$, $f^+_\delta$ are the half plane functions 
on $\half_\delta^{(z_0, \theta)}$ with the ``singularity'' at $z_0$. 
%\added{%%%
That is, $f^+_\delta$  is the unique, up to $\pm$ sign, 
bounded preholomorphic function on $\half_\delta^{(z_0, \theta)}$
whose boundary values are (i) $\pm 1$ times $l(c)$ (see Section~\ref{ssec: preholo})
on the edge at $c$, (ii) that value transported to $d$ by a lattice path connecting $c$ to $d$
in $\half_\delta^{(z_0, \theta)}$ 
on the edge at $d$
and (iii) otherwise satisfy the boundary condition that $f^+_\delta$ is parallel
to $\frac{1}{\sqrt{\tau(z)}}$ where $\tau(z)$ is the unit tangent at the boundary point $z$
(e.g. if the incoming edge is pointing to the direction of the unit complex number $e_1$
and the outgoing edge is pointing to the direction of the unit complex number $e_2$, then
the uni tangent is $(e_1 + e_2)/\sqrt{2}$). Such $f^+_\delta$ can be defined also
as the usual FK Ising observable in $\half_\delta^{(z_0, \theta)}$. The function
$H^{+}_\delta$ is defined similarly as above by summing $\pm |f^+_\delta(e)|^2$ along
lattice paths, and its boundary values are $1$ on $cd$ and $0$ elsewhere.
%}
\item Suppose that $\Omega_\delta$ is a discrete domain and $z_0$ is its boundary point. 
Assume that the boundary in a $r$~neighborhood of $z_0$ is flat in the same uniform manner 
in $\delta$ as in the definition of the half plane $\half_\delta^{(z_0, \theta)}$.
\item $H_\delta$, $f_\delta$ are the functions on $\Omega_\delta$ with the ``singularity'' at $z_0$.
\end{itemize}

The next lemma gives the convergence of the observable in a half-plane. We will compare the other observables to this one.

\begin{lemma}\label{lm: obs convergence half-plane}
Let $w_\delta$ be the discrete approximation of the point $z_0 + i\,e^{i \theta}$. Let $L_\delta = H^+_\delta(w_\delta)$
and $\hat{H}^+_\delta = \frac{1}{L_\delta} H^+_\delta$. The following statements hold:
\begin{enumerate}\enustyii
\item \label{enui: obs convergence half-plane i}
As $\delta \to 0$ the sequence $\hat{H}^+_\delta$ converges uniformly on compact sets to $\imag ( -\frac{e^{i \theta}}{z-z_0})$.
\item \label{enui: obs convergence half-plane ii}
%\added{%%%
For each sequence $\delta_n \searrow 0$ as $n \to \infty$, there exists a subsequence $\delta_{n_k}$
and a constant $c^+ \in \R_{>0}$ such
that as $k \to \infty$ the sequence $L_{\delta_{n_k}}/\delta_{n_k}$ converges to $c^+$
and the sequence $\delta_{n_k}^{-1} H^+_{\delta_{n_k}}$ converges
uniformly on compact sets to $c^+ \imag ( -\frac{e^{i \theta}}{z-z_0})$.
%}
\end{enumerate}
\end{lemma}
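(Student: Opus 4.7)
The plan is to combine the sub-/superharmonic sandwich framework established for the 4-point observable in Section~\ref{ssec: convergence observable 4-pt} with a classical characterization of positive harmonic functions on the half-plane with a single boundary singularity, and then to leverage the discrete preholomorphicity of $f^+_\delta$ to pin down the scale $L_\delta/\delta$.

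First, I would prove \ref{enui: obs convergence half-plane i} by a precompactness-plus-characterization argument. With the sign convention announced in Section~\ref{sssec: convergence 3pt}, the restrictions $H^{+,\bullet}_\delta$ and $H^{+,\circ}_\delta$ are respectively superharmonic and subharmonic, so that the preharmonic replacements $\tilde H^{+,\bullet}_\delta$, $\tilde H^{+,\circ}_\delta$ sandwich $H^+_\delta$ in the manner of \eqref{ie: harmonic major and minor} (with reversed inequalities). Dividing through by $L_\delta = H^+_\delta(w_\delta)$, the standard equicontinuity estimates for discrete harmonic functions on discrete half-planes make the family $\{\hat H^+_\delta\}$ precompact in the topology of uniform convergence on compact subsets of $\half^{(z_0,\theta)}\setminus\{z_0\}$, and any subsequential limit is a nonnegative harmonic function on $\half^{(z_0,\theta)}\setminus\{z_0\}$, vanishing on the boundary away from $z_0$, and taking value $1$ at $z_0+i\,e^{i\theta}$. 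The flatness assumption imposed on $\partial\half^{(z_0,\theta)}_\delta$ near $z_0$ ensures that the boundary convergence holds up to the boundary minus an arbitrarily small neighborhood of $z_0$.

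Second, to identify the limit, I would apply the classical fact that every nonnegative harmonic function on $\half$ with zero boundary values on $\R\setminus\{0\}$ is a nonnegative multiple of the half-plane Poisson kernel at $0$, together with a mild growth control at $z_0$ and at infinity. The growth control is derived from discrete crossing/harmonic-measure estimates near $z_0$ (bounding $\hat H^+_\delta$ by $C/|z-z_0|$ and by $C/|z|$ at infinity respectively), which combined with the explicit boundary geometry rule out any additional harmonic summand at $z_0$ or at infinity. Rotating and translating back, the limit is therefore a nonnegative multiple of $\imag(-e^{i\theta}/(z-z_0))$, and the normalization at $w_\delta$ (where the Poisson kernel equals $1$) fixes this multiple to be $1$. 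Uniqueness of the subsequential limit then upgrades convergence to the full sequence, proving \ref{enui: obs convergence half-plane i}.

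Third, for \ref{enui: obs convergence half-plane ii} the task reduces to showing that $L_\delta/\delta$ is bounded above and below uniformly in $\delta$. The upper bound follows from a standard discrete Poisson-kernel estimate in the half-plane: with boundary data supported on an arc of length $O(\delta)$ next to $z_0$ with jump of size $1$, while being $0$ elsewhere on the boundary, one gets $H^+_\delta(w_\delta)\le C\,\delta$ with $C$ depending only on the local geometry near $z_0$ and the reference point. The lower bound is the principal difficulty: one must rule out $L_\delta/\delta\to 0$ along subsequences. I would attack this via the discrete preholomorphic function $f^+_\delta$: its strong preholomorphicity \eqref{eq: proj s-hol} and the relation $H_\delta(B)-H_\delta(W)=|f_\delta(e)|^2$ let one transfer a quantitative lower bound on $|f^+_\delta|^2$ near $z_0$ (itself arising from a probabilistic lower bound on the event $\{e\in\hat\gamma\}$ for $e$ adjacent to the singular edges $c,d$) into a lower bound of order $\delta$ on $H^+_\delta$ at a fixed macroscopic point. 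Passing to a convergent subsequence of $L_{\delta_{n_i}}/\delta_{n_i}\to c^+>0$ and combining with \ref{enui: obs convergence half-plane i} yields the claim.

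The main obstacle is precisely the lower bound in \ref{enui: obs convergence half-plane ii}; the upper estimate and the identification of the profile are handled by general preharmonic theory plus the sandwich of Section~\ref{ssec: convergence observable 4-pt}, whereas excluding $L_\delta/\delta\to 0$ relies essentially on the discrete holomorphicity of $f^+_\delta$ (equivalently, on a quantitative non-degeneracy of the underlying FK Ising boundary observable in the fused limit), and it is this step on which the whole fused-case analysis hinges.
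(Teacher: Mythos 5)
Your treatment of part \ref{enui: obs convergence half-plane i} is essentially the paper's argument: positivity, the reversed sandwich \eqref{ie: harmonic major and minor}, uniform bounds away from $z_0$ (the paper gets these from Harnack plus the boundary Harnack principle), compactness, and identification of every subsequential limit with the normalized Poisson kernel — your explicit remark that one must also rule out the linear summand $b\,\imag(e^{-i\theta}(z-z_0))$ at infinity is a point the paper leaves implicit, and your growth bound handles it correctly. So part (i) is fine and follows the same route.

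The issue is your lower bound in part \ref{enui: obs convergence half-plane ii}. You present it as the crux of the fused case and propose to obtain it from a probabilistic non-degeneracy of the observable near $c,d$, transferred to a macroscopic point via strong preholomorphicity \eqref{eq: proj s-hol}; but the transfer step is only asserted, and as stated it is not an argument: knowing $|f^+_\delta|\asymp 1$ at a microscopic edge next to the singularity does not by itself propagate to a bound at unit distance without some propagation mechanism, which you never specify. In fact no such input is needed, because the non-degeneracy is already built into the boundary data of $H^+_\delta$: after the sign flip of Section~\ref{sssec: convergence 3pt}, $H^+_\delta\equiv 1$ on the fused arc $cd$, which consists of a bounded number of lattice edges, and $H^{+,\bullet}_\delta$ is superharmonic, hence dominates the discrete harmonic function with the same boundary values; evaluating that minorant at $w_\delta$ is just the exit probability of a simple random walk through an arc of $O(1)$ edges seen from unit distance, which is comparable to $\delta$. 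This is exactly how the paper argues, symmetrically to your upper bound (subharmonicity of $H^{+,\circ}_\delta$ gives $L_\delta\leq C_2\delta$), so both bounds come from the same elementary sandwich-plus-harmonic-measure estimate, with no appeal to the probabilistic interpretation of $\{e\in\hat\gamma\}$ or to a preholomorphic transfer. To repair your write-up, replace the proposed transfer by this comparison; the remaining step (extracting a subsequence along which $\delta_n^{-1}L_{\delta_n}\to c^+\in[C_1,C_2]$ and multiplying the conclusion of part (i) by it) you already have.
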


%%%%%

\begin{proof}
Let's first prove \ref{enui: obs convergence half-plane i} and then use that result to prove
\ref{enui: obs convergence half-plane ii}.

We need here that $\hat{H}^+_\delta(w_\delta)=1$ and $\hat{H}^+_\delta \geq 0$.
Harnack's inequality and Harnack boundary principle \cite{ChelkakSmirnov:2012} imply that for any $r>0$,
$\hat{H}^+_\delta \leq C$ in $\half_\delta^{(z_0, \theta)} \setminus B(z_0,r)$.

Fix a compact subset of $\half^{(z_0,\theta)}$ with non-empty interior. 
%\added{%%%
From boundedness of $\hat{H}^+_\delta$ in the whole domain it follows
that the corresponding $f$ function $\frac{1}{\sqrt{\delta}} \hat{f}^+_\delta$ remains bounded
on that compact set, see \cite[Section~5.1.3]{DuminilCopin:2012wb}.
The boundary the values of $\hat{H}^+_\delta$
on $\sqlattice$ and $\dsqlattice$ are close 
and hence the harmonic extensions
of the two functions to the interior are close. Hence (by a standard argument)
along a subsequence $\hat{H}^+_\delta$ converges
to a function on that compact set and the limit is harmonic in the interior points.
It follows that $\frac{1}{\sqrt{\delta}} \hat{f}^+_\delta$ converges uniformly on compact
subset along that subsequence similarly as in Section~\ref{ssec: convergence observable 4-pt}.
%}

By taking an increasing sequence of compact sets we see that the convergence takes place in 
the whole half-plane for a subsequence. 
The limit has to be the Poisson kernel of the half-plane 
normalized to have value $1$
at the point $z_0 + i\,e^{i \theta}$,
%\added{
because the limit is harmonic and
$\hat{H}^+_\delta$ is non-negative, has zero boundary values
away from $z_0$ and satisfies the normalization $w_\delta$.
This claim can be proven using integration with respect 
to the Poisson kernel of the upper half-plane
(shifted by small $\epsilon>0$ towards the interior of the domain).
%}
Since the limit is the same for all subsequences, the whole sequence converges.

For \ref{enui: obs convergence half-plane ii} we use
the first claim, \ref{enui: obs convergence half-plane i}, and the fact that 
$H^+_\delta = L_\delta \, \hat{H}^+_\delta$.
Notice that the harmonic upper and lower bound for the restrictions of $H^+_\delta $
to $\dsqlattice$ and $\sqlattice$ can be bounded from above and from below, respectively,
by a quantity of the form $const. \delta $ just 
%\added{%%%
by considering the probabilities of simple random walks to exit the domain through $cd$
%}
on these two lattices. 
%\added{%%%
Notice that we use here the fact that the boundary is flat around $z_0$
(though such bounds hold also true, when the boundary is smooth and
the approximating discrete boundary is well chosen, see also 
Remark~\ref{rem: obs smooth boundary}).
%}
The best constants that we get might have a gap in between.
Nevertheless, there exist constants $C_1$ and $C_2$ such that 
$0 < C_1 < \delta^{-1} L_\delta < C_2 < \infty$ for small enough $\delta$.

Thus we can take a subsequence $\delta_n$ so that $\delta_n^{-1} L_{\delta_n}$ converges as $n \to \infty$.
Then the claim holds for $c^+ = \lim_{n \to \infty} \delta_n^{-1} L_{\delta_n}$.
\end{proof}

%%%%%

\begin{proposition}\label{prop: convergence of fused observable}
%\added{%%%
For any sequence $\delta_n \searrow 0$ as $n \to \infty$, 
any sequence $\half_{\delta}^{(z_0, \theta)}$ (where $z_0$, $\theta$ are fixed, for simplicity)
and any $r>0$, 
there exists a subsequence $\delta_{n_k}$
and constants $c \in \R_{>0}$ and $\beta \in \R_{\geq 0}$ such
that  
for any sequence of domains $\Omega_{\delta_n}$ that agrees with 
$\half_{\delta_n}^{(z_0, \theta)}$ in the $r$-neighborhood of
$z_0$ and that converges to a domain $\Omega$ in the Carath\'eodory sense, 
the sequence $\delta_{n_k}^{-1} H_{\delta_{n_k}}$ converges as $k \to \infty$
%} 
uniformly on compact sets to a harmonic function $h$ that satisfies the following boundary conditions
\begin{equation}\label{eq: h at boundary}
h=\begin{cases}
c \beta & \text{on } (a\,b) \\
%\added{%%%
\text{bounded}
%}% 
& 
%\added{%%%
\text{near $a$ and $b$}
%}%
\\
0 & \text{on } (b\, z_0) \text{ and } (z_0\, a) \\
c \imag ( -\frac{e^{i \theta}}{z-z_0}) + \OO(1) & \text{as } z \to z_0
\end{cases}
\end{equation}
Furthermore, $\delta_{n_i}^{-1} H^+_{\delta_{n_i}}$ converges
uniformly on compact sets to $c^+ \imag ( -\frac{e^{i \theta}}{z-z_0})$ along the same subsequence
and $c=c^+$ and the convergence is uniform over all domains $\Omega$.
\end{proposition}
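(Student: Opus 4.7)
The plan is to reduce the convergence of $\delta^{-1} H_\delta$ to the half-plane convergence already provided by Lemma~\ref{lm: obs convergence half-plane}. Because $\Omega_\delta$ and $\half_\delta^{(z_0,\theta)}$ coincide in the $r$-neighborhood of $z_0$, the singular parts of $H_\delta$ and $H^+_\delta$ at $z_0$ ought to match, while their difference is an almost-harmonic function driven only by the ``global'' boundary data of $\Omega_\delta$. The plan has three parts: (i) an a priori $L^\infty$ bound and equicontinuity for $\delta^{-1} H_\delta$ on compact subsets of $\overline{\Omega}\setminus\{z_0\}$; (ii) extraction of a subsequential harmonic limit $h$ and identification of its boundary values on the three arcs; (iii) identification of the singular part, and in particular of $c=c^+$.

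For (i), work on $D_\delta = \Omega_\delta \cap B(z_0, r/2)$, which coincides with $\half_\delta^{(z_0,\theta)} \cap B(z_0, r/2)$ for small $\delta$. By the sub- and superharmonicity of $H_\delta$ on $\sqlattice$ and $\dsqlattice$, the standard comparison with preharmonic extensions $\tilde H^\bullet_\delta, \tilde H^\circ_\delta$ controls $H_\delta$ in terms of the discrete harmonic measure of its boundary data, which splits as a single-edge source of value $1$ at $(cd)$, value $\beta_\delta$ on $(ab)$, and $0$ on $(bc)\cup(da)$. The source alone contributes $\OO(\delta/|z-z_0|)$ by the same discrete Poisson-kernel estimates as in Lemma~\ref{lm: obs convergence half-plane}. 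Applying the discrete maximum principle to $H_\delta - H^+_\delta$ on $D_\delta$ (where both vanish on the common boundary $\partial D_\delta \cap \partial \Omega_\delta$, while both are $\OO(\delta/r)$ on the interior arc $\partial B(z_0,r/2)\cap\Omega_\delta$ up to the $\beta_\delta$ contribution) bootstraps the bound and forces $\beta_\delta = \OO(\delta)$ as a by-product, so $\delta^{-1}H_\delta = \OO(1)$ uniformly on compacts away from $z_0$. Equicontinuity then follows from standard discrete Hölder regularity for nearly harmonic functions, including up to the flat boundary near $z_0$ and up to the smooth boundary elsewhere.

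For (ii) and (iii), extract a subsequence $\delta_{n_i}$ along which $\delta_{n_i}^{-1} H_{\delta_{n_i}}$ converges on compacts to some function $h$ harmonic on $\Omega\setminus\{z_0\}$, $\beta_{\delta_{n_i}}/\delta_{n_i}$ converges to some number $c\beta \in [0,\infty)$, and the conclusion of Lemma~\ref{lm: obs convergence half-plane} holds, giving $\delta_{n_i}^{-1} H^+_{\delta_{n_i}} \to c^+ \imag(-e^{i\theta}/(z-z_0))$. The flat-boundary Hölder control yields the boundary values $c\beta$ on $(ab)$ and $0$ on $(b,z_0)\cup(z_0,a)$. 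To identify the singularity, apply the discrete maximum principle again to $H_\delta - H^+_\delta$ on $D_\delta$: both vanish on $\partial D_\delta \cap \partial\Omega_\delta$ and both are $\OO(\delta)$ on $\partial B(z_0,r/2)\cap\Omega_\delta$ by step (i) and by Lemma~\ref{lm: obs convergence half-plane}, so $H_\delta - H^+_\delta = \OO(\delta)$ throughout $D_\delta$. Hence $\delta^{-1}(H_\delta - H^+_\delta)$ converges to a harmonic function that is bounded near $z_0$, and combining with the explicit singular form of $\delta^{-1}H^+_\delta$ gives $h(z) = c^+ \imag(-e^{i\theta}/(z-z_0)) + \OO(1)$ as $z \to z_0$. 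This forces $c=c^+$.

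The harmonic function on $\Omega\setminus\{z_0\}$ with the prescribed boundary values and the specified singular behavior at $z_0$ is unique, so every subsequential limit of $\delta^{-1} H_\delta$ coincides and the full sequence converges. Uniformity over the class of admissible $\Omega$ follows because all quantitative estimates depend only on Carathéodory convergence and on the local geometry near $z_0$. The delicate step is the coupling of the a priori control on $H_\delta$ with the scaling $\beta_\delta = \OO(\delta)$: one does not get $\beta_\delta = \OO(\delta)$ directly from probabilistic arguments in a form strong enough for this proof, and the crucial mechanism is to obtain both bounds simultaneously through comparison with $H^+_\delta$ on the small ball $D_\delta$, exploiting that the two observables share the exact same singular boundary data at $z_0$.
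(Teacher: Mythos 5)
The overall strategy (compare $H_\delta$ with the half-plane observable $H^+_\delta$ near $z_0$, deduce that the singular coefficients agree) is the same as the paper's, but the step that carries all the weight in your argument does not work as written. You apply the discrete maximum principle to $H_\delta - H^+_\delta$ on $D_\delta$. The functions $H_\delta$ and $H^+_\delta$ are not discrete harmonic: each is only subharmonic on one sublattice and superharmonic on the other, and a \emph{difference} of two such functions has no sub- or superharmonicity on either sublattice, so there is no maximum principle for it. One could try to repair this by sandwiching each function between its preharmonic majorant and minorant, but that only yields two-sided bounds whose constants differ (exactly the ``gap'' acknowledged in the proof of Lemma~\ref{lm: obs convergence half-plane}\,(ii), coming from the distinct exit-probability asymptotics on $\sqlattice$ and $\dsqlattice$); such bounds give $c \asymp c^+$ but cannot give the exact equality $c=c^+$, which is the whole point of the proposition. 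The paper's mechanism is different and is precisely designed to avoid this: it considers $\tilde h_\delta = \imag \int (f_\delta - f^+_\delta)^2$, i.e.\ the square of the \emph{difference of the spin-preholomorphic observables} rather than the difference of the $H$-functions. Since $f_\delta - f^+_\delta$ is again spin-preholomorphic and has its values on the same lines $l(e)$ on the flat boundary near $z_0$, $\tilde h_\delta$ has zero boundary values there, enjoys the sub/superharmonic comparison, and is therefore uniformly bounded inside $\half_\delta^{(z_0,\theta)} \cap B(z_0,r)$ by its (convergent, hence bounded) values on $\partial B(z_0,r)$. If $c \neq c^+$, the continuum limit $\imag\int (f-f^+)^2$ has a Poisson-kernel singularity with the \emph{positive} coefficient $(\sqrt{c}-\sqrt{c^+})^2$, hence is unbounded near $z_0$ --- a contradiction. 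Note that this cancellation at the level of $(f_\delta - f^+_\delta)$ is not recoverable from estimates on $H_\delta$ and $H^+_\delta$ separately, so your claim that $H_\delta - H^+_\delta = \OO(\delta)$ throughout $D_\delta$ (and the by-product $\beta_\delta = \OO(\delta)$ extracted from it) is a genuine gap, not a routine omission.

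Two smaller points: the first part of the statement (existence of the subsequential harmonic limit with boundary values $c\beta$, $0$, and the stated singularity) is obtained in the paper by rerunning the normalization/Harnack/compactness argument of Lemma~\ref{lm: obs convergence half-plane}\,(ii), which is essentially what you describe and is fine. But your final claim that all subsequential limits coincide so the \emph{full} sequence converges overshoots: the constant $c^+$ is itself only defined along a subsequence (Lemma~\ref{lm: obs convergence half-plane}\,(ii)), and the proposition only asserts subsequential convergence; full-sequence convergence holds only after renormalizing by the value of $H^+$, as the paper remarks.
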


\begin{proof}
Use the same argument as in the proof of Lemma~\ref{lm: obs convergence half-plane}~\ref{enui: obs convergence half-plane ii}
to show that $\delta_{n_i}^{-1} H_{\delta_{n_i}}$ converges 
uniformly on compact sets to a harmonic function $h$ and it satisfies \eqref{eq: h at boundary}.
Suppose that $c \neq c_+$. Then, 
%\added{%%%
because $f$ can be recovered from $h$
by the formula $f = \sqrt{2 i \, \partial_z h}$, it is straightforward to show that
%}%
\begin{equation}
\tilde{h} = \imag \int \big(f-f^+\big)^2 
= \big(\sqrt{c} - \sqrt{c^+}\big)^2 \frac{\imag ((z-z_0) \, e^{-i\theta})}{|z-z_0|^2} 
  + \OO(\imag ((z-z_0) \, e^{-i\theta}))
\end{equation}
as 
%%%
%\added{%%%
$z \to z_0$.
%}% 
There exists $r>0$ such that $\tilde{h}$ is positive in
$\half^{(z_0, \theta)} \cap B(z_0,r)$.

Next we notice that $\tilde{h}$ is not bounded in $\half^{(z_0, \theta)} \cap B(z_0,r)$.
On the other hand, if we consider the discrete version $\tilde{h}_\delta$ of $\tilde{h}$, it 
must remain  bounded on $\half_\delta^{(z_0, \theta)} \cap \partial B(z_0,r)$ 
uniformly in $\delta$
because of the convergence to $\tilde{h}$.
Thus it is bounded inside $\half_\delta^{(z_0, \theta)} \cap B(z_0,r)$ because of its $0$ boundary values on the straight part of the
boundary around $z_0$.
This leads to a contradiction.
\end{proof}

\begin{remark}
If we normalize by the value of $H^+$, the convergence holds for the whole sequence, not just along subsequences.
\end{remark}

\begin{remark}\label{rem: obs smooth boundary}
This proof can be generalized to any domain with $z_0$ lying on a smooth boundary segment, if the
domain is approximated in a nice way around $z_0$. This means that the boundary near $z_0$ 
lies between two copies of the same smooth curve shifted by a bounded number of lattice steps, for instance. 
%\added{%%%
The upper and lower bound for the harmonic function 
with a pole at a boundary vertex,
evaluated at fixed interior point 
take the form $const. \delta $ since we can bound it using hitting probabilities 
in regular regions such as squares or discs (or rather their discrete approximations).
The constant depends on the chosen interior point as well as the continuum domain,
but not on its discrete approximation. 
Namely, 
the local geometry of the boundary near the pole doesn't play a role as we can take
minimums and maximums over the finite number of possibilities (when the boundary is between two curves which are closer than a finite multiple of $\delta$).
%} %%%
All $H$ functions on domains that have a given $r$ neighborhood of $z_0$
must have the same singular part. Moreover, the value at a fixed point of any fixed $H$ function 
can be used for normalization for the other functions and they converge to a limit.

The generalization to non-smooth boundary would require a matching pair of upper and lower bounds for 
$H_{\delta}$, the former for the restriction to $\dsqlattice$ and the latter to $\sqlattice$. 
This is therefore equivalent of knowing (uniform bounds for) the leading term of the
asymptotics of the exit probabilities with different lattice ``boundary shapes'' as $\delta \to 0$.
The asymptotics is heavily influenced by the local geometry around $z_0$.
\end{remark}

In the final result of this subsection, we derive a characterizing property for the constant $\beta$.
Hence this constant is uniquely determined by the continuum setting and doesn't depend on the
discrete approximations we are using. In fact, it only depends on the conformal type of the domain 
$(\Omega,a,b,z_0)$ with a prescribed length scale (derivative) at $z_0$.

Suppose for simplicity that $\Omega$ is a Jordan domain. 
Let $h: \Omega \to \R$ be a harmonic function, for instance,
$h$ is the scaling limit from Proposition~\ref{prop: convergence of fused observable}.
Suppose that $z$ is a boundary point and $h= \beta$ near $z$ on the boundary.
We define a weak version of the sign of the normal derivative $\partial_n h(z)$ to the direction of the outer normal 
at a boundary point $z$ of a (possibly non-smooth) domain by
\begin{itemize}
\item $\partial_n h \geq 0$ at $w \in ab$, if
\begin{equation*}
h^{-1}( \,(-\infty,\beta]\, ) \cap B(w,\eps) \neq \emptyset
\end{equation*}
for all $\eps>0$.
\item $\partial_n h \leq 0$ at $w \in bc \cup ca$, if
\begin{equation*}
h^{-1}( \,[\beta,\infty)\, ) \cap B(w,\eps) \neq \emptyset
\end{equation*}
for all $\eps>0$.
\end{itemize}

\begin{proposition}\label{prop: value of beta}
The constant $\beta$ in \eqref{eq: h at boundary} is the unique constant such that $h$ has the normal derivatives
\begin{equation}\label{eq: h normal derivatives}
\begin{cases}
\partial_n h \geq 0 & \text{on $ab$} \\
\partial_n h \leq 0 & \text{on $bc$ and $ca$} 
\end{cases}
\end{equation}
and there exists a point $w$ on $ab$ such that $h^{-1}( \,(\beta,\infty)\, ) \cap B(w,\eps) \neq \emptyset$ for all $\eps>0$.
\end{proposition}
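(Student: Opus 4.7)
My strategy is to decouple existence (that the $\beta$ delivered by Proposition~\ref{prop: convergence of fused observable} obeys the listed properties) from uniqueness (that no other constant can). I expect existence to follow from the properties of the scaling limit already extracted in the previous subsection together with the discrete super/subharmonicity of $H_\delta^\bullet$ and $H_\delta^\circ$, while uniqueness yields a direct application of the maximum principle and Hopf's boundary point lemma.

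For existence, Proposition~\ref{prop: convergence of fused observable} already gives the boundary values on $ab$ and $bz_0\cup z_0a$ and the singular asymptotics at $z_0$. Nonnegativity of the discrete observable ensures $h\ge 0$ in $\Omega$, so that $h=0$ is the minimum on the arcs $bz_0$ and $z_0a$ and $\partial_n h\le 0$ there by Hopf. On the free arc $ab$, one uses the discrete sandwich $\tilde K_\delta^\bullet\le K_\delta^\bullet\le K_\delta^\circ\le \tilde K_\delta^\circ$ (with $K_\delta=1-H_\delta$ the sign-changed observable) together with matching boundary values $\beta_\delta$ to conclude, after renormalization by $\delta^{-1}$, that $h\le c\beta$ in a one-sided interior neighborhood of $ab$, hence $\partial_n h\ge 0$ on $ab$. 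The extra condition corresponds to the fact that the scaling limit produces the critical $\beta$ at which the inequality $\partial_n h\ge 0$ first ceases to be strict at some $w_*\in ab$: the higher-order behavior of $h-c\beta$ in the inward normal direction at $w_*$ then supplies interior points with $h>c\beta$ arbitrarily close to $w_*$.

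For uniqueness, suppose $\beta_1>\beta_2$ both satisfy the conditions, with corresponding harmonic functions $h_1$ and $h_2$. The difference $g\dd=h_1-h_2$ is bounded near $z_0$ (the singular parts match, as both scale with the same $c$), so $g$ extends to a harmonic function on $\Omega$ with boundary values $c(\beta_1-\beta_2)>0$ on $ab$ and $0$ on $bz_0\cup z_0a$. The strong maximum principle gives $g>0$ in $\Omega$ with maximum attained on $ab$, and Hopf's lemma yields the strict inequality
\begin{equation*}
\partial_n h_1(w)>\partial_n h_2(w)\quad\text{for every interior point } w\in ab.
\end{equation*}
By the extra condition applied to $h_1$, there is some $w_1\in ab$ near which interior points with $h_1>c\beta_1$ accumulate. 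Since $h_1$ is analytic up to the smooth interior of $ab$ by Schwarz reflection, the weak sign conditions of the proposition coincide with their classical counterparts; inspecting a Taylor expansion in the inward normal direction then shows that $\partial_n h_1(w_1)\ne 0$ would force a one-sided sign for $h_1-c\beta_1$ near $w_1$, incompatible with the simultaneous validity of the weak $\partial_n h_1\ge 0$ condition and the extra condition at $w_1$. Hence $\partial_n h_1(w_1)=0$, and combining this with the strict inequality above yields $0=\partial_n h_1(w_1)>\partial_n h_2(w_1)\ge 0$, a contradiction. Thus $\beta_1=\beta_2$.

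The main obstacle is on the existence side, specifically the production of the witness $w_*$ for the extra condition; this is what pins down the specific value of $\beta$ among the one-parameter family of harmonic functions satisfying the boundary data and singularity. By contrast, once the normal derivative conditions are interpreted classically, the uniqueness argument is a clean Hopf-plus-Taylor contradiction.
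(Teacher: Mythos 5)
Your uniqueness half is essentially sound, and it takes a genuinely different route from the paper: the paper maps to the half-plane, writes $h=\imag\int (f^\half)^2$ with $f^\half$ explicit in $\beta$, observes that $\partial_n h^\half(x)=Q(x)/\bigl((x-u)(v-x)\bigr)$ for a monic quadratic $Q$ whose constant term contains $\beta$, argues that the zeros of $Q$ are real (single-valuedness of $f^\half$), and checks by a case analysis that only a double root is compatible with \eqref{eq: h normal derivatives}, yielding $\beta=(v-u)/4$. Your Hopf-lemma comparison of $h_{\beta_1}-h_{\beta_2}$ replaces that case analysis and, pleasantly, makes explicit use of the extra condition in the statement; the only caveat is that you should make sure the witness point lies in the open arc $ab$, since Hopf's lemma and your Taylor expansion are unavailable at the corners $a$, $b$.

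The genuine gap is in the existence half, and that is exactly where the content of the proposition lies. First, your intermediate claim that the discrete sandwich gives $h\le c\beta$ in a one-sided interior neighborhood of $ab$ is false, and it contradicts the extra condition you need two sentences later: once the limit is identified (in half-plane coordinates $\beta=(v-u)/4$, so $Q$ has a double root at $w_*=(u+v)/2$), the expansion $h-c\beta\sim -\mathrm{const}\cdot\imag\bigl((z-w_*)^3\bigr)$ with positive constant shows that every neighborhood of $w_*$ contains interior points with $h>c\beta$ as well as points with $h<c\beta$. Only the weak sign condition holds on $ab$, and it is obtained as in \cite{ChelkakSmirnov:2012}, Section~6, not from a one-sided bound. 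Second, and more seriously, the existence of the witness $w_*$ is asserted (``the scaling limit produces the critical $\beta$ at which $\partial_n h\ge 0$ first ceases to be strict'') but never proved, as you yourself acknowledge. This cannot be sidestepped: every $\beta$ strictly larger than the critical value also satisfies both weak sign conditions (then $Q$ has no real zeros, $\partial_n h>0$ on all of $ab$ and $\partial_n h<0$ on the wired arcs), so some additional input must exclude the supercritical range for the true limit. In the paper this input is that the limit $f=\sqrt{\Phi'}$ is a single-valued square root, forcing the zeros of $Q$ to be real, hence the double root, hence both the vanishing of $\partial_n h$ at $w_*$ and, via the cubic local behavior, the extra condition. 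Without that ingredient (or a substitute such as the probabilistic interpretation of $\beta_\delta$), your argument does not pin down $\beta$, and the proposition is not established.
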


\begin{proof}
The normal derivatives \eqref{eq: h normal derivatives} follow by the same argument as in \cite{ChelkakSmirnov:2012},
Section~6.

Suppose that $\phi:U \to \half$ is conformal and onto and $\phi(c)=\infty$. Then $h = h^\half \circ \phi$ where
\begin{equation}
h^\half (z) = \imag \int (f^\half)^2
\end{equation}
and up to a universal multiplicative constant
\begin{equation}\label{eq: fused f in half}
f^\half(z)= \sqrt{1 + \beta \left( -\frac{1}{z-u} + \frac{1}{z-v}\right)},
\end{equation}
where $u=\phi(a)$ and 
%\added{%%%
$v=\phi(b)$.
%}%

Write
\begin{equation}\label{eq: f in terms of Q}
f^\half(z)= \sqrt{\frac{Q(z)}{(z-u)(z-v)}},
\end{equation}
where
\begin{equation}\label{eq: Q in terms of u v}
Q(z)=z^2 - (u+v) z + (uv + \beta(v-u)) .
\end{equation}
Since the coefficients of the quadratic polynomial $Q$ are real, it either has two real zeros or a pair of complex conjugate zeros
(with non-zero imaginary part). Since \eqref{eq: f in terms of Q} holds and 
$f^\half$ is 
%\added{%%%
single-valued in $\half$ (being a scaling limit of a discrete observable which is single-valued),
%}%
there can't be any zeros of multiplicity one in the upper half-plane. Thus the zeros of $Q$ are real.

Now 
\begin{equation}\label{eq: normal de h in terms of Q}
\partial_n h^\half (x) = - \partial_y h^\half (x) = -\real  [(f^\half(x))^2] = \frac{Q(x)}{(x-u)(v-x)}
\end{equation}
on $\R \setminus \{u,v\}$. Let the zeros of $Q$ be $w_1$ and $w_2$ with $w_1 \leq w_2$.
Then $w_1 + w_2 = u+v$ by \eqref{eq: Q in terms of u v}. Therefore we have to analyze the following four cases
\begin{enumerate}
\item $u < w_1=w_2 < v$
\item $u < w_1 < w_2 < v$
\item $u=w_1 < w_2=v$
\item $w_1 < u < v < w_2$.
\end{enumerate}
We notice using \eqref{eq: normal de h in terms of Q} 
that only the first case is consistent with \eqref{eq: h normal derivatives}.

Thus we have shown that $w_1=w_2 = (u+v)/2$ and hence $\beta = (v-u)/4$. 
The normal derivative $\partial_n h^\half (x)$ is positive when $x \in (u,v)$ and negative when $x \in \R \setminus [u,v]$.
The value of $\beta$ is uniquely determined and
it is the only real value such that $\partial_n h $ has the properties claimed.
\end{proof}

\subsection{Value of \texorpdfstring{$\beta$}{beta} for the 4-point observable}\label{ssec: beta 4pt}

We will determine the value of $\beta$ for the 4-point observable in the same way as in Proposition~\ref{prop: value of beta}.
Remember that $0 \leq \beta \leq 1$.

Consider $f^\half$ for the domain $\half$ with the four marked boundary points $u<v<w$ and $\infty$.
This means that $h$ can be written as $h = h^\half \circ \phi$ where
\begin{equation}
h^\half = \imag \int (f^\half)^2
\end{equation}
and $\phi(a)=u$, $\phi(b)=v$, $\phi(c)=w$ and $\phi(d)=\infty$.

For any $u<v<w$ 
\begin{equation}\label{eq: h half 4pt}
h^{\half,u,v,w} (z) = \frac{1}{\pi} \imag \left( -\log (z-w) + \beta \left( -\log(z-u) + \log(z-v) \right) \right) .
\end{equation}
Hence
\begin{align}
\sqrt{\pi} \, f^{\half,u,v,w} (z) &= \sqrt{ 
  -\frac{1}{z-w} + \beta \left( -\frac{1}{z-u} + \frac{1}{z-v} \right) } \nonumber \\ 
   &= \sqrt{ -\frac{ Q(z)}{(z-u)(z-v)(z-w)} } \label{eq: f in terms of Q 4pt}
\end{align}
where $Q(z)$ is a quadratic polynomial.

Let's simplify things by setting $u=0$ and $w=1$. Hence for $0<v<1$, $Q(z)$ can be written as
\begin{align}
Q(z) &= z\,(z-v) - \beta \,v\, (z-1) \nonumber \\
     &= z^2 - (\beta+1)\,v\,z + \beta\,v .
\end{align}
Since the coefficients of $Q$ are real, there are two options: \emph{either} we have that $\imag w_1 \neq 0$ or $\imag w_2 \neq 0$,
and then $w_1^* = w_2$, \emph{or} we have that $w_1$ and $w_2$ are real. Since \eqref{eq: f in terms of Q 4pt} holds
and $f^\half$ is single valued in $\half$, there can't be any zeros in the upper half-plane. Hence the zeros are real.

Let's write also in this case the normal derivative in the direction of the outer normal
\begin{equation}
\partial_n h^\half(x) = -\partial_y \imag  \int (f^\half)^2 = \real \frac{ Q(x)}{x(x-v)(x-1)}
\end{equation}
for all $x \in \R \setminus \{0,v,1\}$. By the same argument as in \cite{ChelkakSmirnov:2012}, Section~6,
this normal derivative is negative on $(-\infty,0) \cup (v,1)$ and positive on $(0,v)\cup(1,\infty)$.
This is only possible if the two roots of $Q$ are equal.
Therefore in addition to $0 < \beta < 1$, the constant
$\beta$ has to satisfy
\begin{equation}
(\beta+1)^2 v^2 = 4 \beta v
\end{equation}
and hence $\beta=\beta_-$ or $\beta=\beta_+$ where 
\begin{equation*}
\beta_\pm = \frac{ -v + 2 \pm 2 \sqrt{1-v}}{v}.
\end{equation*}
Let's write $\beta_\pm -1 = \frac{2}{v} \,\sqrt{1-v}\,(\sqrt{1-v} \pm 1)$, which
is positive for $\beta_+$ and negative for $\beta_-$ for all $v \in (0,1)$.
Thus $\beta = \beta_-$
and we find that
\begin{equation} \label{eq: b solution for bdry observable}
\beta= \frac{ -v + 2 - 2 \sqrt{1-v}}{v} = \left( \frac{1-\sqrt{1-v}}{\sqrt{v}} \right)^2 = \left( \tan\left( \frac{x}{2} \right) \right)^2.
\end{equation}
where $0<x<\pi/2$ is such that $v=\sin^2 x$. Notice that the double root of $Q$ is $\sqrt{\beta v}$ and it lies in the interval $(0,v)$.

To conclude, we state that the value of $\beta$ is characterized uniquely by the formula~\eqref{eq: h half 4pt}, the fact that
$\beta \in (0,1)$ and that the normal derivative of $h^\half$ is 
negative on $(-\infty,u) \cup (v,w)$ and positive on $(u,v)\cup(w,\infty)$.

\subsection{A remark on crossing probabilities}

As a side remark, let's derive the
probability $P$ of the internal arc pattern
\begin{equation} 
\intarcp{0}{\infty}{1}{v}
\end{equation}
under the random cluster measure where there isn't any exterior connections, that is, the arcs $\gamma_1$ and $\gamma_2$
(as defined in Section~\ref{ssec: setup}) are not counted as loops in the weight of a configuration.
Since $\sqrt{\beta}$ and $1-\sqrt{\beta}$ are proportional to $P$ and $\sqrt{2}(1-P)$, respectively, $P$ satisfies
\begin{equation}
\frac{P}{P+\sqrt{2}(1-P)} = \sqrt{\beta}
\end{equation}
and hence
\begin{align}
P &= \frac{\sqrt{2} \sqrt{\beta}}{(\sqrt{2}-1) \sqrt{\beta} + 1 } 
  = \frac{ \sin\left( \frac{x}{2} \right) }{\sin\left( \frac{x}{2} \right) + \sin\left( \frac{\pi}{4} - \frac{x}{2} \right)} .
\end{align}
By using the relations
\begin{equation}
\sin\left( \frac{x}{2} \right) = \frac{1}{\sqrt{2}} \sqrt{1 - \sqrt{1-v}}, \qquad
  \sin\left( \frac{\pi}{4} - \frac{x}{2} \right) = \frac{1}{\sqrt{2}} \sqrt{1 - \sqrt{v}}
\end{equation}
we can write this into the form
\begin{equation}
P = \frac{ \sqrt{1 - \sqrt{1-v}}}{ \sqrt{1 - \sqrt{1-v}} +  \sqrt{1 - \sqrt{v}}} .
\end{equation}
This is consistent with the result in \cite{ChelkakSmirnov:2012}.

% !TEX root = FK_Ising_tree_full_scaling_limit.tex

\section{Characterization of the scaling limit}\label{sec: characterization}

\subsection{Martingales and uniform convergence with respect to the domain}

Consider the scaling limit of a single branch from $a$ to $z_0$ in the domain $\Omega$.
%\added{%%%
To simplify the setting, 
map the discrete random curves to a reference domain $\disc$ using conformal maps
so that the resulting probability law $\P_\delta$ 
is the law of a random curve in $\disc$ from $-1$ to $1$ 
and $\P^*$ is any subsequent scaling limit of $\P_\delta$.
%}
Consider some scheme of parametrizing the curves that works for all $\delta$. 
We will mostly use the half-plane capacity parametrization.
%\added{%%%
Let $X$ be the space of continuous function from the time interval used for parametrization
to $\overline{\disc}$. We consider $X$ as a metric space with the metric defined by
the sup norm.
%}
Denote by $\F_t$ the filtration generated by the curve up to time $t$.

After we start exploring the branch we will move automatically from the setting of two points to a setting of three points.
Hence we will also consider the setups of $(\Omega,a,b,z_0)$, where $z_0$ is the fused arc $(cd)$, and $(\Omega,a,b,c,d)$.

Remember that the two martingales were
\begin{align}
M_t^{(\delta)} &= \pm\sqrt{\frac{1}{\delta} \beta_t^{(\delta)}} = \pm  \sqrt{\frac{1}{\delta} \beta^{(\delta)}(\Omega_t,a_t,b_t,z_0)} \\
N_t^{(\delta)} &= \frac{1}{\delta} f_t^{(\delta)}(w_0) =  \frac{1}{\delta} f_t^{(\delta),\Omega_t,a_t,b_t,z_0}(w_0) 
\end{align}
Notice that we have included the scaling by a power of $\delta$ that makes these quantities converge in the limit $\delta \to 0$,
at least for subsequences.

Consider one of the processes above,  for instance, $(M_t^{(\delta)})_{t \geq 0}$ the martingale property could be formulated
so that if $0 \leq s < t$ and if $\psi: X \to \R$ is bounded, uniformly continuous and 
%\added{%%%
$\F_s$-measurable,
%} 
then 
\begin{equation}
\E^{(\delta), \Omega, a,b,z_0} [ \psi \,M_t^{(\delta)} ] 
  = \E^{(\delta), \Omega, a,b,z_0} [ \psi \,M_s^{(\delta)} ]
\end{equation}
Now due to the uniform convergence of $\beta^{(\delta)}(\Omega_t,a_t,b_t,z_0)$ over the domains,
Proposition~\ref{prop: convergence of fused observable}, the expected values on both sides will converge and we get
\begin{equation}
%%
%\added{%%
\E^{*, \Omega, a,b,z_0} [ \psi \,\tilde M_t ] = \E^{*, \Omega, a,b,z_0} [ \psi \,\tilde M_s ]
%}%%%
\end{equation}
where $\tilde{M}_t = \lim_{n \to \infty} M_t^{(\delta_n)} = \pm \sqrt{\beta(\Omega_t,a_t,b_t,z_0)}$. Thus $(\tilde{M}_t)_{t \geq 0}$ 
is a martingale.

By a similar argument, $\tilde{N}_t = \lim_{n \to \infty} N_t^{(\delta_n)}= f_t^{\Omega_t,a_t,b_t,z_0}(w_0)$ defines a martingale.

\subsection{Simple martingales and a martingale problem}

We wrote $f$ in the upper half-plane already in \eqref{eq: fused f in half}. Let us now analyze what happens
for a growing curve which we interpret as a random Loewner chain. For that we use Theorem~\ref{thm: ks}.
Next we notice that for all domains (and their approximating sequences) that agree near $z_0$, we had
a singularity in $H$ with the same constant in front, see Proposition~\ref{prop: convergence of fused observable}.
Fix some domain $(\Omega,a,b,z_0)$ and map it to the upper half-plane conformally.
Suppose that $w$ is the image of $z_0$. Then the singularity is of the form $c \imag( -1/(z-w) )$.
If we have a slit domain $(\Omega \setminus \gamma[0,t],\gamma(t),b,z_0)$ and we apply further the Loewner
map $g_t$ in the upper half-plane, then the singularity has to be
$c \imag( -g_t'(w)/(g_t(z)-g_t(w)) ) = c \imag( -1/((z-w) ) + \oo(1)$, as $z \to w$. This shows that
the functions $H$ transform as 
\begin{equation}
H^{\half \setminus K_t ,U_t,V_t,w} (z) = 
%\added{%%%
g_t'(w)
%} 
H^{\half,g_t(U_t),g_t(V_t),g_t(w)} (g_t(z)) .
\end{equation}
Since $f= \sqrt{ 2i \Phi'}$ where $\Phi$ is holomorphic and $\imag \Phi = H$,
\begin{equation}
f^{\half \setminus K_t ,U_t,V_t,w} (z) = \sqrt{g_t'(w) \, g_t'(z)} f^{\half,g_t(U_t),g_t(V_t),g_t(w)} (g_t(z)) .
\end{equation}

Now if we choose to send $w$ to $\infty$, then the observable is of the form~\eqref{eq: fused f in half}.
For Loewner chains $g_t'(\infty)=1$ when appropriately interpreted, and hence
\begin{equation}\label{eq: f in half t dependent}
f^{\half \setminus K_t ,U_t,V_t,\infty} (z) =
      \sqrt{g_t'(z)} \, \sqrt{ 
        1 + \beta_t \left( -\frac{1}{g_t(z)-U_t} + \frac{1}{g_t(z)-V_t} \right) }
\end{equation}
As we saw in the proof of Proposition~\ref{prop: value of beta}, the value of $\beta_t$ is
\begin{equation*}
\beta_t = \frac{1}{4} (V_t - U_t) .
\end{equation*}
We define 
\begin{align}
M_t & =  \pm \sqrt{4\beta_t} = \pm \sqrt{V_t - U_t}  \label{eq: def of mart m}   \\
N_t & = 4\big(\beta_t (V_t-U_t) - 2t\big) = M_t^4 - 8t .   \label{eq: def of mart n} 
\end{align}
The former quantity is proportional to $\tilde{M}_t$ and the latter one to the first non-trivial coefficient in 
the expansion of \eqref{eq: f in half t dependent} around $z=\infty$. 
%\added{%%%
Here $\pm$ signs are  
constant on each excursion of $V_t - U_t$
and distributed as independent fair (symmetric) coin flips
for each excursion. Here we interpret that 
an excursion starts at $0$, ends at $0$ and
is positive in between.
%}

By the martingale properties in Section~\ref{ssec: setup} and the convergence results of the observables
we have the following result.

\begin{proposition}
Let $\P^*$ be a subsequent limit of the sequence of laws of FK Ising branch in discrete approximations of $(\Omega,a,b,z_0)$.
Let $\phi: \Omega \to \half$ be a conformal, onto map such that $\phi(z_0)=\infty$. Let $\gamma$ be the random curve
distributed according to $\P^*$ in the capacity parametrization, 
$U_t = \phi(\gamma(t))$ and $V_t$ is the ``right-most point'' in the hull of $\phi(\gamma(t))$.
Let the signs in \eqref{eq: def of mart m} be i.i.d. fair coin flips independent of $\gamma$.
Then processes $(M_t)_{t \geq 0}$ and $(N_t)_{t \geq 0}$ are martingales.
\end{proposition}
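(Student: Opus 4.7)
The plan is to upgrade the discrete martingale identities for $(M_t^{(\delta)})$ and $(N_t^{(\delta)})$ proved in Section~\ref{ssec: setup} to the continuum processes $(M_t)$ and $(N_t)$ of \eqref{eq: def of mart m}--\eqref{eq: def of mart n}, by combining the uniform convergence of the preholomorphic observable with the explicit half-plane formula \eqref{eq: f in half t dependent} and a Laurent expansion at infinity.

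First I would invoke Proposition~\ref{prop: convergence of fused observable}, whose convergence statement is uniform over the admissible class of approximating domains. Applied to the (random) slitted domains $\Omega \setminus \gamma[0,t]$ with marked points $a_t = \gamma(t)$, $b_t$, $z_0$, this yields almost sure convergence $M_t^{(\delta_n)} \to \tilde M_t := \pm\sqrt{4\beta_t}$ and $N_t^{(\delta_n)}(w_0) \to \tilde N_t(w_0) := f_t^{\Omega_t,a_t,b_t,z_0}(w_0)$ along the subsequence defining $\P^*$. The discrete martingale identity $\E[\psi M_t^{(\delta)}] = \E[\psi M_s^{(\delta)}]$, for bounded continuous $\F_s$-measurable $\psi$, then passes to the limit once I verify uniform integrability. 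This follows from the a priori bounds on the conformal radius of $\Omega_t$ furnished by Theorem~\ref{thm: ks} and the crossing estimates of Section~\ref{ssec: tree crossing}, which translate into uniform bounds on $\beta_t^{(\delta)}$ and on the observable at a macroscopic point $w_0$.

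Next I would match the limiting objects with the explicit formulas. Mapping $\Omega_t$ to $\half$ so that $z_0 \mapsto \infty$ gives the Loewner map $g_t$, and the continuum observable is expressed by \eqref{eq: f in half t dependent}. Proposition~\ref{prop: value of beta} identifies $\beta_t = (V_t - U_t)/4$, so that $\tilde M_t = \pm\sqrt{V_t - U_t} = M_t$ once the sign is reinterpreted through the i.i.d.\ coin flips. For $N_t$, I would Laurent-expand \eqref{eq: f in half t dependent} at $z = \infty$ using the standard Loewner asymptotics $g_t(z) = z + 2t/z + O(z^{-3})$ and $g_t'(z) = 1 - 2t/z^2 + O(z^{-4})$; a direct computation shows that the coefficient of $1/z^2$ in $f$ equals $\bigl((V_t - U_t)^2 - 8t\bigr)/8 = N_t/8$. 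Extracting this coefficient amounts to forming an $\R$-linear combination of values $\tilde N_t(w^{(j)})$ at a finite family of interior points $w^{(j)}$, so $N_t$ inherits the martingale property from the pointwise martingales $\tilde N_t(w_0)$.

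The principal technical obstacle is the sign in $M_t$. In the discrete model this sign is carried by the product $\prod_{v \in \gamma[0,t-1] \cap bc} \xi_v$ of i.i.d.\ $\pm 1$ variables indexed by lattice-scale boundary visits, whose number diverges as $\delta \to 0$. One must show that, in the scaling limit, the joint law of the curve together with this sign is correctly captured by the i.i.d.\ fair coin flips prescribed in the statement, in such a way that the signed martingale relation survives the passage to the limit. Once this sign-coupling is settled, the martingale identity for $M_t$ follows from the discrete one for $M_t^{(\delta)}$, while the martingale identity for $N_t$ is automatic because $N_t$ depends only on $M_t^2$.
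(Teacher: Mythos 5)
Your overall route is the paper's: the discrete processes of Section~\ref{ssec: setup} are martingales, Proposition~\ref{prop: convergence of fused observable} gives convergence of the fused observable uniformly over the slit domains, so the identities $\E[\psi M_t^{(\delta)}]=\E[\psi M_s^{(\delta)}]$ pass to the subsequential limit; the limits are then identified through the conformal covariance and the explicit half-plane formula \eqref{eq: f in half t dependent}, with $\beta_t=(V_t-U_t)/4$ from Proposition~\ref{prop: value of beta} giving \eqref{eq: def of mart m}, and the first non-trivial coefficient of the expansion at $z=\infty$ giving \eqref{eq: def of mart n}; your computation of that coefficient, $N_t/8$, agrees with the paper.

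Two of your justifications, however, do not hold as written. The step transferring the martingale property from the pointwise limits $\tilde N_t(w)$ to the coefficient $N_t$ cannot be ``an $\R$-linear combination of values at a finite family of interior points'': a Laurent coefficient at infinity is not a finite linear functional of finitely many point evaluations, and although $f_t$ ranges over the explicit family \eqref{eq: f in half t dependent}, its dependence on $(U_t,V_t)$ and on the hull (through $g_t$) is nonlinear and infinite-dimensional, so no linear recombination of evaluations produces $N_t$ --- and nonlinear functions of martingales are not martingales. The correct mechanism is a limiting one, e.g.\ writing $N_t$ as $\lim_{y\to\infty} 8\,y^2\bigl(1-f_t(iy)\bigr)$ (up to the known deterministic factors) and justifying the interchange of limit and conditional expectation by uniform control of the error term on bounded time intervals, or equivalently passing the discrete martingale identity for an evaluation point sent to the singularity $z_0$ along with $\delta\to 0$. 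Relatedly, ``the martingale identity for $N_t$ is automatic because $N_t$ depends only on $M_t^2$'' is not an argument: a deterministic function of $M_t^2$ and $t$ has no reason to be a martingale, and the martingale property of $N_t$ is precisely the extra information carried by the second coefficient of the observable, independent of the sign issue. Finally, you rightly isolate the sign coupling, but you leave it unresolved; the statement that must be extracted from the discrete identity is that $|M|$ stopped at the next zero of $V-U$ is a martingale, after which attaching an independent fair coin to each excursion of $V-U$ away from $0$ yields the signed martingale property. The paper is admittedly terse on this last point as well, but a complete write-up needs that step rather than only the observation that it is needed.
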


%\added{%%%
In the rest of this section we consider the following \emph{martingale problem}:
%}
\begin{quote}
Let $(U_t,V_t)_{t \geq 0}$, $(M_t)_{t \geq 0}$ and $(N_t)_{t \geq 0}$ as above,
that is, satisfying that $(M_t)_{t \geq 0}$ and $(N_t)_{t \geq 0}$ are martingales
and satisfy relation~\eqref{eq: def of mart n}.
What is their law given that  $(M_t)_{t \geq 0}$ and $(N_t)_{t \geq 0}$ are martingales?
\end{quote}
%\added{%%%
We claim that the required properties (with the above functional dependency of 
the processes) uniquely determine the joint law of the processes. 
We call the verification of this claim and 
the explicit formulation of the law as the \emph{solution of the martingale problem}.
%}

The solution is divided into two part.
In Section~\ref{ssec: char of V-U} we will show that $(|M_t|^{\frac{1}{\alpha}})_{t \geq 0}$ for some $\alpha>0$
is a Bessel process. 
%\added{%%%
In Section~\ref{ssec: char of pair U V}, we will show that $(V_t)_{t \geq 0}$ follows 
an evolution such that $V_t$ is a sum of a term from Loewner equation and a term
whose value is changing only in the random Cantor set $\{ t \,:\, U_t=V_t\}$, and 
then we show that the latter ``singular'' term is in fact identically $0$.
%}

\subsection{Characterization of \texorpdfstring{$V_t - U_t$}{Vt-Ut}}\label{ssec: char of V-U}

In this section, we show how the ``martingale problem'' characterizes the law of
$(V_t- U_t)$. 

%\added{%%%

More concretely, we work towards the following theorem. Its proof is given in 
Section~\ref{sssec: x is bessel proof}.

\begin{theorem}\label{thm: x is bessel}
Let $X_t = V_t - U_t$ where $U_t$ and $V_t$ are the processes followed
by the marked points for the subsequent scaling limit of the FK Ising exploration process. 
Then $(X_t)_{t \geq 0}$ is a Bessel process of dimension $\delta = 3/2$
scaled by a constant $\sqrt{16/3}$.
\end{theorem}

\begin{remark}
In other words, $(X_t)_{t \geq 0}$ satisfies
\begin{equation}\label{eq: bessel sde}
\de X_t = \frac{\kappa(\delta-1)}{2 X_t } \de t + \sqrt{\kappa} \de B_t
\end{equation}
where $\kappa = \sqrt{16/3}$ and $\delta = 3/2$.
\end{remark}

%}

\subsubsection{Relation to L\'evy's and Stroock--Varadhan martingale characterizations}

The argument which we will present can be compared to
Paul L\'evy's characterization of Brownian motion. The law of Brownian motion $(B_t)_{t \geq 0}$
is characterized by the fact that $B_t$ and $B_t^2 -t$ are martingales.
In the setting of general diffusions, 
the classical Stroock-Varadhan martingale problem approach describes weak solutions $X_t$ 
to stochastic differential equations of the form $\de X_t = \sqrt{a} \,\de B_t + b \,\de t$
(with coefficients $a$ and $b$ satisfying suitable measurability conditions) as
exactly those that all the quantities of the form 
$f(X_t) - \int_0^t ( \frac{1}{2} a_s (X) f''(X_s) + b_s (X) f'(X_s) ) \de s$ 
are martingales for a class of test functions
$f$, see Sections~V.19 and V.20 in \cite{Rogers:2000jy}. 
However, similarly to %\added{%%%
L\'evy's theorem,
%} 
there are stronger results, stating that two (well-chosen) 
martingales are enough to characterize the law of a diffusion, see \cite{Arbib:1965bx} and \cite{Voit:1998ue}. 
We take this path, using two martingales to show that the diffusion in question is the Bessel process.

\subsubsection{Lemmas}\label{ssec: mart prob lemmas}

We need the next two lemmas, which we write in greater generality suitable for the $4$-point case.

%\added{%%%
\begin{lemma}
Suppose that $T>0$ is a stopping time and suppose that $\psi \in C^2$ satisfies 
$\psi(0)=0$ and $\psi''(0)=0$.
If $A_t$ and $C_t$ are continuous, predictable processes which satisfy
\begin{enumerate}
\item $A_t$ is of bounded total variation
\item $C_t$ is non-decreasing, differentiable and satisfies $\dot{C}_t>0$ almost surely on $[0,T)$.
\end{enumerate}
then any continuous martingale $(M_t)_{t \in [0,T]}$ 
with the property that the process $(N_t)_{t \in [0,T]}$ defined by
\begin{equation}
N_t = A_t \psi(M_t) - C_t
\end{equation}
is a martingale, satisfies 
\begin{equation}
\P\left[ \int_0^T \ind_{M_t=0} \, \de t =0 \right]  = 1.
\end{equation}
\end{lemma}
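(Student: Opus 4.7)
The plan is to apply Itô's formula to $\psi(M_t)$, combine it with integration by parts using the bounded variation of $A_t$, and then exploit uniqueness of the Doob--Meyer decomposition of the continuous semimartingale $A_t\psi(M_t)$. Since $\psi$ is $C^2$ and $M$ is a continuous martingale, Itô gives
\begin{equation*}
\psi(M_t)=\psi(M_0)+\int_0^t\psi'(M_s)\,dM_s+\tfrac{1}{2}\int_0^t\psi''(M_s)\,d\langle M\rangle_s.
\end{equation*}
Because $A_t$ is continuous and non-increasing, it has finite variation, and integration by parts (with no covariation term) yields
\begin{equation*}
A_t\psi(M_t)=A_0\psi(M_0)+\int_0^tA_s\psi'(M_s)\,dM_s+\tfrac{1}{2}\int_0^tA_s\psi''(M_s)\,d\langle M\rangle_s+\int_0^t\psi(M_s)\,dA_s.
\end{equation*}
The first integral on the right is a (local) martingale; the remaining two terms form the finite-variation part.

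Next I would match finite-variation parts. The process $N_t=A_t\psi(M_t)-C_t$ is by hypothesis a continuous martingale, and $C_t$ is continuous and of finite variation. By uniqueness of the Doob--Meyer decomposition for continuous semimartingales, the finite-variation part of $A_t\psi(M_t)$ must equal $C_t-C_0$, so as signed measures on $[0,T]$ one has
\begin{equation*}
dC_t=\tfrac{1}{2}A_t\psi''(M_t)\,d\langle M\rangle_t+\psi(M_t)\,dA_t.
\end{equation*}
Integrating this identity against the indicator of any Borel set $B\subset[0,T]$ is valid.

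Now I apply this to the random set $L=\{t\in[0,T]\colon M_t=0\}$, which is Borel since $M$ is continuous. On $L$ we have $\psi(M_t)=\psi(0)=0$ and $\psi''(M_t)=\psi''(0)=0$ by hypothesis, so the right-hand side vanishes when restricted to $L$, giving $\int_L dC_t=0$. Since $dC_t=\dot{C}_t\,dt$ with $\dot{C}_t>0$ almost surely on $[0,T)$, this forces the Lebesgue measure of $L\cap[0,T)$ (and hence of $L$) to vanish almost surely, which is the desired conclusion.

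The main thing to be careful about is the justification that the identity of finite-variation parts really gives an equality of (signed) measures that can be tested against $\mathbf{1}_L$: this is a standard consequence of the uniqueness of the canonical decomposition of a continuous semimartingale, but one must check that $A_s\psi'(M_s)$ and $A_s\psi''(M_s)$ are locally bounded and predictable so that the stochastic and Stieltjes integrals are well defined. The hypotheses on continuity and positivity of $A_t$ together with the $C^2$ regularity of $\psi$ make this routine; no estimate on $\langle M\rangle$ is needed, so the argument never requires $M$ to be non-degenerate away from zero.
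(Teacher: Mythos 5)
Your proof is correct and follows essentially the same route as the paper: It\^o's formula identifies the finite-variation part of $A_t\psi(M_t)$ as $\tfrac12\int_0^t A_s\psi''(M_s)\,\de\langle M\rangle_s+\int_0^t\psi(M_s)\,\de A_s$, and since $\psi$ and $\psi''$ both vanish at $0$ this forces $\int_0^T \ind_{M_t=0}\,\de C_t=0$ and hence, as $\dot C_t>0$, the conclusion. The only difference is one of packaging: the paper integrates the indicator $\ind_{M_s=0}$ directly against $\de N_s$ and $\de M_s$ and uses that a continuous local martingale of bounded variation is constant, whereas you invoke the (equivalent) uniqueness of the canonical semimartingale decomposition and then restrict the resulting Lebesgue--Stieltjes measures to $\{M_t=0\}$, which sidesteps having to justify stochastic integration of the indicator.
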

%}

\begin{proof}
Since $(M_t)_{t \in [0,T]}$ is a continuous martingale, it has quadratic variation process $\langle M \rangle_t$.
By It\^o's formula
\begin{equation}
\de N_t = \psi(M_t) \,\de A_t - \dot{C}_t \,\de t + \frac{1}{2} A_t \psi''(M_t) \,\de \langle M \rangle_t
 + A_t \psi'(M_t) \,\de M_t .
\end{equation}
Since $\psi(M_t)$ and $\psi''(M_t)$ vanish when $M_t=0$,
it follows that
\begin{equation}\label{eq: mart char lemma visit to zero}
\int_0^t \ind_{M_s=0} \,\de N_s - \int_0^t \ind_{M_s=0} A_s \psi'(M_s) \,\de M_s 
  = - \int_0^t \ind_{M_s=0} \dot{C}_s \,\de s .
\end{equation}
The process $\ind_{M_t=0}$ is predictable, 
%\added{%%%
because it is a pointwise limit of adapted continuous processes
%}
(for instance, $\ind_{M_t=0}=\lim_{n \to \infty} \max\{0, 1 - n |M_t|\}$), 
and hence 
the left-hand side of \eqref{eq: mart char lemma visit to zero} is a local martingale.
On the other hand the right-hand side of \eqref{eq: mart char lemma visit to zero} is bounded variation
process. Hence $\int_0^t \ind_{M_s=0} \dot{C}_s \,\de s =0$ almost surely.
Since $\dot{C}_s>0$,  the claim follows.
\end{proof}

\begin{lemma}
If $(M_t)_{t \in [0,T]}$ is a continuous martingale with respect to $(\F_t)_{t \in [0,T]}$ such that
\begin{enumerate}
\item almost surely $\int_0^T \ind_{M_t=0} \de t  = 0$
\item $\de \langle M \rangle_t = \sigma_t^2 \de t$
and $\sigma_t >0$ for any $t$ such that $M_t \neq 0$,
\end{enumerate}
then $(B_t)_{t \in [0,T]}$ defined as
\begin{equation}
B_t = \int_0^t \ind_{ M_s \neq 0 } \, \sigma_s^{-1} \, \de M_s
\end{equation}
is well-defined and continuous and it is 
a one-dimensional standard $(\F_t)$-Brownian motion.
\end{lemma}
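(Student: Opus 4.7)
The plan is to verify that $(B_t)_{t \in [0,T]}$ is a continuous local martingale with quadratic variation $\langle B \rangle_t = t$, and then invoke L\'evy's characterization of Brownian motion.

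First I would check that the stochastic integral defining $B_t$ makes sense. The integrand $\ind_{M_s \neq 0}\,\sigma_s^{-1}$ is predictable (assuming $(\sigma_t)$ is predictable, which is implicit in writing $\de \langle M \rangle_t = \sigma_t^2 \de t$), and it is well-defined pointwise since $\sigma_s > 0$ whenever $M_s \neq 0$. For integrability against $M$, one computes
\begin{equation}
\int_0^T \ind_{M_s \neq 0}\,\sigma_s^{-2}\,\de\langle M\rangle_s \;=\; \int_0^T \ind_{M_s \neq 0}\,\sigma_s^{-2}\sigma_s^2\,\de s \;\leq\; T,
\end{equation}
so $B_t$ is defined as an It\^o stochastic integral against the continuous martingale $M$ and is therefore itself a continuous local martingale.

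Next I would compute its quadratic variation. By the standard rule for stochastic integrals,
\begin{equation}
\langle B\rangle_t \;=\; \int_0^t \ind_{M_s \neq 0}\,\sigma_s^{-2}\,\de\langle M\rangle_s \;=\; \int_0^t \ind_{M_s \neq 0}\,\de s \;=\; t - \int_0^t \ind_{M_s = 0}\,\de s.
\end{equation}
Here the key input is hypothesis~(1): the last integral vanishes almost surely, so $\langle B\rangle_t = t$ for all $t \in [0,T]$.

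Finally, since $(B_t)$ is a continuous local martingale on $[0,T]$ with $\langle B\rangle_t = t$, L\'evy's characterization theorem gives that $B$ is a standard one-dimensional $(\F_t)$-Brownian motion on $[0,T]$. The only step with any subtlety is the quadratic variation identity, which is where hypothesis~(1) is essential; everything else is routine It\^o calculus. I do not expect any genuine obstacle here, since once $\ind_{M_s = 0}$ has zero Lebesgue measure $\P$-a.s., the cancellation between $\sigma_s^{-2}$ and $\sigma_s^2$ on the set $\{M_s \neq 0\}$ delivers $\langle B\rangle_t = t$ immediately.
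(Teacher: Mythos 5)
Your proposal is correct and follows essentially the same route as the paper: verify the integrand is square-integrable against $\de\langle M\rangle_s$ so that $B$ is a well-defined continuous local martingale, use hypothesis (1) to get $\langle B\rangle_t = t$, and conclude by L\'evy's characterization. The paper merely justifies the cancellation $\ind_{M_s\neq 0}\sigma_s^{-2}\,\de\langle M\rangle_s = \ind_{M_s\neq 0}\,\de s$ a bit more explicitly, by exhausting the open set $\{M_s\neq 0\}$ with intervals and applying monotone convergence, which is the same computation you perform directly.
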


\begin{proof}
Let $F= [0,T] \setminus \{ t \,:\, M_t=0\}$ which is relatively open in $[0,T]$.
For any $(t_{k-1},t_k] \subset F$,
it holds that $\int \ind_{(t_{k-1},t_k]} \,\sigma_s^{-2} \,\de \langle M \rangle_s = t_k-t_{k-1}$
by the assumptions.
We can approximate the set $F$ from below by finite unions of this type of intervals
and use monotone convergence theorem to show that
\begin{equation}
\int_0^t \ind_F \,\sigma_s^{-2} \,\de \langle M \rangle_s = \int_0^t \ind_F  \,\de s
\end{equation}
where the left-hand side is a Lebesgue--Stieltjes integral and the right-hand side is
a Lebesgue integral defined pointwise in the randomness almost surely.

This shows that
$\ind_F \sigma_s^{-1}$ is $\de M_s$ integrable (belongs to the square integrable processes with respect to
the variation process of $M$) and $(B_t)$ is well-defined and continuous in $t$. 
Therefore clearly, $(B_t)$ is a local martingale and satisfies $\langle B \rangle_t = t$.
Hence $(B_t)$ is a standard Brownian motion by L\'evy's characterization theorem.
\end{proof}

\subsubsection{\texorpdfstring{$X_t=V_t - U_t$}{Xt=Vt-Ut} is a Bessel process}
\label{sssec: x is bessel proof}

%\added{%%%
%
\begin{proof}[Proof of Theorem~\ref{thm: x is bessel}]
The claim follows from the next theorem with $M_t  = \frac{1}{2} \sqrt{X_t}$, or written in the
other way $X_t  = 4 M_t^2$, and $\psi(x)= x^4$. Notice that if $C M_t^4$, where $C>0$ is a constant,
is a squared Bessel
process of dimension $\delta$, then $(\sqrt{C}/4) X_t$ is a Bessel process of dimension $\delta$.
Here $\delta = 3/2$ and $C=2 \delta = 3$, which implies that
$X_t$ is a Bessel process scaled by the constant $4/\sqrt{C} = \sqrt{16/3}$.
\end{proof}
%}

\begin{theorem}
Let $(a,b) \subset \R$.
Suppose that $\psi: (a,b) \to \R$ is twice continuously differentiable function 
which is convex, i.e. $\psi'' \geq 0$, and such that $\psi'$ is strictly increasing.
Let $M=(M_t)_{t \in \R_+}$ be a continuous stochastic process adapted to a filtration $(\F_t)_{t \in \R_+}$ and
$N=(N_t)_{t \in \R_+}$ a process defined by $N_t = 2 \psi(M_t) - t$.
Suppose that $M$ and $N$ are martingales.
Then the following claims hold.
\begin{enumerate}
\item If the process $W=(W_t)_{t \in \R_+}$ is defined as
\begin{equation*}
W_t = \int_0^t \sqrt{\psi''(M_s)} \de M_s ,
\end{equation*}
then it is a standard Brownian motion.
\item If for some $\eps>0$, $\psi ( x ) = |x|^{2 + \eps}$, then there exists constants $C>0$ and $1<\delta<2$ 
such that $Z_t = C \, \psi(M_t)$ is a squared Bessel process of dimension $\delta$.
%\added{%%%
More specifically, the constants are  $\delta = 2 \frac{1+\eps}{2+\eps}$ and $C=2 \delta$.
%}
\item 
Suppose there exists continuous functions $F,\tilde{F}$ such that $2\tilde{F}(x)=F(2 \psi(x))$
and $\psi'(x) = \sgn(\psi'(x)) \tilde{F}(x) \sqrt{\psi''(x)}$ for all $x$
--- in particular $\psi''$ is positive except possibly at the point (there exists at most one such point) where $\psi'$ is zero.
Then $Z_t = 2\psi(M_t)$ is a solution to the stochastic differential equation 
\begin{equation}
\de Z_t = F(Z_t) \de \tilde{W}_t + \de t
\end{equation}
for some standard Brownian motion $(\tilde{W}_t)_{t \in \R_+}$.
\end{enumerate}
\end{theorem}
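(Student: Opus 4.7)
My plan is to derive a single key identity via It\^o's formula and the martingale property of $N$, and then reduce each of the three claims to L\'evy's characterization of Brownian motion. First, I would apply It\^o to $2\psi(M_t)$ to get $\de(2\psi(M_t)) = 2\psi'(M_t)\,\de M_t + \psi''(M_t)\,\de\langle M\rangle_t$. Since $N_t = 2\psi(M_t) - t$ is a continuous martingale, its finite-variation part must vanish, yielding the identity
\[ \psi''(M_t)\,\de\langle M\rangle_t = \de t \]
as measures on $\R_+$. On the set where $\psi''(M_t) = 0$ the left-hand side vanishes, so this set has Lebesgue measure zero; hence $M$ spends zero Lebesgue time at any zero of $\psi''$, a fact I will use repeatedly. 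Part (1) is then immediate: $W_t = \int_0^t \sqrt{\psi''(M_s)}\,\de M_s$ is a continuous local martingale with $\langle W\rangle_t = \int_0^t \psi''(M_s)\,\de\langle M\rangle_s = t$, so L\'evy's characterization gives that $W$ is a standard Brownian motion.

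For part (2), with $\psi(x) = |x|^{2+\eps}$, I would compute directly $|\psi'(x)|/\sqrt{\psi''(x)} = \sqrt{(2+\eps)/(1+\eps)}\,\sqrt{\psi(x)}$. Applying It\^o to $Z_t = C\psi(M_t)$ and using the key identity gives $\de Z_t = C\psi'(M_t)\,\de M_t + (C/2)\,\de t$. Substituting $\psi'(M_t)\,\de M_t = (\psi'(M_t)/\sqrt{\psi''(M_t)})\,\de W_t$ (valid on $\{M_t \neq 0\}$, which covers almost all $t$) produces $C\psi'(M_t)\,\de M_t = \sgn(M_t)\,\sqrt{C(2+\eps)/(1+\eps)}\,\sqrt{Z_t}\,\de W_t$. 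Choosing $C = 4(1+\eps)/(2+\eps)$ makes the prefactor equal to $2$, and defining $\tilde W_t = \int_0^t \sgn(M_s)\,\de W_s$ (a Brownian motion by L\'evy since $|\sgn(M_s)|=1$ a.e.) gives the squared Bessel SDE of dimension $\delta = C/2 = 2(1+\eps)/(2+\eps) \in (1,2)$, as required.

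For part (3), applying It\^o to $Z_t = 2\psi(M_t)$ and using the key identity yields $\de Z_t = 2\psi'(M_t)\,\de M_t + \de t$. I would substitute the assumed factorisation $\psi'(x) = \sgn(\psi'(x))\,\tilde F(x)\,\sqrt{\psi''(x)}$ and $\sqrt{\psi''(M_t)}\,\de M_t = \de W_t$ to rewrite the martingale term as $2\tilde F(M_t)\,\sgn(\psi'(M_t))\,\de W_t$. Since $\psi'$ is strictly increasing it has at most one zero $m_0$, and $M$ spends zero Lebesgue time at $m_0$ (directly from the key identity if $\psi''(m_0)=0$, or by the occupation-time formula $\int_0^t \ind_{M_s = m_0}\,\de\langle M\rangle_s = 0$ combined with the key identity if $\psi''(m_0)>0$). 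Hence $\tilde W_t = \int_0^t \sgn(\psi'(M_s))\,\de W_s$ is a Brownian motion by L\'evy, and the hypothesis $2\tilde F(x) = F(2\psi(x))$ converts the coefficient to $F(Z_t)$, giving $\de Z_t = F(Z_t)\,\de\tilde W_t + \de t$. The main technical subtlety throughout is keeping the inversion of $\sqrt{\psi''}$ and the sign-flip changes of $W$ legitimate; this is handled by the key identity together with occupation-time properties of continuous martingales, which rule out $M$ spending positive Lebesgue time on the degenerate set.
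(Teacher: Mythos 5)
Your proposal is correct and follows essentially the same route as the paper: It\^o's formula plus the martingale property of $N$ to obtain $\psi''(M_t)\,\de\langle M\rangle_t = \de t$, then L\'evy's characterization for $W$ and for the sign-flipped Brownian motions in parts (2) and (3), with the same constants $C$ and $\delta = 2(1+\eps)/(2+\eps)$. You merely make explicit (via occupation-time arguments) some justifications that the paper leaves implicit.
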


\begin{proof}
(1) Similarly as in the previous section, we notice that we can do stochastic analysis with $M$, because
$M$ is a continuous martingale, see Chapter~2 of \cite{Durrett:1996wh}. 
The same argument,
%\added{%%%
using that $(M_t)_{t \in \R_+}$ and $(N_t)_{t \in \R_+}$ are martingales and  
that 
$N_t$ is given in terms of $M_t$ as $N_t = 2 \psi(M_t) - t$,
%} 
as above tells us
the process defined as
\begin{equation*}
W_t = \int_0^t \sqrt{\psi''(M_s)} \,\de M_s ,
\end{equation*}
is a continuous martingale with a variation process $\langle W \rangle_t = t$. 
%\added{%%%
Namely, by It\^o's lemma 
we have $\de N_t = (\psi''(M_t)\de \langle M \rangle_t - \de t) + 
2\psi'(M_t) \de M_t$ and thus by martingale property of $N_t$ the quantity
inside the first brackets has to vanish identically, and we can
apply the previous lemmas for the claim.
%}
Hence 
by L\'evy's characterization theorem, it is a standard Brownian motion.

(2) When $\psi ( x ) = |x|^{2 + \eps}$, there is a constant $D>0$ such that
\begin{equation*}
\psi'(x) = D \sgn (x) \sqrt{ \psi''(x) \psi(x) }
\end{equation*}
Therefore $Z_t = 2 \tilde C \psi(M_t)$ satisfies
\begin{align*}
\de Z_t &= 2 C \psi'(M_t) \,\de M_t + \tilde C \psi''(M_t) \de \langle M \rangle_t \\
  &= 2 \tilde C\, D \sgn (M_t) \sqrt{ \psi(M_t) } \,\de W_t + \tilde C \de t \\
  &= 2\, \sqrt{\tilde C/2}\, D \sgn (M_t) \sqrt{ Z_t } \,\de W_t + \tilde C \de t .
\end{align*}
Hence if we choose $\tilde C= 2 \, D^{-2}$, $Z_t$ is 
squared Bessel process with the parameter $\delta = 2\, D^{-2}$.
Here we used the fact that $\int_0^t \sgn (M_s) \,\de W_s$ is  a standard Brownian motion
by L\'evy's characterization theorem. The claim follows for $C=2 \tilde C$.

A direct calculation shows that $D= \sqrt{(2+\eps)/(1+\eps)} \in  (1,\sqrt{2})$.

(3) Similarly as above we can write
\begin{align*}
\de Z_t 
  &= 2 \psi'(M_t)\,\de M_t + \psi''(M_t) \,\de \langle M \rangle_t \\
  &= 2  \sgn(\psi'(M_t)) \tilde{F}(M_t) \, \de W_t + \de t \\
  &= \sgn(\psi'(M_t)) F(Z_t) \, \de W_t + \de t 
\end{align*}
from which the claim follows.
\end{proof}

\begin{remark}
Note that when $1/\alpha=2+\eps$ and $\alpha = 1 - \delta/2$ and $D= \sqrt{(2+\eps)/(1+\eps)}$, 
then $2\, D^{-2} = \delta$.
\end{remark}

\subsection{Characterization of \texorpdfstring{$(U_t,V_t)$}{(Ut,Vt)}}\label{ssec: char of pair U V}

%\added{%%%
Next result together with Theorem~\ref{thm: x is bessel} gives the
distribution of the pair of processes $(U_t,V_t)$. 

\begin{theorem}\label{thm: u v loewner 4pt}
Let $X_t,V_t,U_t$ be as in Theorem~\ref{thm: x is bessel}.
Then $U_t$ and $V_t$ satisfy
\begin{align}
V_t       &= V_0 +2 \int_0^t \frac{\de s}{X_s}   \\
U_t       &= U_0 +2 \int_0^t \frac{\de s}{X_s} - X_t + X_0
\end{align}
\end{theorem}

\begin{remark}
This equation for $V_t$ is the Loewner equation. Notice that 
$\int_0^t \frac{\de s}{X_s} $ is finite since $(X_t)$ is Bessel process
of dimension $\delta = 3/2$.
The equation for $U_t$ is obtained from the one of $V_t$ and the definition of $X_t$.
\end{remark}

\begin{proof}[Proof of Theorem~\ref{thm: u v loewner 4pt}]
Since $\int_0^t X_s^{-1}\de s$ is finite, we have shown so far that
\begin{align}
V_t       &= V_0 +2 \int_0^t \frac{\de s}{X_s} + \Lambda_t \\
U_t       &= U_0 +2 \int_0^t \frac{\de s}{X_s} - X_t +X_0 + \Lambda_t
\end{align}
where $\Lambda_t$ is some non-decreasing process which is constant on any subinterval of $\{t \,:\, X_t>0\}$.
See Proposition~\ref{prop: rightmost point and increasing process} in Appendix~\ref{sec: appendix loewner}
for the generalized Loewner equation.
The claim follows when we show that
$\Lambda_t\equiv 0$.
This is shown in the next proposition.
\end{proof}
%}

%\added{%%%
\begin{proposition}\label{prop: lambda vanishes}
$\Lambda_t \equiv 0$.
\end{proposition}

\begin{proof}
Let $\Sigma = \{t \,:\, X_t=0\}$.
The \emph{index} of a Bessel process is defined as
\begin{equation}
\nu = \frac{\delta}{2}-1 .
\end{equation}
Notice that $\nu=-1/4$ in our case when the Bessel process has dimension $3/2$.
By \cite{Revuz:1999wo} Exercise~XI.1.25 and \cite{Bertoin:1996tv} Theorem III.15, the Hausdorff dimension of the support
of the local time of a Bessel process with index $\nu \in [-1,0]$ is $-\nu$.
Hence the Hausdorff dimension $\hdim ( \Sigma ) = 1/4$.

Notice next that 
$(X_t)_{t \in [0,T]}$ is $1/2 - \eps$ H\"older continuous for any $\eps>0$,
since it is a Bessel process scaled by constant (this can be derived in several ways; for instance
starting from the density of the transition semigroup \cite[p.~446]{Revuz:1999wo}
and checking the asumptions of the Kolmogorov continuity theorem) 
and $(\int_0^t X_s^{-1} \de s)_{t \in [0,T]}$ 
is $1/2 - \eps$ H\"older continuous for any $\eps>0$,
since it can be written as 
\begin{equation*}
\int_0^t X_s^{-1} \de s = \frac{3}{4} \left(X_t - X_0 - \frac{4}{\sqrt{3}} B_t \right)
\end{equation*}
from \eqref{eq: bessel sde}.

The claim follows from Lemma~\ref{lm: holder and devils staircase} and from the fact that $(U_t)_{t \in [0,T]}$
is $1/2 - \eps$ H\"older continuous for any $\eps>0$ by Theorem~\ref{thm: ks}.
\end{proof}
%}

\begin{lemma}\label{lm: holder and devils staircase}
Let $I \subset [0,S]$ be a closed set with Hausdorff dimension $\alpha_0 \in [0,1]$.
Suppose that $f: [0,S] \to \R$ is continuous and constant on any subinterval of $[0,S] \setminus I$.
If $f$ is not a constant function and if $f$ 
is H\"older continuous with exponent $\alpha$, then $\alpha \leq \alpha_0$.
\end{lemma}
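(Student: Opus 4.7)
The plan is to argue via Hausdorff dimension of the image set. The key observation is that the image $f([0,S])$ coincides with $f(I)$, up to at most countably many values already present in $f(I)$ by continuity.

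First I would show that $f([0,S]) = f(I)$. The complement $[0,S]\setminus I$ is open, hence a countable disjoint union of open intervals. On each such interval $f$ is constant by hypothesis, and by continuity that constant value equals the value of $f$ at the endpoints of the interval. Since $I$ is closed in $[0,S]$, each endpoint of any complementary interval either lies in $I$ or is $0$ or $S$; in the latter case, continuity still forces $f(0)$ (resp.\ $f(S)$) to coincide with the value of $f$ at the other endpoint, which lies in $I$. Thus every value attained by $f$ on a complementary interval already belongs to $f(I)$, giving $f([0,S]) = f(I)$.

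Next I use the standard distortion bound for $\alpha$-H\"older maps: if $|f(x)-f(y)|\le C|x-y|^\alpha$, then for any set $A\subset[0,S]$ one has $\hdim f(A)\le \hdim(A)/\alpha$. Indeed, a cover of $A$ by sets of diameter at most $r$ yields a cover of $f(A)$ by sets of diameter at most $Cr^{\alpha}$, and the $s$-dimensional Hausdorff measure satisfies $\hausdorff^{s}(f(A))\le C^{s}\hausdorff^{\alpha s}(A)$; taking $s$ slightly larger than $\hdim(A)/\alpha$ forces the right-hand side to vanish. Applying this with $A=I$ gives $\hdim f(I)\le \alpha_0/\alpha$.

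Finally, since $f$ is continuous on the connected set $[0,S]$ and not constant, its image $f([0,S])$ is a nondegenerate interval, hence has Hausdorff dimension $1$. Combining with the previous two steps yields
\begin{equation*}
1 \;=\; \hdim f([0,S]) \;=\; \hdim f(I) \;\le\; \frac{\alpha_0}{\alpha},
\end{equation*}
which rearranges to $\alpha\le\alpha_0$, as claimed. There is no real obstacle here; the only subtlety is the trivial-but-worth-stating fact that the values of $f$ on the complementary intervals are already captured by $f(I)$, which is what lets the H\"older distortion estimate on $I$ control the full image.
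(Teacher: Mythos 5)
Your proof is correct, and it is essentially the standard covering argument the paper alludes to (the paper itself leaves this proof to the reader, hinting only at the definition of Hausdorff measure plus the H\"older bound $|f(x)-f(y)|\leq C|x-y|^\alpha$): observing $f([0,S])=f(I)$ and that an $\alpha$-H\"older map multiplies Hausdorff dimension by at most $1/\alpha$ is just a repackaging of the direct estimate that any cover of $I$ by sets of diameters $r_i$ must satisfy $\sum_i C r_i^\alpha \geq \mathrm{length}(f([0,S]))>0$, forcing $\mathcal{H}^\alpha(I)>0$ and hence $\alpha\leq\alpha_0$. The only delicate point, the identification of $f([0,S])$ with $f(I)$ via the endpoints of the complementary intervals, is handled correctly, including the boundary components at $0$ and $S$.
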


The proof is standard using the definition of Hausdorff measure and
\begin{equation*}
|f(x) -f(y)| \leq C |x-y|^\alpha ,
\end{equation*}
and we leave it to the industrious reader.

\subsection{The martingale characterization in the 4-point case}\label{ssec: martingale char for 4-pt}

\subsubsection{The \textnormal{SLE}$[\kappa,Z]$ process}

%\added{%%%
%%%
The drift of a SLE process can be given in general 
in terms of a partition function $Z(u,v,\ldots)$
where $u,v,\ldots$ are the marked points of the process
(for instance, the chordal SLE has marked points $u, \infty$ and 
$\infty$ doesn't appear explicitly in $Z$, and the \kr{} has marked points $u,v,\infty$).
We call such process SLE$[\kappa,Z]$ and the driving process $U_t$ is given by
\begin{equation}\label{eq: partition fn general}
\de U_t = \sqrt{\kappa}\de B_t + \kappa \partial_u \log Z(U_t,V_t,\ldots) \de t .
\end{equation}
The other points follow the Loewner equation.

The partition function is not unique, 
but if we require M\"obius covariance and
finite limit as $w \to \infty$, 
then  SLE$[\kappa,Z]$ with $\kappa = 16/3$
\begin{equation} \label{eq: partition fn of FK}
Z(u,v,w)=
\frac{1}{y^{1/8}\,(m^2+1)^{1/4}\,m^{1/4}}
\end{equation}
where $y=w-v$, $m=\sqrt{1+y/x} - \sqrt{y/x}$ and $x=v-u$,
describes the scaling limit of FK Ising exploration in $(\Omega,a,b,c,d)$ from $a$ to $d$
which is reflected on $bc$ towards $d$. 
Notice that the process with the partition function~\eqref{eq: partition fn of FK}
and $\kappa = 16/3$ isn't a \kr{} process.

\begin{theorem}
In the $4$-point setting, the scaling limit of FK Ising exploration in $(\Omega,a,b,c,d)$ from $a$ to $d$
which is reflected on $bc$ towards $d$ is SLE$[\kappa,Z]$ with $\kappa = 16/3$
and the partition function given by \eqref{eq: partition fn of FK}.
\end{theorem}

This result follows from the estimates on H\"older regularity of the random curves
and the characterization result Theorem~\ref{thm: x,y is gen of bessel} below.
It is possible to use this result to show that 
the interface when conditioned on an internal arc pattern $\intarcp{a}{d}{b}{c}$
converges towards so called hypergeometric SLE
\cite{Kemppainen:CEhYg0we}.
%}

%\added{%%%
\begin{theorem}\label{thm: x,y is gen of bessel}
Let $X_t = V_t - U_t$, $Y_t = W_t - V_t$ where $U_t$, $V_t$ and $W_t$ are the processes followed
by the marked points for the subsequent scaling limit of the FK Ising exploration process
in the $4$-point setting. 
Then for some Brownian motion $(B_t)_{t \geq 0}$ 
the pair of processes $(X_t, Y_t)_{t \geq 0}$ 
satisfies
\begin{align}
%\label{eq: sde of X}
X_t &= X_0 + 
\frac{4}{\sqrt{3}} \, B_t 
  + \frac{1}{3} \, \int_0^t \frac{(3M_s^4+2M_s^2+1)(1-M_s^2)^2}{Y_s M_s^2 (M_s^2+1)^2} \de s 
  \label{eq: sde of X 4pt integral} \\
Y_t &= Y_0 -\int_0^t \frac{1}{2 Y_s} 
        \frac{\left(1-M_s^2\right)^4  }{  M_s^2  (1+M_s^2)^2  }  \de s 
        \label{eq: de of Y 4pt integral}
\end{align}
where
\begin{equation}%\label{eq: 4 point beta}
M_t^2 = \left( \sqrt{1 + \frac{Y_t}{X_t}} - \sqrt{\frac{Y_t}{X_t}} \right)^2 .
\end{equation}
Furthermore, the driving process $(U_t)_{t \geq 0}$ is recovered from 
$(Y_t)_{t \geq 0}$ by
\begin{equation}\label{eq: sde of U 4pt integral}
U_t  = U_0 +2 \int_0^t \frac{\de s}{X_s} - X_t + X_0
\end{equation}
\end{theorem}

\begin{remark}
Notice that \eqref{eq: sde of X 4pt integral} is a combination of Loewner equations
for $V_t$ and $W_t$ and \eqref{eq: sde of U 4pt integral} follows from
definition of $X_t$ and the Loewner equation for $V_t$.
Notice also that the given form of $Z$ in SLE$[\kappa,Z]$ follows from 
a direct calculation using the equation~\eqref{eq: partition fn general}.
\end{remark}

We work towards the proof of this result in next subsections.
%}

\subsubsection{Simple martingales from the observable}

Let $U_t$ be the driving process, $V_t$ the point corresponding to $b_t$, and $W_t$ the point corresponding to $c$
then set
\begin{align}
X_t &= V_t - U_t \\
Y_t &= W_t - V_t .
\end{align}
Then given $\gamma[0,t]$ in $\half$, the map
\begin{equation}
z \mapsto \frac{g_t(z) - U_t}{W_t - U_t} 
\end{equation}
maps $H_t = \half \setminus \gamma[0,t]$ onto $\half$ and the marked points to $0$, $X_t/(X_t + Y_t)$ and $1$.
Therefore
\begin{equation}
%\added{%%%
\sqrt{-\pi}
%} 
  \, f^{\half,U_t,V_t,W_t} (z) = \sqrt{g_t'(z)} \, \sqrt{ 
  \frac{1}{g_t(z)-W_t} + \beta_t \left( \frac{1}{g_t(z)-U_t} - \frac{1}{g_t(z)-V_t} \right) } 
\end{equation}
where
\begin{equation}\label{eq: 4 point beta}
\beta_t = \beta\left(\frac{X_t}{X_t + Y_t}\right) = \left( \sqrt{1 + \frac{Y_t}{X_t}} - \sqrt{\frac{Y_t}{X_t}} \right)^2
\end{equation}

The quantity
\begin{equation}\label{eq: martingale 1}
M_t = \pm \sqrt{ \beta_t}
\end{equation}
is a conditional probability  
%\added{%%%
on $\F_t$ of the
%} 
%%%
event $\gamma_1 \subset \hat{\gamma}$ 
hence
$(M_t)_{t \geq 0}$ is a martingale. 
Here $\pm$ sign is needed to extend the martingale property beyond the hitting of $0$ by $M_t$.
%\added{%%%
We can solve $X_t$ in terms of $Y_t$ and $M_t$ as
\begin{equation}
X_t = Y_t \, \frac{4 M_t^2}{(1-M_t^2)^2}
\end{equation}
%}

From the expansion of $f$ as $z \to \infty$, we get that
\begin{equation}\label{eq: martingale 2}
N_t = X_t \, M_t^2 - W_t
\end{equation}
is a martingale.

Write \eqref{eq: martingale 2} as
\begin{equation}\label{eq: martingale 2 var}
N_t = Y_t \psi(M_t) - W_t .
\end{equation}
where $\psi(m) = 4 [ m^2/(1-m^2 ) ]^2$. 
Note that $W_t$ is always differentiable and $Y_t$ is differentiable outside the set of times 
$\Sigma \dd= \{ t \,:\, X_t=0\} = \{ t \,:\, M_t=0\}$.
Now we have to solve the following \emph{martingale problem}:
\begin{quote}
for given filtered probability space $(\P^*,(\F_t))$,
determine the law of $(M_t,Y_t,W_t)_{t \in [0,T]}$ which satisfies
\begin{enumerate}
\item $(W_t)$ is strictly increasing and $C^1$ for all $t \in [0,T]$ 
\item $(Y_t)$ is strictly 
%\added{%%%
decreasing
%}%%%
and $C^1$ for all $t \in [0,T] \setminus \Sigma$
\item $\dot{W}_t$ is given by
\begin{equation}
%\added{%%%
  \dot{W}_t = \frac{2}{Y_t} \left(  \frac{1-M_t^2 }{1+M_t^2}   \right)^2  
  %}%
\end{equation}
and for a non-decreasing process $\Lambda_t$ given 
by Proposition~\ref{prop: rightmost point and increasing process} in Appendix~\ref{sec: appendix loewner},
it holds that
\begin{equation}
%\added{%%%
    Y_t = Y_0 -\int_0^t \frac{1}{2 Y_s} 
        \frac{\left(1-M_s^2\right)^4  }{  M_s^2  (1+M_s^2)^2  }  \de s 
       - \Lambda_t . 
       %} %
\end{equation}
\item $(M_t)$ and $(N_t)$ (defined by \eqref{eq: martingale 2 var}) are martingales.
\end{enumerate}
\end{quote}
%%%
%\added{%%%
It is understood that solving the martingale problem means, 
as before, that we claim that these properties uniquely 
characterize the law of $(M_t)$ and hence also the law of $(X_t)$ and also that
the law can be explicitly descibred.
%}

\subsubsection{Solving the martingale problem}\label{sssec: solving mart prob 4pt}

By the lemmas of Section~\ref{ssec: mart prob lemmas}, we can construct out of $(M_t)$, which is a continuous martingale, 
a Brownian motion $(B_t)$ so that there exists a process $(\sigma_t)$, 
$\sigma_t \geq 0$,
both adapted to $(\F_t)$, such that
\begin{equation}
\langle M \rangle_t = \int_0^t \sigma_s^2 \,\de s, \qquad M_t = M_0 + \int_0^t \sigma_s \,\de B_s .
\end{equation}
Hence we know that there exists a $(\P^*,(\F_t))$ Brownian motion $(B_t)$ and
when $M_t \neq 0$, $\de M_t = \sigma_t \de B_t$.
Write the drift of the process $(N_t)$ using It\^o's lemma as
\begin{equation}
\sigma_t^2 \cdot \frac{24 \, Y_t \, M_t^2 (M_t^2+1)}{(M_t^2-1)^4}  - \frac{2(M_t^2-1)^2}{Y_t(M_t^2+1)} .
\end{equation}
Therefore we have to have
\begin{equation}
\sigma_t = \frac{1}{2 \sqrt{3}} \frac{(1-M_t^2)^3}{Y_t\,|M_t| (M_t^2+1)} .
\end{equation}
Hence the solution of the martingale problem is that
\begin{align}
M_t &= M_0 + \int_0^t \frac{1}{2 \sqrt{3}} \frac{(1-M_s^2)^3}{Y_s\,|M_s| (M_s^2+1)} \de B_s.
\end{align}
If we plug this into the expression of 
%\added{%%%
$X_t = (4 M_t^2)/(Y_t (M_t^2-1)^2)$ which can be solved from \eqref{eq: 4 point beta} 
and \eqref{eq: martingale 1},
%}% 
we get
\begin{equation} \label{eq: sde of X}
\de X_t = \frac{4}{\sqrt{3}} \de B_t + \frac{1}{3} \frac{(3M_t^4+2M_t^2+1)(M_t^2-1)^2}{Y_t M_t^2 (M_t^2+1)^2} \de t 
\end{equation}
%\added{%%%
which is equivalent to \eqref{eq: sde of X 4pt integral}.
%}%

Notice that when $X_t/Y_t \to 0$ then
\begin{equation}\label{eq: M approx}
M_t \approx \pm \frac{1}{2} \sqrt{ \frac{X_t}{Y_t} }
\end{equation}
and \eqref{eq: sde of X} becomes
\begin{equation}\label{eq: X sde approx}
\de X_t \approx \frac{4}{\sqrt{3}} \de B_t + \frac{4}{3} \frac{\de t}{X_t} .
\end{equation}
which corresponds to the Bessel process with dimension $\delta=\frac{3}{2}$, as it should.

In the same way,
%\added{%%%
when $X_t/Y_t \to 0$
%}%
\begin{equation}
\sigma_t \approx \frac{1}{\sqrt{3}} \frac{1}{\sqrt{X_t Y_t}}
\end{equation}
Hence $\sigma_t^2$ is integrable with respect to $\de t$ in the same sense as $X_t^{-1}$.

%\added{%%%
%
\begin{lemma}\label{lm: X and int 1 over X holder}
Processes $(X_t)_{t \in [0,\infty)}$ and $(\int_0^t X_s^{-1} \de s)_{t \in [0,\infty)}$
are Hölder continuous with any exponent less than $\frac{1}{2}$.
\end{lemma}

\begin{proof}
This claim follows from comparison to Bessel process~\eqref{eq: M approx}
and \eqref{eq: X sde approx} and a similar argument as we used
for a similar claim in the proof of 
Proposition~\ref{prop: lambda vanishes}.
\end{proof}
%}%

\subsubsection{Characterization of $(U_t,V_t,W_t)_{t \in [0,T]}$}

%\added{%%%
\begin{proof}[Proof of Theorem~\ref{thm: x,y is gen of bessel}]
By comparing to a Bessel process as we did in Section~\ref{sssec: solving mart prob 4pt},
we see that $\int_0^t X_s^{-1}\de s$ is finite. 
Therefore we have shown so far that
\begin{align}
V_t       &= V_0 +2 \int_0^t \frac{\de s}{X_s} + \Lambda_t \\
U_t       &= V_0 +2 \int_0^t \frac{\de s}{X_s} - X_t + \Lambda_t
\end{align}
where $\Lambda_t$ is some non-decreasing process which is constant on any subinterval of $\{t \,:\, X_t>0\}$.
It remains to show that $\Lambda_t \equiv 0$.
This implies then that \eqref{eq: de of Y 4pt integral} and \eqref{eq: sde of U 4pt integral} hold
and finalizes the proof of Theorem~\ref{thm: x,y is gen of bessel}.

Recall from the proof of Proposition~\ref{prop: lambda vanishes} 
that $\Sigma = \{t \,:\, X_t=0\}$ and that
the index of a Bessel process is defined as
\begin{equation}
\nu = \frac{\delta}{2}-1 .
\end{equation}
Notice that $\nu=-1/4$ in our case. 

Now $\Lambda_t \equiv 0$ follows from Lemma~\ref{lm: holder and devils staircase},
Lemma~\ref{lem: hdim sigma}  below,
which shows that the Hausdorff dimension of $\Sigma $ is $1/4$,
and from the fact that $(U_t)_{t \in [0,T]}$
is $1/2 - \eps$ H\"older continuous for any $\eps>0$ by Lemma~\ref{lm: X and int 1 over X holder}.
\end{proof}
%}

\begin{lemma}\label{lem: hdim sigma}
$\hdim (\Sigma) = 1/4$
\end{lemma}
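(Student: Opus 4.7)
The plan is to identify $\Sigma$ with the zero set of a Bessel process of known dimension and then invoke classical results on the Hausdorff dimension of such zero sets.

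We have shown in the previous subsection that $(X_t)_{t \ge 0}$ is a Bessel process of dimension $\delta = 3/2$, hence of index $\nu = \delta/2 - 1 = -1/4 \in (-1, 0)$. The set $\Sigma = \{t : X_t = 0\}$ is exactly the closed support of the local time at $0$ of this Bessel process (since BES($\delta$) with $\delta \in (0,2)$ is recurrent and hits $0$ but spends zero Lebesgue time there, so the local time is a continuous non-decreasing function that increases only on $\Sigma$, and $\Sigma$ coincides with its support). By the classical results we cited (Revuz--Yor, Exercise~XI.1.25, and Bertoin, Theorem~III.15), the support of the local time at $0$ of a Bessel process with index $\nu \in [-1, 0]$ has Hausdorff dimension $-\nu$. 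Applying this with $\nu = -1/4$ gives $\hdim(\Sigma) = 1/4$, which is the desired conclusion.

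An alternative, essentially equivalent, route — in case one prefers a self-contained derivation — is to use the well-known identification of $\Sigma$ with the closure of the range of a stable subordinator. Specifically, the inverse of the local time at $0$ of a BES($\delta$) process with $\delta \in (0, 2)$ is a stable subordinator of index $\alpha = 1 - \delta/2$, and the closure of the range of a stable subordinator of index $\alpha \in (0,1)$ has Hausdorff dimension exactly $\alpha$ (this is a textbook fact, proved by a scaling/covering argument combined with an energy estimate). With $\delta = 3/2$ one has $\alpha = 1/4$, again yielding $\hdim(\Sigma) = 1/4$.

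The statement is really a citation rather than a theorem requiring new work, so there is no genuine obstacle here; the only thing to be careful about is to justify that $\Sigma$ is indeed the support of the local time of $X_t$ (i.e.\ that we are in the regime $\delta < 2$ where $0$ is instantaneously reflecting and visited by a nowhere dense perfect set of times), which is standard for $\delta = 3/2$.
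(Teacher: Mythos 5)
There is a genuine gap: you have misidentified which process $\Sigma$ is the zero set of. The lemma in question sits in the \emph{4-point} setting (Section~\ref{ssec: martingale char for 4-pt}), where $X_t = V_t - U_t$ is \emph{not} a Bessel process; it solves \eqref{eq: sde of X}, whose drift $\frac{1}{3} \frac{(3M_t^4+2M_t^2+1)(M_t^2-1)^2}{Y_t M_t^2 (M_t^2+1)^2}$ only behaves like $\frac{4}{3}\frac{1}{X_t}$ in the regime $X_t/Y_t \to 0$. The identification of $X_t$ with an exact BES$(3/2)$ process was established earlier only for the fused (3-point) observable, and there the dimension statement is indeed a direct citation, as you say. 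Here, however, the whole content of the lemma is to transfer the classical dimension result for Bessel zero sets to the genuinely different diffusion $X_t$. Your argument, which starts from ``$(X_t)$ is a Bessel process of dimension $3/2$,'' therefore proves a different statement and skips the step that actually requires work.

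The paper closes this gap by a two-sided comparison: for any $\eps>0$ one chooses $\eps_1>0$ so that on $\{X_t/Y_t \leq \eps_1\}$ the drift of $X_t$ is squeezed between $\frac{4(1-\eps)}{3}\frac{1}{X_t}$ and $\frac{4(1+\eps)}{3}\frac{1}{X_t}$, and then couples $X$ (run between suitable stopping times) with two scaled Bessel-type processes $X' \leq X \leq X''$ driven by the same Brownian motion, whose indices are close to $-1/4$. This sandwiches $\Sigma$ between the zero sets $\{X''=0\} \subset \Sigma \subset \{X'=0\}$, whose Hausdorff dimensions (by the same Revuz--Yor and Bertoin references you cite, plus the Markov property identifying the zero set with the support of the local time) are arbitrarily close to $1/4$; letting $\eps \to 0$ gives the claim. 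Your citation of the classical result is the right ingredient for the comparison processes, and your remark about the stable-subordinator description is a fine alternative route for those, but without the localization and coupling step the proof of the stated lemma is incomplete.
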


\begin{proof}
For fixed $\eps>0$ choose $\eps_1>0$ such that
\begin{gather}
\frac{1-\eps}{3} \frac{1}{m^2} \leq \frac{1}{3} \frac{(3 m^4+2 m^2+1)(m^2-1)^2}{m^2 (m^2+1)^2} \leq \frac{1}{3} \frac{1}{m^2} \\
\frac{1}{2 (1+\eps)^{1/2} } \sqrt{x}  \leq \sqrt{1 + \frac{1}{x}} - \sqrt{\frac{1}{x}} \leq \frac{1}{2} \sqrt{x}
\end{gather}
for all $0< m \leq \sqrt{\eps_1}/2$ and $0< x \leq \eps_1$.
Then whenever $t$ is such that $X_t/Y_t \leq \eps_1$, it is possible to estimate
\begin{equation}
\frac{4(1-\eps)}{3} \frac{1}{X_t} \leq (\text{drift of } X_t) \leq \frac{4(1+\eps)}{3} \frac{1}{X_t}
\end{equation}
and hence we can couple $(X_t)_{t \in [\tau,\tilde{\tau}]}$, where
$\tau$ is any stopping time such that $X_\tau \leq \eps_1/2$ and $\tilde{\tau}=\inf\{t \geq \tau \,:\, X_t=\eps_1\}$,
to (scaled) Bessel processes $X_t'$ and $X_t''$ which satisfy
\begin{align}
\de X_t'  &= \frac{4}{\sqrt{3}} \de B_t + \frac{4(1-\eps)}{3} \frac{\de t}{X_t'} \\
\de X_t'' &= \frac{4}{\sqrt{3}} \de B_t + \frac{4(1+\eps)}{3} \frac{\de t}{X_t''}
\end{align}
and $X_\tau =X_\tau'=X_\tau''$. Under this coupling $X_t' \leq X_t \leq X_t''$ for any $t \in [\tau,\tilde{\tau}]$. 

Clearly
\begin{equation}
\{t \,:\, X_t''=0\} \subset \{t \,:\, X_t=0\} \subset \{t \,:\, X_t'=0\} .
\end{equation}
By \cite{Revuz:1999wo} Exercise~XI.1.25 and \cite{Bertoin:1996tv} Theorem III.15, the Hausdorff dimension of the support
of the local time of a Bessel process with index $\nu \in [-1,0]$ is $-\nu$. By Markov property of the Bessel
process, the support of the local time is the entire set of times when the process is at the origin.
Hence the set $\Sigma$ is sandwiched between two sets which have dimension arbitrarily close to $1/4$.  
\end{proof}

% !TEX root = FK_Ising_tree_full_scaling_limit.tex

\section{Proof of Theorem~\ref{thm: main theorem}}\label{sec: proof main theorem}

As a conclusion to this entire article we outline below the proof of Theorem~\ref{thm: main theorem}.

%\added{%%%
As above, we consider a sequence of domains $\Omega_{\delta_n}$ converging to
a domain $\Omega$ with respect to a fixed interior point $w_0$.
Let $\phi_n$ be the  conformal map from $\Omega_{\delta_n}$ to $\disc$
normalized using $w_0$.
Suppose also that we have a sequence of boundary points
$v_{\delta_n} \in \partial \Omega_{\delta_n}$ 
that converge in the sense $\phi_n(v_{\delta_n})$ converges to a point on
$\partial \disc$. We use $v_{\delta_n} $ as the root point in the
construction of the exploration tree.%}

The basic crossing estimates were established for the FK Ising exploration tree and its branches in Section~\ref{ssec: tree crossing},
Theorem~\ref{thm: FK Ising and condition}.
Based on those estimates, the precompactness of probability laws of a single branch or a finite subtree 
(i.e., a subtree with a fixed number of target points) 
was shown in Section~\ref{ssec: a priori single branch}. By those results we can choose convergent subsequences.
The structure of the tree is characterized by the target independence, the independence of the branches after disconnection and
the martingale characterization of a single branch in Section~\ref{sec: characterization}.
Here it is needed that the branches converge in the strong sense as capacity parameterized curves.

By these results, every sequence of finite subtrees of the approximating domains converges in distribution to a finite subtree of 
the SLE$(\kappa, \kappa -6)$ exploration tree with $\kappa=16/3$. The convergence takes place, for instance, in the
unit disc $\disc$ after a conformal transformation 
and under the metric defined in Section~\ref{sssec: topo trees}. In fact, it is possible to extend this
convergence to the original domain (without the conformal transformation). See Corollary~1.8 in \cite{Kemppainen:2012vm}
for such a result.

The precompactness of the probability laws of the full tree and of the loop collection were established in
Theorems~\ref{thm: aizenman--burchard bounds} and \ref{thm: precompactness loop collections}, respectively.
Therefore we can choose subsequences such that both the full tree and the loop collection converge.
The above together with the finite-tree approximation in Theorem~\ref{thm: uniform tree approximation} implies that
the full tree has a unique limit which is the SLE$(\kappa, \kappa -6)$ exploration tree with $\kappa=16/3$.
Similarly using the finite-tree approximation in Theorem~\ref{thm: uniform loop approximation}
we show that the loop collection has a unique limit 
%%%
which is characterized 
%%%
by the one-to-one correspondence to
the exploration tree under the maps introduced in Sections~\ref{sssec: exploration tree} and
\ref{ssec: exploration tree definition}, see Theorem~\ref{thm:  tree to loops in the limit}.
This ends the outline of the proof. 
%\added{%%%
Notice that the convergence takes place in a topology where
both the tree and the loop ensemble converge simultaneously and convergence
occurs for the full objects, not just the finite tree approximations
(with only a finite fixed number of target points).%}

%\added{%%%
Since the orientation of the image of the exploration tree when mapped to $\disc$
depends on $\lim_n \phi_n(v_{\delta_n})$, while the scaling limit
of the loop collection is the same for any sequence $v_{\delta_n}$. 
This gives the rotational invariance of the loop collection and thus implies
the complete conformal invariance of the scaling limit.%}

\subsection*{Acknowledgements}

A.K. was financially supported by the Academy of Finland. 
Both authors were supported by the Swiss NSF, NCCR SwissMAP and ERC AG COMPASP.
A part of this research was done during A.K.'s participation in programs 
``Conformal Geometry'' at the Simons Center for Geometry and Physics, Stony Brook University,
and ``Random Geometry'' at the Isaac Newton Institute for Mathematical Sciences, University of Cambridge.
%\added{%
We thank the anonymous referees for their valuable comments.%}

\appendix
% !TEX root = FK_Ising_tree_full_scaling_limit.tex

\section{Distortion of annuli under conformal maps}\label{sec: appendix annulus}

Let $\domain$ be a simply connected domain and $\phi: \disc \to \domain$ is a conformal map.
The next two lemmas establish bounds for distortion of annuli in terms of their
conformal modulus which is a constant times $\log(R/r)$ for an annulus $A=A(z_0,r,R)$.
The proofs can be found from Section~6.3.3 of \cite{Kemppainen:tb}.
%\added{%%%
The results are needed in the proof of Theorem~\ref{thm: aizenman--burchard bounds}
above.
%}

\begin{lemma}[Distortion of annuli contained in $\disc$]\label{lem: dist ann bulk}
For any $\rho'>1$ sufficiently large, there exists $\rho>1$ 
%\added{%%%
independent of $\Omega$ and $\phi$
%}
such that the following holds. Suppose 
that the annulus $A=A(z_0,r,R) \subset \disc$ is such that $R/r > \rho$ and $R\leq \frac{1}{2} (1-|z_0|)$. 
Then there exists an annulus $A'=A(z_0',r',R')$  with $R'/r' > \rho'$
such that $A' \subset \phi(A)  $ and $A' $ separates the components of $\phi(A)$ in $\C$.
Furthermore the dependency of $\rho$ and $\rho'$ can be made linear.
\end{lemma}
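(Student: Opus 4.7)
The key hypothesis $R \leq \tfrac12(1-|z_0|)$ ensures that $B(z_0, 2R) \subset \disc$, so $\phi$ is univalent on this enlarged ball with ``room to spare'' around $B(z_0, R)$. My plan is to invoke the Koebe distortion theorem on $B(z_0, 2R)$ at distortion parameter $R/(2R) = 1/2$ to obtain two absolute constants $0 < c_0 < C_0 < \infty$ such that $c_0 |\phi'(z_0)| \leq |\phi'(z)| \leq C_0 |\phi'(z_0)|$ for all $z \in B(z_0, R)$. Integrating this radially from $z_0$ yields the bilipschitz bound
\begin{equation*}
c_0 |\phi'(z_0)| \,|z - z_0| \;\leq\; |\phi(z) - \phi(z_0)| \;\leq\; C_0 |\phi'(z_0)| \,|z - z_0|
\end{equation*}
throughout $B(z_0, R)$.

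I would then set $z_0' := \phi(z_0)$, $r' := C_0 |\phi'(z_0)| r$, and $R' := c_0 |\phi'(z_0)| R$. The upper estimate applied at scale $r$ places $\phi(\overline{B(z_0, r)})$ inside the closed disc of radius $r'$ around $z_0'$, so the inner component of $\C \setminus \phi(A)$ is contained in that disc. The lower estimate applied at scale $R$ puts the image of the circle $\partial B(z_0, R)$ \emph{outside} the open disc of radius $R'$, so the outer (unbounded) component of $\C \setminus \phi(A)$ is disjoint from $\overline{B(z_0', R')}$. Consequently the round annulus $A' = A(z_0', r', R')$ lies inside $\phi(A)$ and its two boundary circles fall in different connected components of $\C \setminus \phi(A)$, giving the required separation.

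Finally, the ratio $R'/r' = (c_0/C_0)(R/r)$ depends linearly on $R/r$, so taking $\rho := (C_0/c_0)\,\rho'$ guarantees $R'/r' > \rho'$ whenever $R/r > \rho$; since $c_0, C_0$ are absolute constants, this is exactly the linear dependence asserted by the lemma. One also needs $\rho' > 1$ (hence $\rho > C_0/c_0$) so that $R' > r'$ holds and $A'$ is genuinely an annulus.

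The main (and minor) obstacle is simply the geometric bookkeeping in the preceding paragraph: one must verify that the specific choices of $r'$ and $R'$ \emph{simultaneously} produce the containment $A' \subset \phi(A)$ and the separation of components of $\phi(A)$. This reduces to pairing the upper bilipschitz bound applied at the inner scale $r$ with the lower bilipschitz bound applied at the outer scale $R$, and no deeper tool than Koebe's distortion theorem is required.
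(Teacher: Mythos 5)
The paper gives no inline proof of this lemma (it defers to Section~6.3.3 of \cite{Kemppainen:tb}), and your argument is the expected standard one: use $R\le\tfrac12(1-|z_0|)$ to get the protected ball $B(z_0,2R)\subset\disc$ on which $\phi$ is univalent, deduce uniform Koebe bounds on $B(z_0,R)$, and take a round annulus centred at $\phi(z_0)$ with radii comparable to $|\phi'(z_0)|r$ and $|\phi'(z_0)|R$; the containment $A'\subset\phi(A)$, the separation of the two complementary components of $\phi(A)$, and the linear relation $\rho=(C_0/c_0)\rho'$ all come out as you describe.

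One step is misjustified as written, although the inequality you need is true. The lower bilipschitz bound $|\phi(z)-\phi(z_0)|\ge c\,|\phi'(z_0)|\,|z-z_0|$ does \emph{not} follow by ``integrating radially'' the pointwise bound $|\phi'|\ge c_0|\phi'(z_0)|$: integrating a lower bound on $|\phi'|$ along the segment bounds from below the \emph{length} of the image arc, not the distance between its endpoints, since the image curve may bend back. The correct tool is the Koebe growth theorem (a companion of the distortion theorem): applying it to $w\mapsto\phi(z_0+2Rw)$, univalent on the unit disc, gives
\begin{equation*}
|\phi(z)-\phi(z_0)|\;\ge\; 2R\,|\phi'(z_0)|\,\frac{|w|}{(1+|w|)^2}\;\ge\;\tfrac{4}{9}\,|\phi'(z_0)|\,|z-z_0|,
\qquad |w|=\tfrac{|z-z_0|}{2R}\le\tfrac12 ,
\end{equation*}
on $B(z_0,R)$. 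Since any absolute constant serves in place of your $c_0$ when defining $R'$, the rest of your bookkeeping (image of $\partial B(z_0,R)$ staying outside $B(z_0',R')$, hence $A'\subset\phi(A)$ and the two complementary components lying on opposite sides of $A'$, and $R'/r'$ proportional to $R/r$) goes through unchanged.
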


\begin{lemma}[Distortion of annuli not fully contained in $\disc$]\label{lem: dist ann boundary}
For any $\rho'>1$ sufficiently large, there exists $\rho>1$ 
%\added{%%%
independent of $\Omega$ and $\phi$
%}
such that the following holds.
Suppose that the annulus $A=A(z_0,r,R)$ is such that  $R/r > \rho$,
$1-|z_0|<r$ and $R<1$ (that is, $\partial \disc$ crosses $A$). 
Then there exists an annulus $A'=A(z_0',r',R')$  with $R'/r' > \rho'$
such that there exists a connected component $O$ of $\domain \cap A'$
such that $O \subset \phi(A \cap \disc)$ and $O$ separates 
the components of $\phi((\partial A) \cap \disc)$ in $\domain$.
\end{lemma}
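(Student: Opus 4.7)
The strategy parallels the interior case (Lemma~\ref{lem: dist ann bulk}), but the fact that $z_0$ sits within Euclidean distance $r$ of $\partial \disc$ rules out a direct Koebe distortion of a Euclidean disc centered at $z_0$. The plan is to extract the target Euclidean scales in $\domain$ via a chain of Koebe distortion estimates applied along the radial ray from $\zeta_0 = z_0/|z_0|$, the closest boundary point of $\disc$ to $z_0$, into $\disc$.

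Introduce radial points $v_k = (1 - rC^k)\zeta_0$ for $k = 1, 2, \ldots, n$, where $C > 1$ is a moderate constant and $n \asymp \log(R/r)/\log C$ is chosen so that $rC^n \lesssim R$. Each $v_k$ sits in $A \cap \disc$ at Euclidean distance $rC^k$ from $\partial \disc$. Applying Koebe distortion on the Euclidean ball $B(v_k, rC^k/2) \subset \disc$ shows that on $B(v_k, rC^k/4)$ the map $\phi$ is comparable (up to absolute constants) to the affine map $z \mapsto \phi(v_k) + \phi'(v_k)(z - v_k)$; in particular its image is contained in and contains Euclidean balls about $\phi(v_k)$ of radii comparable to $\sigma_k \asymp |\phi'(v_k)|\, rC^k$. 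Consecutive $v_k, v_{k+1}$ lie at bounded hyperbolic distance in $\disc$, so Koebe also gives $|\phi'(v_{k+1})/\phi'(v_k)| \in [1/K, K]$ for a universal $K$, whence $\sigma_{k+1}/\sigma_k \in [C/K, CK]$. Choosing $C > K$, the scales grow geometrically in $k$ and telescoping yields $\sigma_n/\sigma_1 \gtrsim (R/r)^c$ for some exponent $c = 1 - \log_C K > 0$.

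Take $z_0' = \phi(v_1)$, $r' = C_1 \sigma_1$, $R' = C_2^{-1} \sigma_n$ for suitable absolute constants $C_1, C_2 > 1$. Then $R'/r' \gtrsim (R/r)^c$, which exceeds $\rho'$ as soon as $R/r > \rho = (\text{const}\cdot \rho')^{1/c}$. The arc $\phi(\gamma_r)$ is contained in $B(z_0', r')$ (by Koebe applied at the scale of $v_1$ for the interior of $\gamma_r = \partial B(z_0,r) \cap \disc$, combined with a boundary-behavior estimate for the short tails of $\gamma_r$ that reach $\partial \disc$ near $\zeta_0$), and $\phi(\gamma_R)$ is disjoint from $B(z_0', R')$ (by the chained distance estimate $|\phi(v_1) - \phi(v_n)| \asymp \sigma_n$ together with a Koebe bound at scale $v_n$). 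The connected component $O$ of $\domain \cap A(z_0', r', R')$ that contains the image of any intermediate radial arc $\phi(\partial B(z_0, rC^k) \cap \disc)$ with $1 < k < n$ then lies in $\phi(A \cap \disc)$ and separates $\phi(\gamma_r)$ from $\phi(\gamma_R)$ in $\domain$, since the former is enclosed inside $B(z_0', r')$ and the latter lies outside $B(z_0', R')$.

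The main obstacle is the tail behavior of the arcs $\gamma_r, \gamma_R$ near $\partial \disc$: these tails touch $\partial \disc$ and are outside the reach of a direct Koebe estimate about any $v_k$, so one must invoke a separate argument showing that their images stay inside the appropriate Euclidean balls about $z_0'$. A convenient tool is a Beurling-type harmonic-measure bound applied at each scale $rC^k$ on $\partial \disc$ near $\zeta_0$, giving control of the Euclidean diameter of $\phi(B(\zeta_0, rC^k) \cap \disc)$ in terms of $\sigma_k$ with universal constants. Once this is in place, the topological separation of $O$ in the simply connected $\domain$ follows from the inclusion/exclusion relations in $\C$.
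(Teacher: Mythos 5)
There is a genuine gap here, and in fact the central quantitative claim of your argument is false, not just unproven. Everything rests on the assertion that $\sigma_{k+1}/\sigma_k \in [C/K, CK]$ with a \emph{universal} constant $K$, so that one may ``choose $C>K$'' and force geometric growth of the image scales, ending with $R'/r' \gtrsim (R/r)^c$. But $K$ cannot be universal: the hyperbolic distance between $v_k$ and $v_{k+1}$ is of order $\log C$, and over such a distance $|\phi'|$ may drop by a factor comparable to $1/C$, so $K=K(C)\geq \mathrm{const}\cdot C$ and the choice $C>K$ is circular. Concretely, take $\phi(z)=\log\frac{1+z}{1-z}$, mapping $\disc$ onto the strip $\{|\imag w|<\pi/2\}$, and $\zeta_0=-1$ (with, say, $z_0=-(1-r/2)$, so the hypotheses $1-|z_0|<r<R<1$ hold). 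Then $|\phi'(v_k)|\asymp (rC^k)^{-1}$, hence $\sigma_k=|\phi'(v_k)|\,rC^k\asymp 1$ for every $k$, and indeed $\dist(\phi(v_k),\partial\domain)=\pi/2$ identically: the scales do not grow at all, $\sigma_n/\sigma_1\asymp 1$, and no annulus centered at $\phi(v_1)$ with $r'\asymp\sigma_1$, $R'\asymp\sigma_n$ has large ratio. In this example $\phi(A\cap\disc)$ is essentially a tube of width $\pi$ and length $\log(R/r)$, so the best separating round annulus one can produce has $R'/r'\asymp\log(R/r)$; any proof yielding a power $(R/r)^c$, and hence a polynomial dependence of $\rho$ on $\rho'$, proves too much. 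This is exactly why the boundary lemma, unlike Lemma~\ref{lem: dist ann bulk}, does not assert a linear (or any polynomial) relation between $\rho$ and $\rho'$: in the boundary case only a logarithmic gain is available in general, because the image domain may have long fjords attached at the prime end corresponding to $\zeta_0$, a geometry that Koebe chains along the radius simply cannot detect.

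A secondary but also real problem is the appeal to a ``Beurling-type bound'' to control the \emph{Euclidean diameter} of $\phi(B(\zeta_0,rC^k)\cap\disc)$ by $\sigma_k$: Beurling-type estimates control harmonic measure, not diameter, and the diameter of the image of a small boundary neighborhood can be enormous compared with the local scale (attach a long thin tentacle to $\domain$ at that prime end). Note that the present paper does not prove the lemma itself but defers to Section~6.3.3 of \cite{Kemppainen:tb}; a correct argument has to exploit conformal invariance of the modulus of the family of crosscuts of $A\cap\disc$ separating the two arcs of $(\partial A)\cap\disc$ (this modulus is $\asymp\log(R/r)$), and then extract, inside the image region, a component of a round annulus whose modulus is comparable to that logarithm --- for instance by a pigeonhole/length--area argument over concentric circles --- rather than trying to propagate pointwise Koebe estimates from the interior out to the boundary.
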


\section{Auxiliary results on a priori bounds}\label{sec: appendix a priori}

Denote conformal images of the FK Ising boundary loop exploration tree and the random cluster configuration
under $\phi^{-1}:\domain \to \disc$ by $\todisc{\therndttr}$ and $\todisc{\omega}$, respectively.
The following auxiliary results on the relation of the number of crossings by disjoint segments of tree branches
and the number of disjoint open or dual-open crossings could be written in the original domain or any other
``reference'' domain. The unit disc is nice in the sense that 
the annulus has two sides on the boundary and thus the number $n-1$ appears 
in the formulation of the second lemma instead of a smaller number needed when the boundary makes more crossings
of the annulus.
%\added{%%%
The results are needed in the proof of Theorem~\ref{thm: aizenman--burchard bounds}
above.
%}

\begin{lemma}\label{lem: crossings and arms bulk}
Suppose that $A=A(z_0,r,R) \subset \disc$.
If there are at least $2n$ disjoint segments of the tree $\todisc{\therndttr}$ crossing $A$,
then there are 
%\added{%%%
at least
%}
$n$ dual-open arms in the random cluster configuration $\todisc{\omega}$.
\end{lemma}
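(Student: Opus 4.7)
The plan is to exploit the fact that every branch segment of $\todisc{\therndttr}$ traces out pieces of medial loops in the loop representation of $\todisc{\omega}$, and every such loop is an \emph{interface} separating primal-open edges of $\sqlattice$ from dual-open edges of $\dsqlattice$. Accordingly, $2n$ disjoint tree segments crossing $A$ yield $2n$ disjoint interface arcs crossing $A$, each a simple arc on $\ddmdlattice$ running from the inner boundary $\partial B(z_0,r)$ to the outer boundary $\partial B(z_0,R)$.

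Since $A\subset\disc$ is a genuine annulus whose two boundary circles both lie strictly inside $\disc$, these $2n$ pairwise disjoint radial interface arcs can be ordered cyclically around $z_0$ and cut $A$ into $2n$ open ``wedge'' sub-regions. Within any such sub-region there is no further interface crossing of $A$, so the primal/dual configuration is monochromatic in the following sense: either the sub-region contains a primal-open crossing in $\sqlattice$ between the two bounding interfaces, or a dual-open crossing in $\dsqlattice$. The two interfaces bounding a given wedge have primal edges on exactly one side and dual edges on the other, so adjacent wedges must carry crossings of opposite type. With $2n$ wedges and $2n$ even, the alternation closes up consistently, and exactly $n$ of the wedges produce a dual-open crossing of $A$. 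These $n$ crossings live in pairwise disjoint sub-regions of $A$ and therefore form $n$ pairwise disjoint dual-open arms, as required.

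The main technical point I foresee is checking carefully that each dual-type wedge really does contain a full dual-open crossing of the annulus, rather than only ``half'' of one terminating in the interior. This is essentially a local combinatorial statement about how primal and dual edges are arranged on either side of a loop on the modified medial lattice $\ddmdlattice$, combined with the observation that two interface arcs bounding a single wedge, being disjoint segments of $\todisc{\therndttr}$, enter $A$ at distinct points of $\partial B(z_0,r)$ and exit at distinct points of $\partial B(z_0,R)$, leaving enough room for at least one transversal path on the appropriate sublattice to run from one boundary circle to the other within the wedge.
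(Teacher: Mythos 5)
Your overall skeleton is the same as the paper's: each crossing of $A$ by the tree is a piece of a single loop of the loop representation, such a loop has primal-open edges on one side and dual-open edges on the other, and one then argues that every second ``gap'' between consecutive crossings of $A$ contains a dual-open crossing, giving $n$ disjoint dual arms. However, one step as you have written it would fail: the claim that ``within any such sub-region there is no further interface crossing of $A$'', and hence that each wedge is monochromatic. The hypothesis only provides \emph{at least} $2n$ disjoint tree segments; nothing prevents additional crossings of $A$ inside a wedge, either by other parts of $\todisc{\therndttr}$ or -- more importantly -- by loops that do not touch the boundary at all and are therefore not traced by the exploration tree. So a wedge bounded by a primal-facing side and a dual-facing side is perfectly possible (with a further interface running inside it), the strict alternation of wedge types is not forced, and the ``exactly $n$'' count does not follow as argued. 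The duality dichotomy you invoke (either a primal or a dual crossing of the annulus inside the wedge) is also not the correct quad-crossing duality for this geometry.

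The repair is essentially what the paper does, and it also disposes of the ``main technical point'' you flag at the end. Since branching occurs only on the boundary and $A\subset\disc$, each crossing follows a single oriented loop, so its left-hand side is \emph{entirely} open and its right-hand side \emph{entirely} dual-open; in particular the dual-open edges hugging a crossing already constitute a dual-open crossing of $A$, lying in the closed wedge on that side of the segment -- there is no need for a separate transversal path ``with enough room'' inside the wedge. Now count: each of the $2n$ segments contributes exactly one dual-facing side, each wedge is bounded by only two segment-sides, so at least $n$ wedges receive a dual-facing side and hence contain a dual-open crossing of $A$; crossings lying in wedges with disjoint interiors are disjoint (dual vertices do not lie on the medial segments forming the common boundaries). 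Alternatively, one can run your alternation argument verbatim, but applied to \emph{all} interface crossings of $A$ (every interface edge belongs to some loop of the representation, boundary-touching or not), for which the ``no further interface in between'' property holds by construction; this is the reading under which the paper's one-line topological claim is correct.
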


\begin{proof}
%\added{%%%
Any crossing of $A$ by the tree $\todisc{\therndttr}$, when $A$ is contained in the domain,
is a subcurve of a single loop.
Thus its left-hand side is entirely open and the right-hand side entirely dual-open.
Due to topological reasons, any pair of adjacent minimal crossings (by the tree) 
have common type of boundary in the region between them.
Thus in every second pair of adjacent minimal crossings
have to have a dual-open crossing between them.
%}
\end{proof}

\begin{lemma}\label{lem: crossings and arms boundary}
Suppose that $A=A(z_0,r,R)$ such that $B(z_0,r) \cap \partial\disc \neq \emptyset$.
If there are at least $2n$ disjoint segments of the tree $\todisc{\therndttr}$ crossing $A$,
then there are 
%\added{%%%
at least
%}
$n-1$ dual-open arms in the random cluster configuration $\todisc{\omega}$.
\end{lemma}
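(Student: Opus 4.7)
The plan is to adapt the proof of Lemma~\ref{lem: crossings and arms bulk} to the boundary setting, where the ``lost'' dual-open arm reflects the fact that the $2n$ crossings are now arranged linearly rather than cyclically. First I would observe that, just as in the bulk case, each of the $2n$ disjoint tree segments crossing $A$ follows a single loop of the loop representation, and hence carries open (primal) edges on one of its sides and dual-open edges on the other; its ``dual side'' is determined by the clockwise orientation of the underlying loop. This gives each crossing a well-defined primal side and dual side.

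Next I would exploit the topological structure of $A \cap \disc$. Because $B(z_0,r)$ meets $\partial \disc$, the portion of the annulus that lies inside $\disc$ is no longer cyclically wrapped around $B(z_0,r)$: the relevant connected component of $A\cap\disc$ is topologically a disk whose boundary is made of arcs on $\partial B(z_0,r)$, $\partial B(z_0,R)$ and $\partial\disc$. The $2n$ disjoint crossings can therefore be ordered linearly, $\gamma_1,\ldots,\gamma_{2n}$, as one traverses the inner boundary $\partial B(z_0,r)\cap\disc$. The $2n-1$ ``middle'' regions $R_i$ between consecutive crossings $\gamma_i$ and $\gamma_{i+1}$ then alternate between \emph{primal--primal} (both bounding crossings face their primal sides into $R_i$) and \emph{dual--dual} (both face their dual sides into $R_i$), by exactly the same orientation argument as in the bulk: since each $\gamma_i$ has a fixed primal side in the plane, it exposes its dual side to exactly one of $R_{i-1}$ and $R_i$. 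Hence at least $\lfloor(2n-1)/2\rfloor=n-1$ of the $R_i$ are dual--dual.

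Finally, in each dual--dual region $R_i$ the two bounding tree segments expose dual-open edges on the sides facing into $R_i$, so by planar duality in the loop representation one can extract a dual-open path inside $R_i$ that connects a neighbourhood of $\partial B(z_0,r)$ to $\partial B(z_0,R)$; this is a dual-open arm crossing $A$. These $n-1$ dual-open arms are disjoint because they lie in distinct middle regions. The main obstacle will be the topological bookkeeping when $A\cap\disc$ is geometrically irregular (e.g., has more than one component, or the boundary arcs of $\partial\disc$ cut several times through $A$); in such situations I would restrict attention to one connected component of $A\cap\disc$ containing a majority of the crossings and absorb the small loss in the count into the ``$-1$'' of $n-1$, which is precisely what makes the boundary version weaker than the bulk one.
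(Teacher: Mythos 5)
There is a genuine gap, and it sits exactly at the point where the boundary case differs from the bulk case. Your first step asserts that each of the $2n$ crossing segments ``follows a single loop of the loop representation'' and therefore has one entirely open side and one entirely dual-open side. This is true in Lemma~\ref{lem: crossings and arms bulk} precisely because there the annulus is disjoint from $\partial\disc$, so the tree branches cannot meet the domain boundary inside $A$. In the present lemma the annular region has sides on $\partial\disc$, and a branch of the exploration tree can contain branching (jump) points on $\partial_1\Omega$ inside $A$: at such a point the branch switches from one boundary-touching loop to the neighbouring one, and the primal edge through that point is \emph{open} (see Figure~\ref{fig: branching and open paths}). A crossing containing such a point does not have an entirely dual-open right-hand side, so it cannot be used to produce a dual-open arm, and your alternating primal/dual bookkeeping breaks down for it. This phenomenon, not the linear (as opposed to cyclic) ordering of the crossings, is the reason the boundary statement only yields $n-1$ arms. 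The paper's proof isolates it explicitly: it shows by a topological argument that at most $k\le 2$ minimal crossings can touch the free boundary with their dual-open side and carry a branching point, and then runs the bulk pairing argument on the remaining $m-k\ge 2n-2$ crossings, which do follow single loops, to obtain $\lfloor (m-k)/2\rfloor\ge n-1$ disjoint dual-open crossings.

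Two further points. First, your fallback for a disconnected $A\cap\disc$ --- ``restrict to one component containing a majority of the crossings and absorb the loss into the $-1$'' --- is not a valid estimate: if the crossings split between components you could lose of order $n$ crossings, not a bounded number, so the deficit cannot be hidden in the additive $-1$ (in the relevant geometry the region is in fact an annular sector and this issue does not arise, but it cannot be waved away in the manner proposed). Second, even granting single-loop crossings, the alternation of the ``gap'' types between consecutive crossings is not automatic from each crossing having one primal and one dual side; it is the topological step already invoked in the bulk lemma (facing sides of adjacent minimal crossings must be of the same type), and your write-up silently replaces that step with a different and unjustified parity argument. To repair the proof you would need to (i) identify the crossings with branching points on the free boundary as the exceptional ones, (ii) bound their number by $2$ by the topological argument indicated in Figure~\ref{fig: branching and open paths}, and (iii) apply the bulk pairing to the remaining crossings --- which is exactly the paper's route.
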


\begin{proof}
%\added{%%%
Let $k$ be the number of minimal crossings of $A$
%} 
which touch the boundary
so that their dual-open right-hand side touches free boundary (i.e. the vertex sets of the dual-open path
and the dual boundary intersect) and that along each them there is at least one branching point.
Then due to topological reasons $k$ is at most $2$.
%\added{%%%
Let $m$ be the total number crossings.
Then $m-k$ crossings are
disjoint from the boundary or touch the boundary by their open left-hand sides.
%}
%%%
In particular, for these types of crossings, the crossings consists only of a part following a single loop.
Thus the entire right-hand side is dual-open.
Similarly as in the previous proof, we see that there are at least
$\lfloor(m-k)/2\rfloor$ dual-open crossings of $A$.
\end{proof}

\section{Auxiliary results on Loewner evolutions}\label{sec: appendix loewner}

%\added{%%%
The results of this section are needed in Section~\ref{ssec: char of pair U V}
above.
%}

For a Loewner chain $(K_t)_{t \in [0,T]}$ in $\half$ driven by $(U_t)_{t \in [0,T]}$,
let $(V_t)_{t \in [0,T]}$ be the function defined by
\begin{equation*}
V_t = g_t( \sup( K_t \cap \R ) ) .
\end{equation*}
The point $V_t$ is thus the image of the rightmost point on the hull under the conformal map $g_t$.
The following lemma can be extracted from the proof of Proposition~3.12 in \cite{Sheffield:2009}.

\begin{lemma}\label{lm: leb u equals v}
For any $(U_t)_{t \in [0,T]}$ and $(V_t)_{t \in [0,T]}$ as above,
it holds that $\int_0^T \ind_{V_t=U_t} \de t = 0$. 
\end{lemma}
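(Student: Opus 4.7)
\emph{Plan.} The plan is to show that $S = \{t \in [0,T] : V_t = U_t\}$ has Lebesgue measure zero by exploiting the boundary form of the Loewner equation on the open complement $O = [0,T]\setminus S$ together with monotonicity and integrability arguments. The first task is to verify that $t \mapsto V_t$ is continuous and non-decreasing. Monotonicity follows from the composition structure of the Loewner flow combined with the non-decrease of $R_t = \sup(K_t \cap \R)$: for $s > t$,
\begin{equation*}
V_s = g_s(R_s) \ge g_s(R_t) \ge g_t(R_t) = V_t,
\end{equation*}
using that the boundary Loewner flow is monotone on $\R$. Continuity at jump times of $R_t$ (when the curve returns to $\R$ and traps a pocket) requires a separate argument: the trapped boundary arc lies on $\partial K_{t^+}$ and its conformal image under $g_{t^+}$ degenerates to a single real value, which by continuity of $t \mapsto g_t$ on compact subsets of $\overline\half \setminus K_t$ equals $V_{t^-}$.

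Next, on the open set $O$ the point $R_t$ must be locally constant, since any strict increase of $R_t$ is a pocket-trapping event that forces $V_t = U_t$, placing $t \in S$ rather than $O$. Hence on $O$ the boundary Loewner ODE gives
\begin{equation*}
\frac{\de V_t}{\de t} = \frac{2}{V_t - U_t},
\end{equation*}
and on each connected component $(a,b) \subset O$ the continuity of $V$ forces $V_a = U_a$ and $V_b = U_b$. Integrating on each component and summing, while using the monotonicity of $V$,
\begin{equation*}
\int_O \frac{2\,\de t}{V_t - U_t} = \sum_i (V_{b_i} - V_{a_i}) \le V_T - V_0 < \infty.
\end{equation*}

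Finally, for any $t_0 \in S$ and any $\eps > 0$, continuity of $V$ and $U$ yields a neighborhood $I = (t_0 - h, t_0 + h)$ on which $V - U \le \eps$ throughout, and on each component $(a_i,b_i)$ of $O$ inside $I$ the ODE yields $b_i - a_i \le \tfrac{\eps}{2}(V_{b_i}-V_{a_i})$; summing and using monotonicity of $V$,
\begin{equation*}
|O \cap I| \le \frac{\eps}{2}(V_{t_0+h} - V_{t_0-h}),
\end{equation*}
so $O$ has small density near every point of $S$. A covering argument applied to a Lebesgue density point of $S$ then forces the contradiction $|S|=0$.

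\textbf{Main obstacle.} The quantitative tradeoff in this final step is the principal difficulty: one needs $\eps(V_{t_0+h}-V_{t_0-h})/h \to 0$, which depends on the joint modulus of continuity of $V$ and $U$ at points of $S$. For generic continuous $U$, a naive Hölder estimate does not suffice, and the resolution is to use the almost-everywhere differentiability of the monotone function $V$: at a.e. density point $t_0 \in S$ the derivative $V'(t_0)$ exists and matches a density derivative of $U$ along $S$, and the ODE on $O$ then forces the excursions of $V-U$ above $0$ to have total length negligible compared to $|I|$ at the density scale, yielding the required contradiction.
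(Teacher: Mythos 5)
Your steps (1)--(3) are essentially sound: $V$ is continuous and non-decreasing, $R_t=\sup(K_t\cap\R)$ is locally constant on $O=\{t: V_t>U_t\}$, the ODE $\dot V_t=2/(V_t-U_t)$ holds there, and summing over components gives $\int_O \frac{\de t}{V_t-U_t}<\infty$ (this last fact is the Remark after Proposition~\ref{prop: rightmost point and increasing process}, proved there without the lemma). But the final step, which is where the entire content of the lemma lies, does not work, and you half-admit this yourself. Your estimate $|O\cap I|\leq \frac{\eps}{2}\bigl(V_{t_0+h}-V_{t_0-h}\bigr)$ bounds the measure of the \emph{complement} of $S=\{V_t=U_t\}$ near a point of $S$; it cannot contradict $|S|>0$. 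Indeed, at a Lebesgue density point of $S$ the set $O$ automatically has vanishing density, so "$O$ is small near $S$" is exactly what you would expect if $S$ had positive measure, not a contradiction. Nothing in your argument produces an upper bound on $|S\cap I|$: the ODE only constrains $V$ on $O$, and it is perfectly consistent with your estimates that $V$ is constant on a fat set $S$. The appeal to a.e.\ differentiability of the monotone function $V$ and a "density derivative of $U$ along $S$" is not a proof and does not supply the missing inequality.

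The missing idea is to apply the Loewner ODE not to $V$ itself but to a point held a fixed small distance to the right of $R_t$, exactly as in the proof of Proposition~\ref{prop: rightmost point and increasing process}: set $J^\eps_t=\min\{k\eps> R_t : k\in\Z\}$ and $\tilde V^\eps_t=g_t(J^\eps_t)$, so that $V_t\leq \tilde V^\eps_t\leq V_t+\eps$ and $\tilde V^\eps$ is non-decreasing with only upward jumps (at the countably many times $J^\eps$ changes). Between jumps $\partial_t \tilde V^\eps_t = 2/(\tilde V^\eps_t-U_t)\geq 2/(V_t-U_t+\eps)$, which is $\geq 2/\eps$ at every $t\in S$; integrating over $[0,T]$ gives $\frac{2}{\eps}\,\mathrm{Leb}(S)\leq \tilde V^\eps_T-\tilde V^\eps_0\leq V_T-V_0+\eps$, and letting $\eps\to 0$ yields $\mathrm{Leb}(S)=0$. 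This is the lower bound on forced growth at $S$-times that your argument never uses. Note also that the paper itself does not prove the lemma directly but cites the proof of Proposition~3.12 of \cite{Sheffield:2009}; a self-contained proof along the discretization just described is the natural companion to the paper's Appendix~\ref{sec: appendix loewner}, whereas your route, even with steps (1)--(3) repaired to full rigor, stops short of the statement.
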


\begin{proposition}\label{prop: rightmost point and increasing process}
Suppose that $\int_0^T \frac{\de t}{V_t - U_t} < \infty$.
There exists a non-decreasing function $(\Lambda_t)_{t \in [0,T]}$ with $\Lambda_0=0$ such that
\begin{equation}\label{eq: prop rightmost point and increasing process}
V_t = V_0 + \int_0^t \frac{2 \de s}{V_s - U_s} + \Lambda_t .
\end{equation}
\end{proposition}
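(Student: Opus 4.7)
The plan is to derive the decomposition~(7) by viewing $V_t = g_t(r_t^+)$ as a composition of two processes — the conformal flow $g_t$ and the non-decreasing right-tip process $r_t := \sup(K_t \cap \R)$. Heuristically, if everything were smooth one would write
\begin{equation*}
\de V_t = \partial_t g_t(r_t)\,\de t + g_t'(r_t^+)\,\de r_t = \frac{2}{V_t - U_t}\,\de t + g_t'(r_t^+)\,\de r_t,
\end{equation*}
where the first term is the Loewner ODE evaluated at the right prime end, and the second term is non-negative because $g_t'(r_t^+) \geq 0$ and $r_t$ is non-decreasing. Identifying $\Lambda_t = \int_0^t g_u'(r_u^+)\,\de r_u$ would then yield the stated decomposition.

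To make this rigorous without smoothness assumptions on $r_t$, I first exploit that for every $\epsilon > 0$ the point $y_\epsilon := r_t + \epsilon$ lies strictly to the right of $r_u$ for every $u \in [0,t]$ (since $r_u$ is non-decreasing in $u$), so $y_\epsilon \notin K_u$ and the Loewner ODE $\partial_u g_u(y_\epsilon) = 2/(g_u(y_\epsilon) - U_u)$ is valid throughout $[0,t]$. Integrating and letting $\epsilon \to 0^+$, with monotone convergence justified by the standing assumption $\int_0^T \de u/(V_u - U_u) < \infty$ together with the pointwise bound $g_u(r_t^+) \geq V_u$, produces the identity
\begin{equation*}
V_t = g_s(r_t^+) + \int_s^t \frac{2}{g_u(r_t^+) - U_u}\,\de u, \qquad 0 \leq s \leq t \leq T.
\end{equation*}

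The next step, which is to upgrade this identity into the lower bound $V_t - V_s \geq \int_s^t 2/(V_u - U_u)\,\de u$, is the heart of the argument. For this I would use the semigroup property $g_t = g_{s,t} \circ g_s$ and set $\bar V_u := g_{s,u}(V_s)$, extending it past any sub-flow swallowing time by its right prime-end image. The monotonicity of $g_{s,u}$ combined with $g_s(r_u^+) \geq V_s$ forces $V_u = g_{s,u}(g_s(r_u^+)) \geq \bar V_u$, while the Loewner ODE for the sub-flow reads $\bar V_u = V_s + \int_s^u 2/(\bar V_{u'} - U_{u'})\,\de u'$ in the unswallowed regime; substituting $\bar V_{u'} \leq V_{u'}$ inside the integrand reverses the integrand inequality and chains to $V_u \geq \bar V_u \geq V_s + \int_s^u 2/(V_{u'} - U_{u'})\,\de u'$. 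Defining $\Lambda_t := V_t - V_0 - \int_0^t 2/(V_u - U_u)\,\de u$ then gives a non-decreasing process with $\Lambda_0 = 0$.

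The main obstacle I anticipate is controlling the sub-flow comparison across swallowing events, since Loewner chains can undergo infinitely many such events on $[0,T]$ (an accumulation of boundary-touching times) or even support singular-continuous growth of $r_t$. To close the argument one must either iterate across successive swallowing times, restarting the comparison $V_u \geq \bar V_u$ at each using the right-continuity of $V$ and $r$, or read off $\Lambda_t$ directly as the Lebesgue--Stieltjes integral $\int_0^t g_u'(r_u^+)\,\de r_u$ against the non-decreasing right-tip process. In either case, the integrability hypothesis $\int_0^T \de u/(V_u - U_u) < \infty$ is precisely what rules out pathological accumulations and guarantees validity of the limiting procedures above.
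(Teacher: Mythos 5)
Your opening steps are fine: flowing the fixed point $r_t+\eps$, $r_t=\sup(K_t\cap\R)$, and letting $\eps\to 0^+$ does give the identity $V_t = g_s(r_t^+)+\int_s^t \frac{2\,\de u}{g_u(r_t^+)-U_u}$, and the sub-flow comparison $\bar V_u=g_{s,u}(V_s)$ does yield $V_t-V_s\geq\int_s^t\frac{2\,\de u}{V_u-U_u}$ on any interval on which the flowed point $V_s$ is not swallowed. The genuine gap is the step you yourself flag and then wave away: crossing the swallowing events. A restart at a swallowing time $\tau$ only helps if the comparison survives for a positive time beyond $\tau$; but the restarts land exactly on times with $V_\tau=U_\tau$ (when the old rightmost prime end is absorbed, the new right prime end maps to the driving point), and at such times the restarted point starts at $U_\tau$ and can be swallowed instantly, so no right-neighbourhood is gained and a continuous-induction/iteration argument stalls precisely where $\Lambda$ is allowed to grow. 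Your claim that the hypothesis $\int_0^T\frac{\de t}{V_t-U_t}<\infty$ ``rules out pathological accumulations'' is false: in the very application of this proposition (Section~\ref{ssec: char of pair U V}) $V_t-U_t$ is a Bessel process of dimension $3/2$, so the integral is finite, yet the contact set $\{t:U_t=V_t\}$ is an uncountable Cantor-type set of Hausdorff dimension $1/4$, with no isolated points to iterate through. Your fallback $\Lambda_t=\int_0^t g_u'(r_u^+)\,\de r_u$ is not a proof either: the boundary derivative $g_u'(r_u^+)$ need not exist and degenerates exactly at the times charged by $\de r_u$, so this Stieltjes integral is not well defined without substantial extra work.

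The paper sidesteps all of this by discretizing in space rather than in time. It sets $J^\eps_t$ to be the smallest point of $\eps\Z$ strictly to the right of $\sup(K_t\cap\R)$ and follows $\tilde V^\eps_t=g_t(J^\eps_t)$. This process satisfies the Loewner ODE exactly except at the finitely many times when a grid point is swallowed, where it jumps upward by at most $\eps$; hence
\begin{equation*}
\tilde V^\eps_t=\tilde V^\eps_0+\int_0^t\frac{2\,\de s}{\tilde V^\eps_s-U_s}+\sum_{s\leq t}\xi^\eps_s,\qquad 0\leq\xi^\eps_s\leq\eps,
\end{equation*}
and since $V_t\leq\tilde V^\eps_t\leq V_t+\eps$, Lemma~\ref{lm: leb u equals v} together with dominated convergence gives convergence of the integral term, while the finite jump sums converge uniformly to a continuous non-decreasing $\Lambda_t$, which is \eqref{eq: prop rightmost point and increasing process}. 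Note that the finiteness of the number of restarts per unit of $\eps$-scale is automatic here (at most $(r_T-r_0)/\eps+1$ grid points are ever swallowed), which is exactly the control your time-iteration lacks. If you want to keep your framework, replace the restart-at-swallowing-times iteration by this spatial discretization; the rest of your argument is then unnecessary.
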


\begin{proof}
Let $\eps>0$,
$J_t^\eps = \min \{ k \eps > \sup( K_t \cap \R ) \,:\, k \in \Z \}$ and $\tilde V_t^\eps = g_t(J_t^\eps)$.
Then it follows from monotonicity of $g_t$ that $V_t \leq V_t^\eps \leq V_t + \eps$.
By the Loewner equation we can write
\begin{equation}\label{eq: prop rightmost point and increasing process auxiliary process}
\tilde V_t^\eps = \tilde V_0^\eps + \int_0^t \frac{2 \de s}{\tilde V_s^\eps - U_s} + \sum_{s \leq t} \xi^\eps_s
\end{equation}
where $\xi^\eps_s \in [0,\eps]$ and $\xi^\eps_s \neq 0$ only when $\tilde V_u^\eps - U_u$ hits zero as $u \nearrow s$.
Thus the sum on the right-hand side is a finite sum. By Lemma~\ref{lm: leb u equals v}  
and Lebesgue's dominated convergence theorem
the middle term in \eqref{eq: prop rightmost point and increasing process auxiliary process}
converges to the integral in \eqref{eq: prop rightmost point and increasing process}.
Since also $\tilde V_t^\eps$ converges uniformly to $V_t$, it holds that
$\sum_{s \leq t} \xi^\eps_s$ converges uniformly to some continuous function $\Lambda_t$ as $\eps \to 0$.
It follows that $\Lambda_t$ is non-decreasing and $\Lambda_0=0$. The claim follows.
\end{proof}

\begin{remark}
Notice that in fact, $\int_0^T \frac{\de t}{V_t - U_t} < \infty$ always. Namely,
\begin{align*}
\int_0^T \frac{\de t}{V_t - U_t + \eps} &\leq \int_0^T \frac{\de t}{\tilde V_t^\eps - U_t} \leq
\frac{1}{2} \left(\tilde V_T^\eps - \tilde V_0^\eps  \right) 
 \leq \frac{1}{2} \left( V_T -  V_0 +\eps  \right) .
\end{align*}
Thus $\int_0^T \frac{\de t}{V_t - U_t} < \infty$ follows from Lebesgue's monotone convergence theorem.
\end{remark}

\nocite{*}

\bibliographystyle{habbrv}
\bibliography{bndry_touching_loops.bib}

\end{document}